\documentclass{article}
\usepackage[utf8]{inputenc}
\usepackage{jheppub}
\usepackage{enumitem}
\usepackage{amsmath,amssymb,amsthm,amscd,mathrsfs,bm}
\usepackage{mathtools}
\usepackage{standalone}

\usepackage{xcolor}
\usepackage{url}
\usepackage{listings}
\usepackage{tabularx}
\usepackage{diagbox}
\usepackage{tikz-cd}
\usepackage{nicematrix}
\usepackage{pifont}
\usepackage{subfig}
\usepackage{booktabs}
\usepackage{float}

\newcommand{\ee}{\text{e}}

\usepackage{tikz,textcomp}
\usetikzlibrary{decorations.pathreplacing,calc,fadings,fit,shapes,arrows,positioning,chains,matrix}

\newtheorem{theorem}{Theorem}[section]

\newtheorem{definition}[theorem]{Definition}

\DeclareMathOperator{\disc}{disc}

\newcommand{\IZ}{\mathbb{Z}}
\newcommand{\IC}{\mathbb{C}}
\newcommand{\IP}{\mathbb{P}}
\newcommand{\IN}{\mathbb{N}}
\newcommand{\IR}{\mathbb{R}}
\newcommand{\IQ}{\mathbb{Q}}

\newcommand{\IF}{\mathbb{F}}

\newcommand{\id}{\mathrm{id}}
\newcommand{\cA}{{\cal A}}
\newcommand{\cO}{{\cal O}}
\newcommand{\cD}{{\cal D}}
\newcommand{\cT}{{\cal T}}
\newcommand{\cE}{{\cal E}}
\newcommand{\cM}{{\cal M}}
\newcommand{\cS}{{\cal S}}

\newcommand{\cF}{{\cal F}}
\newcommand{\cZ}{{\cal Z}}

\newcommand{\cL}{{\cal L}}
\newcommand{\cN}{{\cal N}}
\newcommand{\cC}{{\cal C}}

\newcommand{\cH}{{\cal H}}

\newcommand{\sB}{\mathscr{B}}
\newcommand{\sL}{\mathscr{L}}
\newcommand{\sS}{\mathscr{S}}
\newcommand{\sR}{\mathscr{R}}
\newcommand{\sA}{\mathscr{A}}

\newcommand{\n}{\mathfrak{n}}

\newcommand{\e}{\mathrm e}

\newcommand{\ii}{\mathrm{i}}
\newcommand{\dd}{\mathrm{d}}
\newcommand{\omegag}{{\omega_{\gamma}}}

\newcommand{\re}{\mathrm{Re}}

\newcommand{\LR}{\mathrm{LR}}

\newcommand{\FgredN}{\cF_g^{\mathrm{red},\,N}}
\newcommand{\Fgredn}[1]{\cF_g^{\mathrm{red,{#1}}}}
\newcommand{\Fgred}{\cF_g^{\mathrm{red}}}

\newcommand{\cht}{\mathrm{ch}_2}
\newcommand{\KPt}{K_{\IP^2}}

\newcommand{\CYX}{X}
\newcommand{\mirrorX}{\CYX^{\circ}}

\newcommand{\tC}{\widetilde{\IC}}
\newcommand{\tD}{\widetilde{D}}
\newcommand{\var}{\mathrm{var}}
\newcommand{\singphi}{\overset{\triangledown}{\varphi}}
\newcommand{\stokesAut}{\mathfrak{S}}
\newcommand{\hatphi}{\hat{\varphi}}
\newcommand{\hphi}{\hatphi}
\newcommand{\phimajor}{\check{\varphi}}

\newcommand{\phiOR}{\varphi}
\newcommand{\hatpsi}{\hat{\psi}}
\newcommand{\singpsi}{\overset{\triangledown}{\psi}}

\newcommand{\tildephi}{\tilde{\varphi}}

\newcommand{\tildepsi}{\tilde{\psi}}
\newcommand{\fana}{\check{f}}

\newcommand{\anyLog}{\cL og}

\newcommand{\totalt}{D_{u,t}}
\newcommand{\cZpert}{\cZ^{(0)}}
\newcommand{\cFpert}{\cF^{(0)}}
\newcommand{\cFmod}{\cF^{(0)}_{\mathrm{mod}}}
\newcommand{\cZmod}{\cZ^{(0)}_{\mathrm{mod}}}
\newcommand{\deltamod}{\delta_{\mathrm{mod}}}
\newcommand{\cZmodFrame}{\cZ^{(0)}_{\mathrm{mod},\omegag}}
\newcommand{\Litwo}{\mathrm{Li}_2}
\newcommand{\cont}{\mathrm{cont}}
\newcommand{\hatF}{\hat{\cF}}
\newcommand{\hatFpert}{\hatF^{(0)}}

\newcommand{\ray}{d}
\newcommand{\ana}{\mathrm{ANA}}
\newcommand{\SING}{\mathrm{SING}}
\newcommand{\simp}{\mathrm{simp}}
\newcommand{\SINGsimp}{\SING^{\simp}}
\newcommand{\SINGsimpRam}{\SING^{\sr}}
\newcommand{\holgerms}{\IC\{\zeta\}}
\newcommand{\sing}{\mathrm{sing}}
\newcommand{\sRhat}{\widehat{\sR}}
\newcommand{\sRtilde}{\widetilde{\sR}}
\newcommand{\Esing}{\overset{\triangledown}E}

\newcommand{\EsingOmegaN}[2]{\overset{\triangledown}{E_{#1}^{#2}}}
\newcommand{\dotDelta}{\dot{\Delta}}
\newcommand{\pd}{\dotDelta}

\newcommand{\Deltaplus}{\Delta^{+}}
\newcommand{\dotDeltagammaN}[1]{\dotDelta_{\omega_{\gamma_{#1}}}}
\newcommand{\dotDeltagammaKN}[2]{\dotDelta_{#2\omega_{\gamma_{#1}}}}
\newcommand{\dotDeltaPlusR}{\dotDelta^+_r}
\newcommand{\dotDeltaR}{\dotDelta_r}
\newcommand{\DeltagammaN}[1]{\Delta_{\omega_{\gamma_{#1}}}}
\newcommand{\pade}{\text{Padé}}

\newcommand{\res}{\mathrm{Res}}

\newcommand{\sr}{\mathrm{s.ram}}
\newcommand{\omegaResurgentSimplyRamified}{\bigoplus_{n\in \IN} \IC \delta^{(n)} \oplus \sRhat^{\sr}_\Omega}

\newcommand{\partk}{\mathbf{k}}

\newcommand{\chargeV}{\gamma}
\newcommand{\tchargeV}{\tilde{\gamma}}
\newcommand{\omegaG}{\omega_{\chargeV}}
\newcommand{\omegatG}{\omega_{\tchargeV}}

\newcommand{\twistedDG}{D_{\omegaG}}
\newcommand{\twistedDtG}{D_{\omegatG}}
\newcommand{\alienDiffOp}{\cD}
\newcommand{\alienDiffOpRay}{\alienDiffOp_{\theta_{\chargeV}}}

\newcommand{\SG}{S_{\chargeV}}
\newcommand{\SoG}{S_{\omegaG}}

\newcommand{\Aleph}{\alpha}

\newcommand{\tc}{\tilde{c}}
\newcommand{\td}{\tilde{d}}

\newcommand{\arrowProduct}{\mathop{\prod^{\curvearrowright}}}
\newcommand{\parrowProduct}{\mathop{\prod^{\curvearrowleft}}}

\newcommand{\mirrorQ}{X^\circ_5}
\newcommand{\quintic}{X_5}

\title{\centering The non-perturbative topological string: from resurgence to wall-crossing of DT invariants}

\abstract{We study the resurgence structure of the topological string partition function, with an emphasis on the Borel analysis of the instanton amplitudes. To this end, we introduce a differential operator that implements the pointed alien derivative when acting on the topological string partition function and its iterated alien derivatives. We show that the algebra of alien derivatives is isomorphic to the Kontsevich-Soibelman Lie algebra, thus establishing a direct link between the resurgence of the topological string and wall-crossing of generalized Donaldson-Thomas invariants. Numerically, we continue the exploration of the Borel plane of the quintic and local $\IP^2$. For the latter, we identify Borel singularities due to bound states involving D4-branes, and match the associated Stokes constants to the appropriate Donaldson-Thomas invariants. Finally, we identify the manifestation of a D2-brane decay in the Borel plane, and match to theoretical predictions.}

\author[a]{Simon Douaud}
\author[a]{Amir-Kian Kashani-Poor}

\affiliation[a]{Laboratoire de Physique de l’\'Ecole normale sup\'erieure,\\
CNRS, PSL Research University and Sorbonne Universit\'es,\\
24 rue Lhomond, 75005 Paris, France}

\emailAdd{simon.douaud@ens.fr}
\emailAdd{amir-kian.kashani-poor@ens.fr}

\DeclareFontFamily{U}{wncy}{}
\DeclareFontShape{U}{wncy}{m}{n}{<->wncyr10}{}
\DeclareSymbolFont{mcy}{U}{wncy}{m}{n}
\DeclareMathSymbol{\Sh}{\mathord}{mcy}{"58}

\begin{document}
\maketitle

\section{Introduction}
Topological string theory assigns to a Calabi-Yau threefold $X$ an infinite sequence of amplitudes $F_g$, one for each worldsheet genus $g$. Over the years, many reasons beyond the analogy with physical string theory have emerged for combining these into one object, the topological string amplitude
\begin{align} \label{eq:introF}
    F = \sum_{g=0}^\infty F_g g_s^{2g-2} \,. 
\end{align}
To name but a few, this combination arises in the graviphoton-curvature coupling of supergravity \cite{Antoniadis:1993ze}, can be related via geometric transitions \cite{Gopakumar:1998ki} to Chern-Simons theory \cite{Witten:1992fb}, leading to the topological vertex formalism \cite{Aganagic_2004} which solves the problem of the computation of $F$ when $X$ is toric, and a reorganization of the exponential of this sum leads to the generating function of Donaldson-Thomas invariants \cite{Maulik:2003rzb,Maulik:2004txy}.

The sum \eqref{eq:introF} does not converge. Clear indications of this are the factorial growth of the constant map contributions, which are proportional to the Euler characteristics of the moduli spaces $\cM_g$ of genus $g$ Riemann surfaces \cite{Faber:1998gsw, Bershadsky:1993cx}, $\chi(\cM_g) \sim (-1)^g (2g)!$, or the factorial growth of the leading behavior near a conifold point \cite{Ghoshal:1995wm}. Happily, strong evidence has been accumulating \cite{Huang:2007sb,Couso-Santamaria:2013kmu,Couso-Santamaria:2014iia,Couso-Santamaria:2016vcc,Gu:2022sqc,Gu:2023mgf,Douaud:2024khu} that the divergence of \eqref{eq:introF} is indeed factorial, and no worse. Hence, the formidable tools of resurgence can be brought to bear on the study of the topological string amplitude.

To render the rest of the introduction accessible to readers not familiar with resurgence, we present a bullet point summary of the principal tools that will enter the discussion:
\begin{itemize}
    \item {\it Borel transform:} A factorially divergent formal series $\tildephi(z)$ is mapped to a convergent series $\hatphi(\zeta)$ by dividing the $n^{\mathrm{th}}$ coefficient by $n!$. We call the complex plane parametrized by $\zeta$ the {\it Borel plane} of $\hatphi(\zeta)$.
    \item {\it Laplace transform:} $\hatphi(\zeta)$ under favorable circumstances can be mapped, via an integral transform, to a function which has $\tildephi(z)$ as its asymptotic expansion. This function is called its {\it Borel-Laplace summation}. We will denote the  operator mapping $\tildephi(z)$ to its Borel-Laplace summation by $\sS$.
    \item {\it Alien derivatives:} $\hatphi$ will generically exhibit singularities in the Borel plane. These are extracted via the alien derivative. These singularities often have standard forms up to coefficients called {\it Stokes constants}.
    \item {\it Stokes automorphism:} The Borel-Laplace summations of $\tildephi(z)$ along two rays separated by singularities in the Borel plane will differ by what we shall call {\it instanton amplitudes} (the precise definition is in the text). As both must yield the asymptotic expansion of $\tildephi(z)$, their difference must be invisible to the asymptotic expansion; it is typically of the form $\e^{-c/z}$. The map from (the trans-series expansion of) one Borel-Laplace transform to another is called a Stokes automorphism. We will denote it by the symbol $\stokesAut$.
    \item {\it Active rays:} A ray in the Borel plane emanating from the origin and passing through a singularity is called active.
\end{itemize}

In this paper, we continue our study of the non-perturbative structure of the perturbative string building on the results of \cite{Gu:2022sqc, Gu:2023mgf}. In previous work \cite{Douaud:2024khu}, we provided numerical evidence that the singularities in the Borel plane of the topological string amplitude can be associated to D-brane charges $\gamma$ on the Calabi-Yau threefold $X$, and that the Stokes constant associated to a singularity $\omega_\gamma$ coincides with the generalized Donaldson-Thomas invariant \cite{Joyce} associated to the charge $\gamma$. Here, we will in particular be interested in the question of wall-crossing: how does the Stokes automorphism change when, upon varying moduli, two active rays in the Borel plane cross? To address this question, we rewrite the result for the alien derivative derived in \cite{Gu:2022sqc, Gu:2023mgf} and arrive at a differential operator which implements the alien derivative when acting on the topological string partition function $\cZpert$ in a holomorphic limit. This permits us to study the resurgence of instanton amplitudes (and not just the original perturbative series) at arbitrary points of their Borel plane, vs. merely at multiples of the singularity to which they are associated. To study the question of wall-crossing, we compute the commutator of two alien derivatives using the presentation as a differential operator, and show that they satisfy the Kontsevich-Soibelmann Lie algebra relation \cite{Kontsevich:2008fj}. This implies that the identification of Stokes constants and Donaldson-Thomas invariants is precisely what is needed to guarantee that the Stokes automorphism across a sector in the Borel plane remains constant under variations of moduli which do not lead to singularities entering or leaving the sector. In previous work, relations between resurgence of the topological string partition function and Kontsevich-Soibelman wall-crossing were uncovered in terms of a dual partition function \cite{Iwaki_2024,Alexandrov:2023wdj}. The structure of the Kontsevich-Soibelman symplectomorphism is already visible in the context of exact WKB in the work \cite{DDP}. 

Numerically, our focus is on the study of higher order alien derivatives, which manifest themselves both in multi-instanton contributions and in the study of the Borel plane of instanton amplitudes. Our main numerical advance compare to our previous work \cite{Douaud:2024khu} is that we have dramatically sped up our computer algebra algorithm for computing these amplitudes. We can now efficiently compute them to an order comparable to the truncation order of the perturbative topological string amplitudes. This allows us to more effectively uncover subleading singularities in the Borel plane. We have thus identified bound states of D4-branes in the Borel plane of local $\IP^2$ and matched their Stokes constants with the associated Donaldson-Thomas invariants, encoded in the Poincaré polynomial of the Hilbert scheme of points. We furthermore study the Borel plane of instanton amplitudes by Borel transforming the perturbative instanton series obtained from the formalism of \cite{Gu:2022sqc,Gu:2023mgf}. Finally, we study the manifestation of a D2-brane decay in the Borel plane of local $\IP^2$.

\vspace{0.5cm}

The remainder of the paper is organized as follows: We begin in Section~\ref{s:lightning} with a rapid review of the topological string and the non-perturbative results of \cite{Gu:2022sqc,Gu:2023mgf}. We focus on multi-instanton amplitudes and their associated Stokes constants in Section~\ref{s:formalism}. We present the formalism in a slightly modified form compared to \cite{Gu:2022sqc,Gu:2023mgf} in order to lay the groundwork for the ensuing section. Section~\ref{s:resurgenceWallCrossing} contains the main theoretical advance of this paper. We separate the action of the alien derivative from the amplitude it acts on (rather than introducing a modified genus 0 amplitude as in \cite{Gu:2022sqc, Gu:2023mgf}). This allows us to consider successive alien derivatives in arbitrary directions, and leads to the result regarding wall-crossing announced above. Section~\ref{s:numerics} is devoted to our numerical work. We conclude in Section \ref{s:conclusions}. The paper closes with two appendices. Appendix~\ref{app:resurgentTools} contains a rather detailed summary of resurgent tools based on the excellent book \cite{Sauzin}. We deviate from this source only to generalize the presentation from simple singularities to simply ramified singularities, see e.g. \cite{sauzin2007}, as the resurgence of the topological string amplitude is most conveniently described in terms of these. In Appendix~\ref{app:parametric}, we consider two questions related to the behavior of the alien derivative in the context of parametric resurgence, so as to more carefully justify manipulations that have appeared previously in the string theory resurgence literature.

\section{A lightening review of the topological string and non-perturbative results} \label{s:lightning}
\paragraph{The perturbative topological string amplitudes} The topological string amplitudes $F_g$ as obtained from the holomorphic anomaly equations (HAE) \cite{Bershadsky:1993cx} are anholomorphic functions on all of moduli space $\cM$. The space $\cM$ is the quantum K\"ahler moduli space of the target when considering the topological A-model on the Calabi-Yau threefold $\CYX$, or equivalently the complex structure moduli space of the target when considering the B-model on the mirror Calabi-Yau threefold $\mirrorX$.\footnote{To streamline the presentation, we will move back and forth between the A- and B-model perspective as needed. The reader should have no difficulty mapping the statements we make to the mirror perspective.} The $F_g$ are polynomials in the three point function and twisted propagators, and rational functions of global coordinates on moduli space \cite{Yamaguchi:2004bt, Alim:2007qj}. All anholomorphicity is captured by the propagators. 

\paragraph{Choosing a frame and the holomorphic limit} Upon fixing a reference section of the K\"ahler line bundle, the $F_g$ become global functions on $\cM$. In contrast, the topological string amplitudes with characteristic behavior in a neighborhood of distinguished points in moduli space -- most notably having an enumerative interpretation around a large radius point or exhibiting a gap around a conifold point -- is obtained from $F_g$ by taking a so-called holomorphic limit. The choice of limit is equivalent to a choice of Lagrangian splitting of $H_3(\mirrorX,\IZ)$, i.e. a split into $A$- and $B$-periods, and is referred to as a choice of frame. We will denote by $\cF_g$ the holomorphic limit of $F_g$, leaving the choice of frame implicit.

\paragraph{The topological string amplitude} The topological string amplitude $F^{(0)}$ is defined by the formal Laurent series
\begin{align}
    F^{(0)} = \sum_{g=0}^{\infty} F^{(0)}_g g_s^{2g-2} \,.
\end{align}
Its holomorphic limit $\cFpert$ will feature more prominently in this paper. The unconventional superscript $(0)$ is to distinguish it from the topological string instanton amplitudes which we now introduce.

\paragraph{Borel-Laplace resummation} As we review briefly in the introduction and thoroughly in Appendix \ref{app:resurgentTools}, a divergent series under favorable circumstances can be associated to a function via Borel-Laplace summation. This association is not unique. The difference between two such summations must be invisible in an asymptotic expansion. Typically, the leading contribution to the difference is suppressed by a factor of the form $\e^{-\omega/z}$, with $\omega$ sometimes referred to as the instanton action in the physics literature. We call the asymptotic expansion of the coefficient of such terms times the exponential coefficient an instanton amplitudes. It is an example of a trans-series.

\paragraph{Topological string instanton amplitudes and HAE} In \cite{Couso-Santamaria:2013kmu,Couso-Santamaria:2014iia}, the idea was developed of studying the topological string amplitude non-perturbatively by solving the holomorphic anomaly equations for such a trans-series ansatz,
\begin{align} \label{eq:transseriesAnsatz}
    \cF = \cFpert + S_{\omega_\gamma} \cF^{(1)}_\gamma + \ldots \,,
\end{align}
and a general relation between instanton actions and periods of the Calabi-Yau threefold, previously observed in the case of the conifold \cite{Pasquetti_2010}, anticipated in \cite{Couso-Santamaria:2013kmu} and confirmed in the case of local $\IP^2$ \cite{Couso-Santamaria:2014iia}. 
A closed form of the trans-series was obtained in \cite{Gu:2022sqc,Gu:2023mgf} in terms of the perturbative amplitudes of the topological string:\footnote{In the following, we will write formulae for the case of the topological string on compact Calabi-Yau threefolds, as developed in \cite{Gu:2023mgf} The local case, studied in \cite{Gu:2022sqc} follows upon some simple modifications.} 
\begin{equation} \label{eq:firstInst}
    \cF^{(1)}_\gamma = \frac{1}{2\pi \ii} \left( 1 + g_s c^I \frac{\partial \cFmod}{\partial X^I}(X^I - g_s c^I)\right) \exp \left[ \cFmod(X^I -g_s c^I) - \cFmod(X^I) \right] \,,
\end{equation}
where the instanton action, in terms of a symplectic basis $\{P_I, X^I\}$ of periods of the Calabi-Yau threefold respecting the Lagrangian splitting of $H_3(X,\IZ)$ underlying the holomorphic limit considered, is given by
\begin{equation} \label{eq:instantonAction}
    \omegaG = c^I P_I + d_I X^I \,.
\end{equation} 
Here, $\cFmod(X^I)$ is the topological string amplitude modified at genus 0 and 1,
\begin{equation}
    \cFmod = \frac{1}{g_s^2} \tilde{\cF}_0 + \tilde{\cF}_1 + \sum_{g \ge 2} g_s^{2g-2} \cF_g \,, 
\end{equation}
where, in terms of the conventional genus 0 and genus 1 amplitudes,\footnote{In the local setting, $\cF_1$ is not modified.}
\begin{align} \label{eq:redefineF0F1}
    \tilde{\cF}_0 &= \cF_0 + m_I n_J X^I X^J \quad \text{such that} \quad c^I \partial_I \tilde{\cF}_0 = \omega_\gamma \,,\\
    \tilde{\cF}_1 &= \cF_1 - \left(\frac{\chi}{24}-1 \right) \log X^0 \,.
\end{align}
When the vector $c^I$ does not vanish identically, the constraint on $\tilde{\cF}_0$ can be solved by
\begin{align}
    \tilde{\cF}_0 = \cF_0 +\deltamod \cF_0 \,,
\end{align}
where
\begin{align}
    \deltamod \cF_0 = \begin{cases}
                            \frac{1}{2 c^I d_I} (d_I X^I)^2 & \text{if} \quad c^I d_I \neq 0\,,\\
                            \frac{1}{c^I a_I} a_I d_J X^I X^J & \text{for any} \quad a_I: c^I a_I \neq 0 \quad \text{if} \quad c^I d_I = 0 \,. 
                    \end{cases} 
\end{align}
Else, the relation $c^I \partial_I \tilde{\cF}$ is to be substituted in $\eqref{eq:firstInst}$ (also in the exponent upon Taylor expansion) before setting $c^I =0$.

Note the structure of $\cF^{(1)}_{\gamma}$. For $c^I \not \equiv 0$, the exponent upon Taylor expansion yields
\begin{align}
    \exp \left[ \cFmod(X^I -g_s c^I) - \cFmod(X^I) \right] =
    \exp \left[ -\frac{\omega_\gamma}{g_s}\right] \exp \left[\frac{1}{2} c^I \partial_I \omega_{\gamma}\right] \left( 1 + \cO(g_s) \right) \,.
\end{align}
This is the expected characteristic exponential instanton damping factor multiplied by a formal power series in $g_s$. The prefactor of the exponential yields a formal Laurent series, beginning at order $g_s^{-1}$. Note also that the dependence on $c^I$ only enters via $c^I \partial_I$ and is hence invariant under change of basis of $H_3(X, \IZ)$ which respects the Lagrangian splitting.

\paragraph{Boundary conditions and distinguished frames} The case $c^I \equiv 0$ is clearly very special: with the prescription detailed in the previous paragraph, the formal Laurent series multiplying the exponential collapses with the constant term, yielding
\begin{align} \label{eq:FoneInDistinguishedFrame}
    \cF^{(1)}_\gamma = \frac{1}{2\pi \ii} \left( 1+ \frac{\omegaG}{g_s}  \right) \e^{-\frac{\omega_\gamma}{g_s}} \,.
\end{align}
This behavior, inspired by the behavior of the topological string amplitude in a neighborhood of both the large radius and the conifold point, was indeed imposed as a boundary condition in \cite{Gu:2022sqc, Gu:2023mgf} to obtain the solution \eqref{eq:firstInst}: $\cF^{(1)}_\gamma$ was constrained to take the form \eqref{eq:FoneInDistinguishedFrame} whenever the instanton action is an $A$-period with regard to the Lagrangian splitting underlying the holomorphic limit. A frame in which this is true is called distinguished with regard to the instanton action. If we wish to indicate that the holomorphic limit of the topological string amplitude is taken in a distinguished frame with regard to the instanton amplitude $\omega_\gamma$, we write $\cFpert_{\omega_\gamma}$.

\paragraph{Instantons actions, integrality, and normalization} The reference \cite{Gu:2023mgf} was the first to emphasize that instanton actions should be proportional to  integral periods of the Calabi-Yau threefold. It should be noted that the notion of integrality of the periods is much more straightforward in the context of compact Calabi-Yau threefolds, the object of study of \cite{Gu:2023mgf}, than in the local case studied previously. An integral period, defined as a period of the holomorphic 3-form $\Omega$ over an integral cycle $H_3(X,\IZ)$, is only defined up to a choice of normalization of $\Omega$. This choice enters in the solution of the holomorphic anomaly equations in the normalization of the three point function -- see the detailed discussion in \cite{Gu:2023mgf}. The standard choice coincides with the normalization of $\Omega$ leading to $X^0 = (2\pi \ii)^{3/2} + \ldots$ for the holomorphic period in the neighborhood of a point of maximal unipotent monodromy. With this convention, analytic and numerical evidence strongly suggest that the coefficients $c^I,d_I$ in \eqref{eq:instantonAction} specifying the instanton actions always satisfy 
\begin{align} \label{eq:aleph}
    c^I,d_I \in \Aleph \, \IZ\quad \mathrm{with}\quad \Aleph = \ii\sqrt{2 \pi \ii} \,.
\end{align}
The relation between the normalization of the three point function and the normalization of periods is more tentative in the case of local Calabi-Yau threefolds. For such threefolds, a constant period $x^0$ exists. We will choose the normalization $x^0 = (2\pi \ii)^{3/2}$, which is natural when the local model is obtained as a limit of a compact Calabi-Yau threefold. The condition \eqref{eq:aleph} then also holds in the local case \cite{Couso-Santamaria:2014iia,Gu:2022sqc}.

\paragraph{From formal Laurent series to the Borel plane} Many aspects of instanton corrections become conceptually clearer when described in terms of the Borel plane, and this will be the language we will adopt henceforth.\footnote{We have already anticipated this choice by denoting the instanton action as $\omega$, a letter commonly used to denote the location of singularities in the Borel plane in the resurgence literature, rather than $\cA$.} As we review in detail in Appendix \ref{app:resurgentTools}, the ambiguities in the Borel-Laplace summation are due to singularities in the analytic continuation of the Borel transform\footnote{The genus 0 and 1 amplitudes can be included upon adjoining the symbols $\delta^{(0)}$ and $\delta^{(1)}$ to $\IC\{\zeta\}$ when introducing the Borel transform.}
\begin{align}
    \hatFpert = \sum_{g=2}^\infty \frac{\cF_g}{(2g-3)!}\zeta^{2g-3}\,.
\end{align}
The locations of the singularities coincide with the instanton actions. Up to finitely many polar terms, the instanton amplitudes with the exponential pre-factor stripped off coincides with the asymptotic expansion of the coefficient of a logarithmic singularity of the Borel transform. Alien operators capture these coefficients. Careful specifying the path of analytic continuation leads to a particular such operator, the pointed alien derivative $\dotDelta_{\omega_\gamma}$, which in fact yields the logarithm of the the Stokes automorphism: when the only singularities along a ray $\ray$ are multiples of a singularity $\omega_\gamma$, the Stokes automorphism across $\ray$ takes the form
\begin{equation} \label{eq:SandDelta}
    \stokesAut_\ray = \exp \sum_{\ell=1}^\infty  \dotDelta_{\ell \omega_{\chargeV}} \,.
\end{equation}
In the language of pointed alien derivatives, we can make the statement \eqref{eq:transseriesAnsatz} more precise:\footnote{We pass between formal Laurent series and the Borel plane via the Borel transform operator $\sB$, using the same notation in both instances.}
\begin{equation} \label{eq:alienF}
    \dotDelta_{\omegaG} \cF^{(0)} = S_{\omega_{\chargeV}} \cF^{(1)}_\gamma \,.
\end{equation}
\paragraph{Beyond simple singularities} Note that by the discussion in \ref{app:backToz}, the leading power of $\tfrac{1}{g_s}$ visible in the expression \eqref{eq:firstInst} for $\cF^{(1)}_\gamma$ indicates that this contribution arises from a double pole $\delta^{(1)}$ of the Borel transform, i.e. if we introduce the coefficients
\begin{align}
    \cF^{(1)}_\gamma = \e^{-\omega_\gamma/g_s} \sum_{k=-2}^\infty \cF^{\omega_\gamma}_k g_s^{k+1} \,, 
\end{align}
then, in a neighborhood of $\omega_\gamma$,
\begin{align} \label{eq:FhatNearSing}
    \frac{1}{S_{\omega_\gamma}}\hatFpert \sim \frac{-\cF^{\omega_\gamma}_{-2}}{2 \pi \ii \,(\zeta-\omega_{\gamma})^2} +  \frac{\cF^{\omega_\gamma}_{-1}}{2 \pi \ii \,(\zeta-\omega_{\gamma})} + \hatF^{\omega_\gamma}(\zeta-\omega_\gamma) \frac{\anyLog(\zeta-\omega_{\gamma})}{2\pi \ii} + R(\zeta-\omega) \,, \quad R(\zeta) \in \IC\{\zeta\} \,,
\end{align}
where $\anyLog$ denotes any branch of the logarithm, and 
\begin{align}  \label{eq:FhatomegaPert}
    \hatF^{\omega_\gamma}(\zeta) = \sum_{k=0}^\infty \frac{\cF^{\omega_\gamma}_k}{k!} \zeta^k \,.
\end{align}
Higher alien derivatives, which we shall discuss below, lead to ever higher poles. If we wish to maintain the standard indexing conventions of $\cFpert$, the class of singularities most commonly discussed in the resurgence literature, that of simple singularities comprising simple poles and logarithms, does not suffice to discuss the resurgence properties of the topological string. The inclusion in the resurgence formalism of simply ramified singularities (poles of arbitrary high order and logarithms) is thankfully not difficult, see e.g. \cite{sauzin2007}. The necessary adjustments to standard definitions and theorems can be found in Appendix \ref{app:resurgentTools}.

\paragraph{Stokes constants and BPS invariants} The coefficients $\{c^I,d_I\}$ parametrizing the location of the singularity \eqref{eq:instantonAction} map to D-brane charges $\gamma$, see e.g. \cite[Section]{Douaud:2024khu}. The charge lattice $\Gamma$ can be identified with $H_3(X,\IZ)$ in the B-model setting. The identification of the Stokes constant at a singularity $\omega_\gamma$ and the BPS invariant of charge $\gamma$ was anticipated in \cite{mm-s2019} -- see \cite[page 30/31]{Marino:2024tbx} for the history of this conjecture. In \cite{Douaud:2024khu}, we conjectured in the context of the topological string on compact Calabi-Yau threefolds that the Stokes constant associated to a singularity $\omega_\gamma$ coincides with the generalized Donaldson-Thomas invariant \cite{Joyce} associated to the charge $\gamma \in \Gamma$ (this explains the $\gamma$-index on the quantities introduced above). We investigated this claim for many points of the Borel planes of the 13 hypergeometric one-parameter Calabi-Yau threefolds. Whenever the generalized DT invariant was known in the correct stability chamber, we found coincidence. In all cases, the Stokes constants determined numerically were integers.

\section{Revisiting multi-instantons} \label{s:formalism}

\subsection{Non-primitive charges} \label{s:nonprimitive}
The identification of the Stokes constants with DT invariants suggests that we distinguish between two types of contributions from singularities to the Stokes automorphism: we shall call these primitive and multi-cover contributions respectively.\footnote{The terminology is chosen in analogy to BPS counts based on D-branes: for a given charge, states can arise from a D-brane wrapping an irreducible curve once (primitive contribution) or multiple times (multi-covering contribution).} A primitive contribution at a singular point $\omegaG$ gives rise to multi-cover contributions at all singularities $\omega_{\ell \chargeV}$, $\ell \in \IN$ associated to the charge $\ell \chargeV$. In resurgence terminology, these are multi-instanton contributions, and both analytical evidence (at the mirror of MUM and conifold points) and numerical evidence (at all singular points) \cite{Pasquetti_2010,Couso-Santamaria:2013kmu, Couso-Santamaria:2014iia,Gu:2022sqc,Gu:2023mgf} suggest that the associated Stokes constant is $\frac{1}{\ell^2}\SoG$. Note that the singularity associated to a charge $\chargeV$ which is non-primitive\footnote{A primitive {\it charge} is one that cannot be written as non-trivial integer multiples of charges $\gamma_1,\ldots, \gamma_n$; not to be confused with the terminology primitive {\it contribution} introduced above.} of the form $\gamma = m \gamma'$, $m>1$, can receive both a primitive and multi-cover contributions. In the following, we will write primitive contributions to a Stokes constant $\SoG$ as $\SG$. Thus, the Stokes constant $\SoG$ can be decomposed as
\begin{align} \label{eq:nonPrimitiveCurves}
    \SoG = \sum_{\chargeV' | \chargeV} \frac{1}{(\chargeV/\chargeV')^2}S_{\chargeV'} =\sum_{n\vert \gamma}\frac{1}{n^2}S_{\frac{\gamma}{n}}\,.
\end{align}
This nicely parallels the discussion in \cite[Section 6.2]{Joyce}, where generalized DT invariants ${\bar{\Omega}(\gamma)\in \IQ}$ (denoted as $\bar{DT}^\gamma$ in \cite{Joyce}) are decomposed in terms of invariants $\Omega(\gamma)$ (denoted as $\hat{DT}^\gamma$ in \cite{Joyce}) conjectured to be integer and defined via the relation \eqref{eq:nonPrimitiveCurves}. We thus conjecture the identification
\begin{equation}
    \SoG=\bar{\Omega}(\gamma)\,,\quad S_{\gamma}=\Omega(\gamma) \,.
\end{equation} 
Note that we treat instanton amplitudes associated to primitive charges $\gamma$ and non-primitive charges $\ell \gamma$, $\ell>1$, on the same footing, in the following sense. For simplicity, assume first that $\ell$ is prime and that no primitive contribution at the charge $\ell \gamma$ arises. Then, rather than imposing separate boundary conditions
\begin{align} \label{eq:FellInDistinguishedFrame}
    \cF^{(\ell)}_\gamma = \frac{1}{2\pi \ii} \left( \frac{1}{\ell^2}+ \frac{\omegaG}{\ell \,g_s}  \right) \e^{-\frac{\ell \omega_\gamma}{g_s}} \,,
\end{align}
we interpret this expression as
\begin{align}  \label{eq:multi}
    \dotDelta_{\omega_{\ell \gamma}} \cFpert = \frac{S_\gamma}{\ell^2}\frac{1}{2\pi \ii} \left( 1 + \frac{\omega_{\ell \gamma}}{g_s}  \right) \e^{-\frac{\omega_{\ell\gamma}}{g_s}} \,,
\end{align}
with the $\tfrac{1}{\ell^2}$ prefactor absorbed into the Stokes constant. More generally, this is the first contribution on the right-hand side of \begin{align}  \label{eq:multi}
    \dotDelta_{\omega_{\ell \gamma}} \cFpert &= \sum_{n\vert \ell}\frac{S_{\frac{\ell}{n}\gamma}}{n^2}\frac{1}{2\pi \ii} \left( 1 + \frac{\omega_{\ell \gamma}}{g_s}  \right) \e^{-\frac{\omega_{\ell\gamma}}{g_s}} \\
    &=\frac{S_{\gamma}}{\ell^2}\frac{1}{2\pi \ii} \left( 1 + \frac{\omega_{\ell \gamma}}{g_s}  \right) \e^{-\frac{\omega_{\ell\gamma}}{g_s}}+ \ldots+S_{\ell\gamma}\frac{1}{2\pi \ii} \left( 1 + \frac{\omega_{\ell \gamma}}{g_s}  \right) \e^{-\frac{\omega_{\ell\gamma}}{g_s}}\,.
\end{align}

\subsection{Higher powers of alien derivatives} \label{ss:higherAlienPowers}
To work out the action of higher powers of alien derivatives, it is convenient to express the relation \eqref{eq:alienF} in terms of the partition function $\cZmod = \exp \cFmod$. As a logarithm of an automorphism is a derivation, we have\footnote{Note that $\cFpert$ and $\cFmod$ differ by an additive term which is the sum of a constant and a monomial in $\frac{1}{g_s}$, hence \eqref{eq:alienF} is valid for either. $\cZpert$ and $\cZmod$ by contrast differ by a rescaling.}
\begin{align} \label{eq:alienZmod}
    \dotDelta_{\omegaG}\cZmod(X^I) &= \cZmod(X^I) \dotDelta_{\omegaG} \cFmod(X^I) \\
    &= \frac{\SoG }{2\pi \ii} \left( 1 + g_s c^I \frac{\partial \cFmod}{\partial X^I}(X^I - g_s c^I)\right) \cZmod(X^I -g_s c^I) \nonumber\\
    &=  \frac{\SoG}{2\pi \ii} \left( 1 + g_s c^I \frac{\partial }{\partial X^I}\right) \cZmod(X^I -g_s c^I)\,. \nonumber
\end{align}
Note how the exponential of the trans-series $\cF$ needs to be interpreted: keeping track of powers of a characteristic instanton factor $\e^{-\omega_\gamma/g_s}$ via a constant $\cC$, 
\begin{equation}
    \cF = \cF^{(0)} + \cC \, \cF^{(1)} + \cO(\cC^2) \,,
\end{equation}
we have
\begin{equation}
    \cZ = \cZ^{(0)}\e^{ \cC \, \cF^{(1)} + \ldots} = \cZ^{(0)}\left(1 + \cC \,  \cF^{(1)} + \cO(\cC^2) \right) \,.
\end{equation}
We obtain a trans-series multiplied by an overall factor of $\cZpert$ (which happens to have a leading $\e^{\cF_0/g_s^2}$ behavior). 

To iterate \eqref{eq:alienZmod} and obtain an expression for higher powers of an alien derivative acting on $\cZpert$, we will assume the following two statements to be true:
\begin{enumerate}
    \item The alien derivative, and hence also the Stokes automorphism, commutes with derivatives $\partial_I$. 
    \item The alien derivative commutes with $g_s$-dependent shifts of parameters.
\end{enumerate}
In Appendices \ref{app:parametricDerivatives} and \ref{app:pointedInfinitesimalCommutation} respectively, we reduce these two claims to certain regularity assumptions on $\cFpert$.\footnote{Note that the first claim would also imply that $\cZ/\cZpert$ must satisfy the holomorphic anomaly equations, as these can be formulated as a linear differential operator annihilating this quantity, see e.g. \cite[equation (5.126)]{Gu:2023mgf}.}

Assuming the validity of these claims, we obtain
\begin{align} \label{eq:deltan}
    \left(\dotDelta_{\omegaG} \right)^n {\cZmod}(X^I) &=  \frac{\SoG}{2\pi \ii} \left( 1 + g_s c^I \frac{\partial }{\partial X^I}\right) \left(\dotDelta_{\omegaG}\right)^{n-1} \cZmod(X^I -g_s c^I) \\ 
    &= \left[ \frac{\SoG}{2\pi \ii} \left( 1 + g_s c^I \frac{\partial }{\partial X^I}\right) \right]^n \cZmod(X^I - ng_s c^I) \,.\nonumber
\end{align}
Note that each derivative $g_s \tfrac{\partial}{\partial X^I}$ gives rise to a leading contribution of order $1/g_s$. Hence, $\dotDelta_\omega^n \cZmod$, and equivalently $\dotDelta_\omega^n \cFmod$, have a leading $g_s^{-n}$ term multiplying the exponential $\e^{-{\omegaG}/g_s}$ factor, corresponding to a pole of order $n+1$ in the Borel plane at $\zeta = \omegaG$.

The expression \eqref{eq:deltan} immediately yields the elegant expression found in \cite{Iwaki_2024} for the Stokes automorphism acting on $\cZmod$, assuming that only one primitive contribution to the Stokes automorphism occurs along the ray $\IR^+ \omegaG$, at charge $\chargeV$: 
\begin{align} \label{eq:stokesZprimitive}
    \stokesAut_{\theta_\chargeV} \cZmod&=\e^{\sum_\ell \cC^\ell \dotDelta_{\ell \omegaG} }\cZmod \\
    &= \exp\left[\frac{\SG}{2\pi \ii}\sum_\ell \frac{1}{\ell^2}\left(1+ g_s \,\ell c^I \partial_I  \right) \cC^\ell \e^{-g_s \ell c_I \partial_I} \right] \cZmod \label{eq:sumell}\\
    &= \exp \left[\frac{\SG}{2\pi \ii} \left( \Litwo(\cC \e^{-g_s c^I \partial_I}) - g_s c^I \partial_I\log(1- \cC \e^{-g_s c^I \partial_I}) \right) \right] \cZmod \,.
\end{align}
Expression \eqref{eq:deltan} also allows for a straightforward derivation of the holomorphic limit of multi-instanton amplitudes, found in \cite{Gu:2022sqc, Gu:2023mgf} by solving the holomorphic anomaly equations: with the conventions regarding the case $c^I \equiv 0$ laid out in Section \ref{s:lightning}, $(\dotDeltagammaN{})^n$ acting on the partition function $\cZmodFrame$ in the distinguished frame with regard to the singularity $\omega_\gamma$ yields, according to \eqref{eq:deltan},
\begin{align}
    \left(\dotDelta_{\omega_\gamma}\right)^n \cZmodFrame = \left(\frac{S_{\omega_{\gamma}}}{2\pi \, \ii} \right)^n \left(1 + \frac{\omega_{\gamma}}{g_s}  \right)^n \e^{-\frac{\omega_{\gamma}}{g_s}}
    \cZmodFrame \,.
\end{align}
For an arbitrary frame, we need to evaluate the higher order derivatives occurring in \eqref{eq:deltan}. Invoking Faà di Bruno's formula for the chain rule applied to higher derivatives, we obtain
\begin{align}
    \left(g_s c^I \partial_I \right)^m \cZmod(X^I - ng_s c^I) = \sum_{\partk,\,d(\partk)=m} C_{\partk}  \prod_{i} \left(\left(g_s c^I \partial_I\right)^{i} \cFmod(X^I - ng_s c^I)\right)^{k_i}\cZmod(X^I - ng_s c^I) \,.
\end{align}
Here, $\partk = (k_1, \ldots,k_r)$ is a partition of $m$, i.e. $d(\partk) = \sum_{j=1}^r j k_j = m$, and 
\begin{align}
    C_{\partk} = \frac{d(\partk)!}{\prod_{j}k_j!(j!)^{k_j}}\,.
\end{align}
This precisely reproduces the prescription derived in \cite{Gu:2022sqc,Gu:2023mgf} for computing multi-instanton amplitudes: the central equation to compare to is  \cite[Equation (5.143)]{Gu:2023mgf}, where, in the holomorphic limit $\mathsf{D} = g_s c^I \partial_I$, and $X^{(n)} = g_s c^I \partial_I \cFmod(X^I - ng_s c^I)$, see \cite[Equation (5.76)]{Gu:2023mgf} and \cite[Equations (5.141), (5.149)]{Gu:2023mgf}, respectively. The generalization to including anti-instantons, i.e. contributions carrying instanton factors $\e^{+\ell \omega_\gamma/g_s}$, is immediate.\footnote{The references \cite{Gu:2022sqc,Gu:2023mgf} define specific boundary conditions, indicated by a subscript ${}_\ell$, for studying $\ell$-instanton amplitudes. We can circumvent this step by absorbing the associated multiplicative factor $\tfrac{1}{\ell^2}$ in the Stokes constant, see the discussion in Section \ref{s:nonprimitive}.}

\section{Resurgence and wall-crossing} \label{s:resurgenceWallCrossing}

The Stokes automorphism $\stokesAut_{\ray}$ relates Borel-Laplace summations of the topological string amplitude just below and just above the ray $\ray$ in the Borel plane, and is non-trivial precisely when 
$\ray$ is active, i.e. when singularities of the Borel transform lie on it. We can compose Stokes automorphisms to relate Borel-Laplace summations along two rays bounding a sector $V$ in the Borel plane. A natural question is how the composed Stokes automorphism varies as we vary the moduli of the Calabi-Yau threefold. As the moduli vary, the active rays shift continuously in the Borel plane; at special loci they can coincide, and upon further variation they cross. This picture is strongly reminiscent of the Kontsevich–Soibelman (KS) formalism for wall-crossing \cite{Kontsevich:2008fj}, which governs precisely the discontinuous jumps of generalized Donaldson-Thomas invariants at such loci (the walls in moduli space). Since we have conjecturally identified these invariants with the Stokes constants of our setup, that formalism becomes directly relevant to the present discussion.

Since the Stokes automorphism is completely determined by the pointed alien derivative, we begin by recasting equation \eqref{eq:alienZmod} in a form better suited to analyzing the effect of reordering composed Stokes automorphisms. We then use this result to compute the commutator of two alien derivatives, allowing us to make contact with the Kontsevich–Soibelman wall-crossing formalism.

\subsection{Successive alien derivatives evaluated at arbitrary points}\label{s:successiveAliens}
The higher powers \eqref{eq:deltan} of the alien derivative are necessary to calculate the Stokes automorphism due to a singularity at the point $\omegaG$ on the Borel plane. But more fundamentally, the alien derivative $\dotDelta_{\omegaG}$ encodes the singularity of the Borel transform at the point $\omegaG$. A second alien derivative then extracts the singularity of the coefficient of the log singularity. It is therefore meaningful to evaluate successive pointed alien derivatives, each evaluated at different points $\omega_{\chargeV_i}$. Expressing the alien derivative in the form \eqref{eq:firstInst} is inconvenient for addressing this question, as the contribution of $\cF_0$ to $\cFmod$ has been modified as specified in \eqref{eq:redefineF0F1} to depend on the charge $\chargeV$. In this section, we will therefore extract this dependence on $\chargeV$ from $\cFmod$ and incorporate it into the operator acting on $\cFpert$ to yield its pointed alien derivative.

In the following, $\cFpert$ and $\cZpert$ will hence denote
\begin{equation} \label{eq:defFg}
    \cFpert = \frac{1}{g_s^2} \cF_0 + \tilde{\cF}_1 + \sum_{g \ge 2} g_s^{2g-2} \cF_g \,, \quad \cZpert = \exp \cFpert \,,
\end{equation}
i.e. we retain the (singularity independent) genus 1 modification while discarding the (singularity dependent) genus 0 modification, such that (when $c^I \not \equiv 0)$,
\begin{align}
    \cZmod = \exp \left[\frac{1}{g_s^2}\deltamod \cF_0  \right] \cZpert \,.
\end{align}
The equation \eqref{eq:alienZmod} then becomes
\begin{align}
    \exp \left[\frac{1}{g_s^2}\deltamod \cF_0(X^I)   \right] \dotDelta_{\omegaG} \cZpert (X^I)&= \frac{\SoG}{2\pi \ii} \left( 1 + g_s c^I \frac{\partial }{\partial X^I}\right) \exp \left[\deltamod \cF_0(X^I -g_s c^I) \right] \nonumber \\
    &{}\quad \times \cZpert(X^I -g_s c^I) \nonumber \\
    &= \exp \left[\frac{1}{g_s^2}\deltamod \cF_0(X^I)  \right] \frac{\SoG}{2\pi \ii} \left( 1 + g_s c^I \frac{\partial }{\partial X^I} + \frac{1}{g_s} d_I X^I \right) \nonumber \\
    &{}\quad \times \exp \left[-\frac{1}{g_s} d_I X^I \right] \exp\left[\frac{1}{2}c^I d_I \right] \exp \left[-g_s c^I \frac{\partial}{\partial X^I} \right] \cZpert(X^I) \,,
\end{align}
where we have used the quadraticity and the defining equation of $\deltamod \cF_0$ to write
\begin{align}
    \deltamod \cF_0(X^I -g_s c^I) = \deltamod \cF_0(X^I) -g_s d_I X^I + \frac{1}{2}g_s^2 d_I c^I \,.
\end{align}
Setting 
\begin{equation} \label{eq:twistedDiff}
    \twistedDG = g_s c^I \partial_I + \frac{1}{g_s} d_I X^I 
\end{equation}
and invoking
\begin{equation}
    -\frac{1}{2} \left[ - \frac{1}{g_s} d_I X^I, -g_s c^I \partial_I \right] = -\frac{1}{2} c^I d_I 
\end{equation}
and the Baker-Campell-Hausdorff relation $\e^X \e^Y = \e^{X+Y+\frac{1}{2}[X,Y]}$ for $[X,Y]$ central, we arrive at
\begin{align} \label{eq:alienZ}
    \dotDelta_{\omegaG} \cZpert(X^I) &= \frac{\SoG}{2\pi \ii} \left( 1 + \twistedDG \right) \e^{-\twistedDG} \cZpert(X^I) \,.
\end{align}
Note that equation \eqref{eq:alienZ} also captures the correct behavior of $\dotDelta_{\omegaG}$ in a distinguished frame, in which by definition $c^I \equiv 0$ such that $\omega_\gamma = d_I X^I$:
\begin{align} 
    \dotDelta_{\omegaG} \cZpert_{\omegag}(X^I) &= \frac{\SoG}{2\pi \ii} \left( 1 + \frac{\omegag}{g_s} \right) \e^{-\frac{\omegag}{g_s}} \cZpert_{\omegag}(X^I) \,.
\end{align}
We can use equation \eqref{eq:alienZ} to rewrite the action of the Stokes automorphism given in \eqref{eq:stokesZprimitive} in terms of $\cZpert$ rather than $\cZmod$:
\begin{align}
    \stokesAut_{\theta_\chargeV} \cZpert
    &= \exp \left[\frac{\SG}{2\pi \ii} \left( \Litwo(\cC \e^{-\twistedDG}) - \twistedDG \log(1- \cC \e^{-\twistedDG}) \right) \right] \cZpert\,.
\end{align}
The relation \eqref{eq:alienZ} will be central to what follows.
Assuming the validity of the two claims put forth in Section \ref{ss:higherAlienPowers}, it implies that when acting on $\cZ$, we can, upon stripping away the Stokes constants, replace alien derivatives by the differential operators
\begin{align} \label{eq:alienDiffOp}
    \alienDiffOp_{\omegaG} = \frac{1}{2\pi \ii} (1 + \twistedDG) \e^{-\twistedDG} \,.
\end{align}
When evaluating a composition of alien operators, we begin with the innermost, replace it by the corresponding differential operator $\alienDiffOp_{\omegaG}$, then commute the remaining alien derivatives past $\alienDiffOp_{\omegaG}$ and repeat, thus arriving at the following reversal in the order of composition:
\begin{align} \label{eq:multiDeltaZ}
    \dotDelta_{\omega_1} \cdots \dotDelta_{\omega_n} \cZpert = \left(\prod_{i=1}^n S_{\omega_i}\right) \alienDiffOp_{\omega_n} \cdots \alienDiffOp_{\omega_1} \cZpert \,. 
\end{align}
We can use the derivation property of $\dotDeltagammaN{}$ to obtain $\dotDeltagammaN{}\cZpert$ from knowledge of $\dotDeltagammaN{}\cFpert$, to derive the action of multiple alien derivatives acting on $\cFpert$ from \eqref{eq:multiDeltaZ}. It is clear from the structure of \eqref{eq:twistedDiff} that each successive action of the alien derivative will increase the order of the leading pole of the formal Laurent series by one. It will be convenient to introduce the notation
\begin{equation}
    \dotDeltagammaN{1} \circ... \circ \dotDeltagammaN{n} \cFpert=\left(\prod_{i=1}^n{S_{\omega_{\gamma_i}}}\right)g_s^{-(n+1)}\e^{-\frac{\sum_{i=1}^n \omega_{\gamma_i}}{g_s}}\cF^{(\gamma_n\vert...\vert\gamma_1)}(g_s) \,,
    \label{eq:defFgammas}
\end{equation}
generalizing the notation in e.g. \cite{Douaud:2024khu,Couso-Santamaria:2014iia} for the instanton amplitude and rescaling such that 
\begin{align}
    \cF^{(\gamma_1\vert...\vert\gamma_n)}(g_s) \in g_s \,\IC[[g_s]]\,,
\end{align}
with coefficients
\begin{equation} \label{eq:FgammaCoeffs}
    \cF^{(\gamma_1\vert...\vert\gamma_n)}(g_s)=\sum_{k\geq0} \cF_k^{(\gamma_1\vert...\vert\gamma_n)}g_s^{k+1} \,.
\end{equation}
The alien derivative of $\cF^{(\gamma_1\vert...\vert\gamma_n)}$ follows immediately from equation  \eqref{eq:defFgammas},
\begin{align} \label{eq:alienDFinst}
    \dotDeltagammaN{} \cF^{(\gamma_1\vert...\vert\gamma_n)} = S_{\omega_{\gamma}} \, g_s^{-1} \e^{-\frac{\omega_{\gamma}}{g_s}} \cF^{(\gamma_1\vert...\vert\gamma_n|\gamma)} \,.
\end{align}
We will in particular be interested in the singularities of $\hatFpert$, the Borel transform of the topological string amplitude, along a ray $\ray_\gamma$ in the Borel plane. Let $\gamma$ be primitive, and let 
\begin{align}
    \Omega \cap [0,n \omega_\gamma] = \left\{0,\omega_\gamma, 2\omega_\gamma, \ldots, n\omega_\gamma \right\} 
\end{align}
for any $n \in \IN^*$, i.e. the only singularities along $\ray_\gamma$ are multiples of $\omega_\gamma$. In the notation of Appendix~\ref{app:alienOpsAndRays}, we thus have $\omega_r = r \omega_\gamma$, and the singularity of $\hatFpert$ at $n \omegag$, obtained via a path of analytic continuation above $\ray_{\gamma}$, is extracted via the alien operator $\Deltaplus_r$. By the exponential relation \eqref{eq:DeltaDandDeltaPlusD} between the Stokes automorphism $\Deltaplus_{\ray}=\mathrm{Id} + \sum_{r\in \IN^*}\dotDeltaPlusR$ and the pointed alien derivative $\dotDelta_r$, we have
\begin{align}
    \Deltaplus_{n\omega_\gamma} \left(\cFpert\right)&= \tau_{n\omega_\gamma}^{-1} \circ \sum_{k=1}^n \frac{1}{k!} \sum_{\substack{n_1, \ldots, n_k \in \IN^* \\n_1 + \cdots + n_k = n}} \dotDelta_{n_1\omegaG}\circ \cdots \circ  \dotDelta_{n_k\omegaG} \left(\cFpert\right) \\
    &= \sum_{k=1}^n \frac{g_s^{-(k+1)}}{k!} \sum_{\substack{n_1, \ldots, n_k \in \IN^* \\n_1 + \cdots + n_k = n}} \left(\prod_{i=1}^k{S_{n_i\omega_{\gamma}}}\right)\cF^{(n_k\gamma | \ldots |n_1\gamma)} \,.
\end{align}

\subsection{The commutator of two alien derivatives}
Now that we have recast the pointed alien derivative $\dotDelta_{\omegag}$ in terms of the operator $\alienDiffOp_{\omegag}$, computing the commutator of two alien derivatives reduces to a straightforward calculation. We begin by evaluating the composition of two such derivatives associated to two charges $\gamma$ and $\tilde{\gamma}$,
\begin{align}
    \alienDiffOp_{\omegaG} \alienDiffOp_{\omegatG} &= \frac{1}{(2\pi \ii)^2} \e^{\frac{1}{2}(c^I d_I + \tc^I \td_I)}\left(1+ \twistedDG \right)\e^{-\frac{1}{g_s} d_I X^I} \e^{-g_s c^I \partial_I} 
    \left(1+\twistedDtG\right)\e^{-\frac{1}{g_s} \td_I X^I} \e^{-g_s \tc^I \partial_I} \nonumber\\
    &= \frac{1}{(2\pi \ii)^2} \e^{\frac{1}{2}(c^I d_I + \tc^I \td_I)}\left(1+ \twistedDG \right) \nonumber \\
    &{}\quad\quad\quad \left(1+ \twistedDtG - c^I \td_I + d_I \tc^I \right)\e^{c^I \td_I} \e^{-\frac{1}{g_s} (d_I + \td_I) X^I} \e^{-g_s (c^I+\tc^I) \partial_I} \,.
\end{align}
Recalling the normalization $\omegaG, \omegatG \in \Aleph\,\Lambda$ of the singular points $\omegaG$, $\omegatG$ with regard to the period lattice of the Calabi-Yau threefold $X$, with $\Aleph = \ii \sqrt{2\pi \ii}$, we have
\begin{align}
    d_I \tc^I = c^I \td_I + 2\pi \ii \, \langle \chargeV, \tchargeV \rangle \quad \Rightarrow \quad \frac{1}{2} (c^I \td_I + d_I \tc^I) = c^I \td_I + \pi \ii \,\langle \chargeV, \tchargeV \rangle   \,.
\end{align}
Hence,
\begin{align}
     \alienDiffOp_{\omegaG} \alienDiffOp_{\omegatG} &= (-1)^{\langle \chargeV, \tchargeV \rangle } \frac{1}{(2\pi\ii)^2} \e^{\frac{1}{2}(c^I + \tc^I)(d_I +\td_I)} \left[ \left(1+ \twistedDG \right)\left(1+ \twistedDtG \right) - \left(1+ \twistedDG \right) \Aleph^2 \langle \chargeV ,\tchargeV \rangle \right] \nonumber \\
     & \hspace{6cm}\times\e^{-\frac{1}{g_s} (d_I + \td_I) X^I} \e^{-g_s (c^I+\tc^I) \partial_I}  \,.
\end{align}
Thus,
\begin{align} \label{eq:alienComm}
    \left[ \alienDiffOp_{\omegaG}, \alienDiffOp_{\omegatG} \right] &=  \frac{(-1)^{\langle \chargeV, \tchargeV \rangle }}{(2\pi \ii)^2} \e^{\frac{1}{2}(c^I + \tc^I)(d_I +\td_I)} \Big( \left[ \twistedDG , \twistedDtG \right] - \Aleph^2 \langle \chargeV, \tchargeV \rangle \left( 2 + \twistedDG + \twistedDtG \right)\Big) \nonumber \\
    & e^{-\frac{1}{g_s} (d_I + \td_I) X^I} \e^{-g_s (c^I+\tc^I) \partial_I} \nonumber \\
    &=  (-1)^{\langle \chargeV, \tchargeV \rangle }\langle \chargeV, \tchargeV \rangle \alienDiffOp_{\omegaG + \omegatG} \,. 
\end{align}
By \eqref{eq:multiDeltaZ}, we conclude that
\begin{align}
    [ \dotDelta_{\omega_\gamma}, \dotDelta_{\omega_{\tilde{\gamma}}}] = -(-1)^{\langle \chargeV, \tchargeV \rangle }\langle \chargeV, \tchargeV \rangle \dotDelta_{\omegaG + \omegatG} \,.
\end{align}

\subsection{Stokes automorphisms and the Kontsevich-Soibelman algebra}
The commutation relation \eqref{eq:alienComm}
is precisely that defining the Kontsevich-Soibelman Lie algebra \cite[page 11]{Kontsevich:2008fj}, with the identification $\dotDelta_{\omegag} \leftrightarrow -e_\gamma$. We give a very rough one paragraph summary of the KS formalism: these authors introduce a Lie algebra over $\IQ$ with basis elements $(e_\gamma)_{\gamma \in \Gamma}$, $\Gamma$ a free abelian group, with commutator
\begin{align}
    [e_\gamma ,e_{\gamma'}] = (-1)^{\langle \gamma,\gamma' \rangle}\langle \gamma, \gamma' \rangle e_{\gamma + \gamma'} \,.
\end{align}
The free abelian group $\Gamma$ is mapped by a central charge map $Z_t$, $t \in U$ for some appropriate parameter set $U$, to $\IC$: $Z_t: \Gamma \rightarrow \IC$. To each element $\gamma$, they introduce a Lie group element \begin{align} \label{def:Kgamma}
    K_\gamma = \exp\left(-\Omega(\gamma) \sum_{n=1}^\infty \frac{e_{n\gamma}}{n^2}\right) \,,
\end{align}
depending on integers $\Omega(\gamma)$. Given a sector $V$ in the complex plane with vertex at the origin and opening angle less than $\pi$, they consider the product
\begin{align} \label{eq:productK}
    \arrowProduct_{\gamma: Z(\gamma) \in V} K_\gamma \,,
\end{align}
with the arrow above the product indicating that the product is to be taken in clockwise order. The integers $\Omega(\gamma)$ are said to satisfy the KS wall-crossing relation if upon variation of the moduli $t$ such that no values of $Z(\gamma)$ enter or leave the sector $V$, they adjust such that the product \eqref{eq:productK} remains constant. Joyce and Song prove that the generalized Donaldson-Thomas invariants they introduce in \cite{Joyce} satisfy the KS wall-crossing formula.

The implications for our study are now all but clear, given that the relevant central charge map $Z(\gamma)$ in the case of the topological string on a Calabi-Yau threefold is a constant multiple of the period map $\omegag$: choose a sector $V$ in the Stokes plane bounded by the non-active rays $\ray_1$ and $\ray_2$. The Stokes automorphism $\stokesAut_V$ from $\ray_1$ to $\ray_2$ is given by the composition
\begin{align} \label{eq:stokesV}
    \stokesAut_V = \arrowProduct_{\gamma: \omega_\gamma \in V} \stokesAut_{\theta_\gamma} \,, \quad \theta_\gamma = \mathrm{phase}\left(\omegag\right) \,,
\end{align}
where for almost all $\theta$, $\stokesAut_{\theta} = 1$. Let us assume that only one primitive singularity $\omega_\gamma$ lies on the ray determined by $\gamma$ (if this is not the case, we consider a small variation of the moduli to arrive at this generic situation), and set
\begin{align}
    \alienDiffOpRay = \sum_{\ell=1}^\infty \frac{1}{\ell^2} \cD_{\ell \omegaG} \,.
\end{align}
Let us define the operator
\begin{align} \label{eq:productD}
    \cS_V(X^I;z)  = \parrowProduct_{\gamma: \omegag \in V} \left( \e^{\alienDiffOpRay} \right)^{\Omega(\gamma)}\,.
\end{align}
The variable dependence needs to be understood as follows: the $z$ dependence of the $X^I$ is implicit in the $X^I$ dependence of the operator. The only explicit $z$ dependence hence occurs due to the dependence on $z$ of the ordering of the composition. Now consider a variation of moduli from $z$ to $z'$ so that no singularities cross into or leave $V$. By \eqref{eq:alienComm}, the identification of $\Omega(\gamma)$ with DT invariants, and the KS formula (note the minus sign in the exponent in \eqref{def:Kgamma} and the reverse ordering of the products \eqref{eq:productK} and \eqref{eq:productD}, we can conclude that
\begin{align}
    \cS_V(X^I;z) = \cS_V(X^I;z') \,.
    \label{eq:defWcformula}
\end{align}
As a function of $X^I$, the image of $\cZ(X^I)$ under $\stokesAut_V$ hence remains unchanged upon varying the moduli from $z$ to $z'$. 

Ultimately, this observation could be an ingredient in the proof that the Stokes constants coincide with generalized Donaldson-Thomas invariants. One proof strategy would be the following: 
\begin{enumerate}
    \item {\it Demonstrate that the Borel-Laplace summation of the topological string partition function is continuous under variation of the moduli, as long as no singularities cross the integration path under this variation.} Given this, compare the two Borel-Laplace summations associated to the rays $\ray_1$ and $\ray_2$ spanning the sector $V$. As long as we consider variations under which no singularities pass into or pass out of the sector, the associated Stokes automorphism upon Borel transform maps a continuous to a continuous function, hence must itself be continuous under such variation of the moduli.
    \item {\it Establish that the Stokes constants are discrete.} Given this, as under change of ordering of active rays within $V$, the Stokes automorphism either jumps discontinuously or is constant, it must be constant. Hence, generalized Donaldson-Thomas invariants form a possible set of Stokes constants.
    \item {\it Establish the coincidence of Stokes constants and generalized Donaldson-Thomas invariants in some chamber.}
\end{enumerate}

\section{Examples and numerical evidence} \label{s:numerics}
\subsection{The geometries}
In \cite{Douaud:2024khu}, we studied the resurgence properties of the topological string amplitudes of all 13 one-parameter hypergeometric compact Calabi-Yau threefolds in some detail. In the present work, we will provide numerical evidence for the claims put forth in Section \ref{s:formalism} based on the quintic threefold, as well as the non-compact Calabi-Yau threefold which is the total space of the canonical bundle of the projective plane, $\KPt = \cO_{\IP^2}(-3)$, also referred to as local $\IP^2$. The quintic was studied in this context also in \cite{Gu:2023mgf}, while local $\IP^2$ has served as a case study of resurgence properties of the topological string amplitudes in numerous works, e.g. \cite{Couso-Santamaria:2014iia,Gu_2022,Gu:2022sqc,Marino:2024yme,Marino:2023gxy}.

\subsubsection{The quintic}
When we speak of the one-parameter hypergeometric Calabi-Yau models, we in fact have a pair of threefolds in mind for each model, related via mirror symmetry. In the case of the quintic, we will call $\quintic$ the threefold whose K\"ahler moduli space is one-dimensional, and its mirror with one-dimensional complex structure moduli space $\mirrorQ$. When we speak of the topological string amplitude $\cFpert$ of the quintic in the following, we have in mind the topological string amplitude of the A-model on $\quintic$ or equivalent that of the B-model on $\mirrorQ$.

As reviewed above, an important observation in \cite{Gu:2023mgf} was that singularities on the Borel plane occur at a constant multiple of the periods of the holomorphic 3-form $\Omega$ of $\mirrorQ$ evaluated on $H_3(\mirrorQ,\IZ)$. In the conventions of \cite{Douaud:2024khu}, the normalization constant is $\Aleph$ as given in \eqref{eq:aleph} above. The image of $H_3(\mirrorQ,\IZ)$ under the mirror map is a sublattice $\Lambda_{\quintic} \subset H^{\mathrm{even}}(\quintic,\IQ)$ which corresponds to D6-D4-D2-D0 charge (see \cite[Section 5.1]{Douaud:2024khu} for a quick summary of these matters). Mapping the integral period vector to the corresponding D-brane charge is central to our study, as the Stokes constants associated to a singularity in the Borel plane are conjectured to coincide with the Donaldson-Thomas invariants associated to this charge \cite{Gu:2023mgf,Douaud:2024khu}. This map is worked out by matching the expression for the central charge on the threefold and its mirror.

As all one-parameter hypergeometric models, the complex structure moduli space of $\mirrorQ$ has three distinguished points: the point of maximum unipotent monodromy (MUM), which we shall situate at $z=0$, the conifold point $\mu$, which in our coordinates will lie at $z=\tfrac{1}{3125}$, and finally a point at $z=\infty$, which in the case of $\mirrorQ$ is an R-point of order 5.
One can define a distinguished basis of periods $(X^0,X^1,P_0,P_1)$ (see e.g \cite{Gu:2023mgf} for more details), such that in this basis, the period 
\begin{equation}
  \int_{\gamma}\Omega=n_0X^0+ n_1 X^1+m^0P_0+m^1P_1,
\end{equation}
is the central charge of a bound states of $n_0$ D0-, $n_1$ D2-, $m^1$ D4- and $m^0$ D6-branes. The choice of $X^0$ and $X^1$ as A-periods defines the large radius frame.

The topological string amplitudes $\cF_g$ for the quintic are known up to genus $g_{\text{max}}=64$.
            
\subsubsection{Local $\IP^2$}
Topological string amplitudes on local Calabi-Yau threefolds, particularly those which are toric, are computationally more accessible than on their compact cousins. The holomorphic anomaly equations are simpler (only one rather than three sets of propagators is needed) and more importantly, a sufficient number of boundary conditions can be imposed to obtain the topological string amplitudes $F_g$ up to arbitrary high genus (with the only constraint being computer power and memory). Also, an independent approach, based on the topological vertex \cite{Aganagic_2004}, exists to compute the amplitudes without any need to impose boundary conditions. On the other hand, the theory of generalized Donaldson-Thomas invariants is more laborious to set up, see \cite[Section 6.7]{Joyce}, and the Hodge theory underlying the B-model is conceptually less straightforward. 

After the conifold, local $\IP^2$ is arguably the simplest local Calabi-Yau threefold on which to study the topological string. By $H_2(\IP^2, \IZ) = H_4(\IP^2)= \IZ$, the charge lattice of BPS states on $\KPt$ is three dimensional. The complex structure moduli space of the mirror has three singular points, typically chosen to lie at $z=0$ (the MUM point), $z=z_c = -\tfrac{1}{27}$ (the conifold point), and $z = \infty$ (the orbifold point).

The BPS spectrum on $\KPt$, or equivalently the bounded derived category of compactly supported coherent sheaves on $\KPt$, was studied in detail in \cite{Bousseau:2022snm}. We will follow their conventions in the following. The Chern vector $\gamma(E)$ of a compactly supported coherent sheaf $E$ on $\KPt$ takes the form
\begin{align}
    \gamma(E) = [r,d,\cht] \in \IZ \oplus \IZ \oplus \frac{1}{2}\IZ \,,
\end{align}
where $r$ and $d$ specify rank and degree of the sheaf respectively. The associated central charge in terms of periods of the meromorphic one-form on the mirror curve is given by
\begin{equation}
    Z(r,d,\cht)=-r\, T_D+d\,T- \cht \,,
    \label{central charge}
\end{equation}
where\footnote{The constants are imposed such that the periods of the meromorphic one-form satisfy the boundary conditions $(T,T_D) = (-\tfrac{1}{2},\tfrac{1}{3})$ at the orbifold point \cite{Diaconescu:1999dt}.} 
\begin{align}
   T= \frac{f_1(z)}{2\pi\ii}-\frac{1}{2} \,, \quad 
    T_D= \frac{f_2(z)}{(2\pi\ii)^2}-\frac{f_1(z)}{4\pi \ii}+\frac{1}{4} \,,
\end{align}
with the functions $f_1(z)$, $f_2(z)$ given in terms of Meijer G-functions as
\begin{align}
    &f_1(z)=-\frac{1}{\Gamma(\frac{1}{3})\Gamma(\frac{2}{3})}G^{\,2,2}_{3,3}\!\left(\,\begin{smallmatrix}\tfrac{1}{3}&\tfrac{2}{3}&1\\0&0&0\end{smallmatrix}\;\middle|\;27z\right)\\
    &f_2(z)=\frac{1}{2}\left(G^{\,3,1}_{3,3}\!\left(\,\begin{smallmatrix}\tfrac{1}{3}&\tfrac{2}{3}&1\\0&0&0\end{smallmatrix}\;\middle|\;27z\right)+G^{\,3,1}_{3,3}\!\left(\,\begin{smallmatrix}\tfrac{2}{3}&\tfrac{1}{3}&1\\0&0&0\end{smallmatrix}\;\middle|\;27z\right)\right)-\frac{\pi^2}{3}\,.
\end{align}
If we wish to express the central charge in terms of integer linear combinations of a basis of periods, we can replace $\cht$ in the Chern vector of the sheaf by its integral Euler characteristic, $\chi=r+\frac{3}{2}d+ \cht$. Following \cite{Bousseau:2022snm}, we denote the Chern vector $\gamma(E)$ of a coherent sheaf $E$ expressed in terms of the triplet rank, degree, and Euler characteristic of the as $\gamma(E)=[r,d,\chi)$. The central charge in terms of this triplet becomes
\begin{equation}
  Z[r,d,\chi)=r(-1 -T_D)+d(T+\frac{3}{2})-\chi.
\end{equation}
As motivated in the paragraph below equation \eqref{eq:aleph}, we will normalize the periods such that the constant period takes the value $(2\pi \ii)^{3/2}$. Together with the integrality requirement, this informs our choice of period vector $(\tilde{T}, \tilde{T}_D$, $x^0)$, where
\begin{align}
   & \tilde{T}=(2\pi\ii)^{3/2}(T+\frac{3}{2})\,,\\
    & \tilde{T}_D= (2\pi\ii)^{3/2}(T_D+1)\,,\\
    & x^0= (2\pi\ii)^{3/2}\,.
\end{align}
In this normalization, we will find that a singularity in the Borel plane associated to a charge ${\gamma=[r,d,\chi)}$ will lie at \begin{equation}
    \omega_{\gamma}=\alpha( r \tilde{T}_D-d\tilde{T}+\chi x^0)= 4\pi^2\ii Z_{\gamma} \,,
\end{equation}
with $\alpha$ as defined in \eqref{eq:aleph}.

The standard relation between the genus 0 amplitude $\cF^0$ and B-periods, $P_I = \partial_I\cF^0$, can involve a proportionality factor in the local setting. In our normalization, we have
\begin{equation}
    \tilde{T}_D=3\frac{\partial \cF_0^{\LR}}{\partial \tilde{T}} \,,
\end{equation}
where $\LR$ stands for large radius frame, the frame in which $\tilde{T}$ is chosen as an A-period. We retain this factor for all frames. The expression \eqref{eq:firstInst} is modified by this rescaling \cite{Gu:2022sqc}: we must replace $c^I \rightarrow 3c^I$, where the normalization of $c^I$ is fixed by \eqref{eq:instantonAction}. Looking back at the definition of the differential operator $\twistedDG$ in equation \eqref{eq:twistedDiff} leading to the fundamental relation \eqref{eq:alienZ}, we see that this modification leads to a rescaling of the derivative term in $\twistedDG$,
\begin{equation} \label{eq:twistedDifflocal}
    \twistedDG = 3 g_s c^I \partial_I + \frac{1}{g_s} d_I X^I \,. 
\end{equation}
Explicitly, in the large radius frame, this is
\begin{equation}
    D_{\omega_{\gamma}}^{\LR}=\alpha \left( 3 g_sr\frac{\partial}{\partial \tilde{T}}+\frac{-d\,\tilde{T}+\chi \,x^0}{g_s} \right)\,,
\end{equation}
while in the conifold frame, in which $\tilde{T}_D$ is chosen as an A-period,
\begin{equation}
      D_{\omega_{\gamma}}^\mathrm{c}=\alpha \left(3 g_s d\frac{\partial}{\partial \tilde{T}_D}+ \frac{r\tilde{T}_D+\chi x^0}{g_s} \right) \,.
\end{equation}
Note that to retain the form \eqref{eq:alienComm} of the commutator of two alien derivatives, this requires the pairing $\langle \cdot, \cdot \rangle$ between two charges to also be rescaled by a factor of 3. This precisely coincides with the definition of the pairing given in \cite{Bousseau:2022snm}:
\begin{equation}
    \langle [r,d,\chi),[r',d',\chi')\rangle=3(rd'-dr') \,.
\end{equation}

\subsection{Numerical tools and conventions}\label{sec:numtools}
\paragraph{Padé approximation}
Our numerical approach to resurgence requires studying the singularities of functions of the form \begin{equation}
    \hphi(\zeta)=\sum_{n\geq0}\frac{a_n}{n!}\zeta^n
\end{equation}
of which the coefficients are known only up to some maximum order $N$. The truncated series $\hphi_N(\zeta) =\sum_{n=0}^N\frac{a_n}{n!}\zeta^n$, being a polynomial, does not exhibit any singularities. A very efficient way to numerically predict the singularities of the full series $\hphi$ is to approximated it as the quotient of two polynomials $P_m$ and $Q_n$ of degree $m$ and $n$,
\begin{equation}
    \text{Padé}_{(m,n)}(\hphi)(\zeta)=\frac{P_m(\zeta)}{Q_n(\zeta)}\,,
\end{equation}
where the coefficients of $P_m$ and $Q_n$ are chosen so that the $\hatphi$ and the Taylor expansion of $\text{Padé}_{m,n}(\hatphi)$ coincide up to order $m+n$. The quotient is called the Padé approximant of $\hatphi$ of order $(m,n)$. In our applications, we typically choose $m=n$.

The roots of the Padé approximant aggregate around the poles and the branch cuts of $\hphi$. In the many plots that follow, we depict the $\zeta$ plane and mark the roots of the denominator $Q_n$ of the Padé approximant of the topological string amplitude as red points. We refer to such figures as depicting the Borel plane of the topological string amplitude.

\paragraph{Computing Stokes constants}

Assume that the charge $\gamma$ is indivisible. From equation \eqref{eq:defFgammas}, we see that we can compute the Stokes constant at $\gamma$ in terms of the alien derivative $\Delta_{\omegaG}(\cFpert)$ and the instanton correction $\cF^{(\gamma)}$ via\footnote{Note that compared to equation \eqref{eq:defFgammas}, expressing this relation in terms of the undotted alien derivative permits us to drop the exponential symbol in the denominator, cf. equation \eqref{eq:dottedUndotted} in the appendix.}
\begin{equation} \label{eq:solvingForS}
    S_{\gamma}=\frac{\DeltagammaN{}(\cFpert)}{g_s^{-2} \cF^{(\gamma)}}\,.
\end{equation}
As we have reviewed above, we can compute $\cF^{(\gamma)}$ in a $g_s$ expansion given the knowledge of the perturbative topological string amplitudes $\cFpert_g$. Given this knowledge, we can calculate the alien derivative numerically in multiple ways.\\

\begin{enumerate}[label=Method \arabic*:,leftmargin=*,itemindent=4em,ref=Method \arabic*] 
\item As we review in Appendix \ref{app:DeltaplusAndStokesAuto}, the Stokes automorphism $\stokesAut_\ray$ along a ray $\ray$ is the exponential of the sum of pointed alien derivatives evaluated at the singularities on $\ray$. Assuming that $\omega_\gamma$ and its multiples are the only singularities that lie on the ray $\ray_{\omega_\gamma}$, we have
\begin{align}
    \stokesAut_{\ray_{\omega_\gamma}} = \id + \dotDeltagammaN{} + \cO(\e^{-2\omega_\gamma/g_s}) \,.
\end{align}
Calculating the discontinuity of the Borel resummation $\sS$ (defined informally in the introduction and formally in equation \eqref{eq:BorelResum} of the appendix) of $\cFpert$ below and above\footnote{We follow the conventions of \cite{Sauzin} in which $\theta^+ < \theta < \theta^-$ (`-' indicating to the left, `+' to the right), and \mbox{$\disc_\theta = \sS^{\theta^+} - \sS^{\theta^-}$}.} the ray $\ray_{\omega_\gamma}$,
\begin{align}
    \disc_{\theta_{\omega_\gamma}}(\cFpert)= \sS^{\theta^-_{\omega_\gamma}}\left( \stokesAut_{\ray_{\omega_\gamma}} - \id\right)(\cFpert) \,,
\end{align}
thus yields the approximation
\begin{align} \label{eq:discApprox}
    \e^{-\tfrac{\omega_\gamma}{g_s}} \sS^{\theta^{-}_{\omega_\gamma}}\left(\Delta_{\omega_\gamma}(\cFpert) \right)\simeq \disc_{\theta_{\omega_\gamma}}(\cFpert) \,.
\end{align}
Note that we have expressed the left-hand side of \eqref{eq:discApprox} in terms of the undotted alien derivative in order to emphasize the leading asymptotic behavior of the discontinuity, cf. \eqref{eq:dottedUndotted} followed by \eqref{eq:symbolvsExponential}. Borel transforming also the denominator of \eqref{eq:solvingForS} thus yields
\begin{equation}
    S_{\gamma}\simeq  \frac{\disc_{\theta_{\omega_\gamma}}(\cFpert)}{\e^{-\frac{\omega_{\gamma}}{g_s}}g_s^{-2}\sS^{\theta_{\omega_\gamma}^-}(\cF^{(\gamma)})}
\end{equation}
for small values of $g_s$. Numerically, the Stokes discontinuity is obtained by summing over the residues of the poles of $\text{Padé}(\hatFpert)$ which lie on the ray $\ray_{\theta_{\gamma}}$. Depending on the degree of precision required, $\sS^{\theta_{\omega_\gamma}^-}(\cF^{(\gamma)})$ can be approximated by its asymptotic expansion $\cF^{(\gamma)}$, or by the numerical Laplace transform of the Padé approximant of $\hatFpert$. 

\item \label{method2}If $\omega_{\gamma}$ is a dominant singularity, i.e. if there is no other singularity $\omega_{\gamma'}$ with $\vert \omega_{\gamma'}\vert \simeq \vert\omega_{\gamma}\vert$, then it governs the asymptotics of $\cFpert_g$ by the discussion in Appendix \ref{app:asympCoeff}. Keeping track of the shift $m=2g-3$ when comparing the indexing of the coefficients $\cF_g$ of $\cFpert$ with that of the coefficients $c_m$ of $\tildephi$ as in Appendix \ref{app:asympCoeff}, and the shift $h=k+2$ when comparing the indexing of the coefficients $\cF^{(\gamma)}_h$ of $\cF^{(\gamma)}$ introduced in \eqref{eq:FgammaCoeffs} and that of the coefficients $c^\omega_k$ introduced in \eqref{eq:defckomega}, we arrive at\footnote{Such so-called large order relations were recently studied in detail in \cite{Marinissen:2023ttp}.}
\begin{equation} \label{eq:Fgasymp}
    \cFpert_g\sim -\frac{S_{\gamma}}{2\pi \ii}\sum_{h=0}^N\frac{(2g-2-h)!}{(\omega_{\gamma})^{2g-1-h}}\cF^{(\gamma)}_h 
\end{equation}
for $g \gg 1$ and $N$ fixed. To numerically determine $S_\gamma$, we consider the sequence
\begin{equation}
   s_{g,N}= -\frac{2\pi \ii \,\cFpert_g}{\sum_{h=0}^{N}\frac{(2g-2-h)!}{(\omega_{\gamma})^{2g-1-h}}\cF^{(\gamma)}_h}
\end{equation}
for some $N < g$. The quotient $s_{g,N}$ will approach $S_{\gamma}$ when $g$ becomes large. We can improve the estimate by using Richardson extrapolation.

As discussed in Appendix \ref{app:asympCoeff}, the right-hand side of \eqref{eq:Fgasymp}, divided by $(2g-3)!$, is an approximation to the integral $I(\cH_{\omega_\gamma})$,
\begin{align} 
    I(\cH_{\omega_\gamma}) = -\frac{1}{2\pi \ii}\sum_{k=-2,-1}  \frac{(2g-4-k)!}{(2g-3)!}\frac{\hatF^{\omega_\gamma}_{k}}{\omega_{\gamma}^{2g-3-k}} - \frac{1}{2\pi \ii} \int_{\omega_{\gamma}}^{R\e^{\ii \theta_{\omega_\gamma}}} \frac{\hatF^\omegaG(\zeta-\omega_{\gamma})}{\zeta^{2g-2}} d\zeta \,,
\end{align}
around the Hankel-like contour $\cH_{\omega_\gamma}$ depicted in Figure \ref{fig:asymCoeff} in the appendix. If the error terms $\cE_\infty$ and $\cE_{r_{N+1}}$ underlying this approximation, introduced in \eqref{def:Einf} and \eqref{def:Ernp1} prove too large, we can instead approximate this integral numerically by modeling $\hatF^\omega$ by its Padé approximant, such that
\begin{align} \label{eq:asympPade}
    I(\cH_{\omega_\gamma}) \sim -\frac{1}{2\pi \ii}\sum_{k=-2,-1}  \frac{(2g-4-k)!}{(2g-3)!}\frac{\hatF^{\omega_\gamma}_{k}}{\omega_{\gamma}^{2g-3-k}} -  \frac{1}{2\pi \ii} \int_{\omega_{\gamma}}^{R\e^{\ii \theta_{\omega_\gamma}}} \frac{\pade\left(\hatF^{\omega}(\zeta-\omega_\gamma)\right)}{\zeta^{2g-2}} d\zeta \,,
\end{align}
In practice, the approximation of $I(\cH_{\omega_{\gamma}})$ underlying \eqref{eq:Fgasymp} proved sufficient for many purposes. We have recourse to \eqref{eq:asympPade} however when subtracting asymptotics, see below.

\item \label{method3} When multiple singularities with different Stokes constants contribute comparably to the asymptotics of the perturbative amplitudes $\cFpert_g$, this asymptotics cannot directly be used to extract Stokes constants. Instead, we can leverage the two facts that 
\begin{enumerate} 
    \item the Padé approximant of $\hatFpert$ approximates the analytic continuation of $\hatFpert$ around each of the singularities,
    \item for any singularity $\omega_\gamma$, we can compute the coefficients $\hatF^{\omega_{\gamma}}_k$, which upon multiplication by the Stokes constant determine the analytic continuation of $\hatFpert$ in a vicinity of the singularity, to a given order.
\end{enumerate}
We can then approximate the integral 
\begin{align}
    I_{g,\omegag}(\cFpert) = \frac{(2g-3)!}{2\pi\, \ii} \int_{\cH_{\omegag}} \frac{\hatFpert(\zeta)}{\zeta^{2g-2}}d\zeta
\end{align}
(where by abuse of notation, we are identifying $\hatFpert$ and its analytic continuation to a neighborhood of $\omegag$) in two ways:
\begin{enumerate}
    \item By modeling $\hatFpert(\zeta)$ via its Padé approximant. As the Padé approximant models the logarithm via a series of poles, this amounts to summing the residues of the integrand along the interval enveloped by $\cH_{\omegag}$:
    \begin{align} \label{eq:intGomegaPade}
        I^{\mathrm{P}}_{g,\omegag}(\cFpert) = \frac{(2g-3)!}{2\pi\, \ii} \int_{\cH_{\omega_{\gamma}}} \frac{\pade(\hatFpert)}{\zeta^{2g-2}}d\zeta = -  (2g-3)!\sum_{\substack{\mathrm{poles}\, \zeta_i \\ \arg(\zeta_i) \simeq \arg(\omega_{\gamma})}} \res_{\zeta_i}\left(\frac{\pade(\hatFpert)}{\zeta^{2g-2}}\right) \,.
    \end{align}
    Here, the superscript `P' stands for `Padé'.
    \item By approximating $\hatFpert(\zeta)$ via a truncation of 
    \eqref{eq:FhatNearSing}. Following the steps in Appendix \ref{app:asympCoeff}, this yields
    \begin{align}
        I^{\mathrm{I},\,N}_{g,\omegag}(\cFpert) = -\frac{1}{2\pi \ii}\sum_{h=0}^N\frac{(2g-2-h)!}{(\omega_{\gamma})^{2g-1-h}}\cF^{(\gamma)}_h \,,
    \end{align}
\end{enumerate}
where the superscript `I' stands for `instanton'. As argued in Appendix \ref{app:asympCoeff}, the approximation
\begin{align}
    I^{\mathrm{P}}_{g,\omegag}(\cFpert) \simeq S_{\gamma} \, I^{\mathrm{I},\,N}_{g,\omegag}(\cFpert)
\end{align}
at fixed $N$ improves with increasing $g$, such that the sequence
\begin{align}
    s_{g,N,\omega_{\gamma}} = \frac{ I^{\mathrm{P}}_{g,\omegag}(\cFpert)}{I^{\mathrm{I},\,N}_{g,\omegag}(\cFpert)}    
\end{align}
should approach $S_\gamma$ as $g$ becomes large.
\end{enumerate}

\paragraph{Subtracting singularities} To study subleading singularities, we need to subtract more dominant ones. Based on the asymptotic formula \eqref{eq:Fgasymp}, the most straight-forward subtraction of leading asymptotics due to a singularity at $\omega_\gamma$ we can perform is by defining reduced amplitudes 
\begin{align} \label{eq:subtractAsymp}
    \FgredN = \cFpert_g + \frac{S_{\gamma}}{2\pi \ii}\sum_{h=0}^N\frac{(2g-2-h)!}{(\omega_{\gamma})^{2g-1-h}}\cF^{(\gamma)}_h \,.
\end{align}
At the level of the Borel transform $\hatFpert$, this corresponds to truncating the coefficient of the $\log$ singularity at $\omega_\gamma$ after a finite number of terms. This is reasonable, as leading order perturbative terms of an $(n+1)$-instanton contribution can numerically dominate higher order perturbative terms of an $n$-instanton contribution.

For our study of compact Calabi-Yau threefolds in \cite{Douaud:2024khu}, the subtraction \eqref{eq:subtractAsymp} worked remarkably well, in that the singularity at $\omega_{\gamma}$ was no longer visible in the Borel plane of $\FgredN$ for many examples. This procedure fails however for local $\IP^2$. Following the discussion under \ref{method2} above, what does work is approximating the asymptotics via 
\begin{align}\label{eq:exactasympt}
  \cF^{\mathrm{asymp}}_g= \frac{1}{2\pi\ii}\sum_{h=-k_{\omega}}^{-1}\frac{(2g-4-h)!}{\omega^{2g-3-h}}(\Delta_{\omega}^+\cFpert)_h+\frac{(2g-3)!}{2\pi \ii}\int_{\ray_{\omega}^-}\frac{(\mu\circ\Delta_{\omega}^+)\hatFpert(\zeta)}{(\zeta+\omega)^{2g-2}}d\zeta
\end{align}
with $\ray_{\omega}^-$ a contour slightly to the left of $d_{\omega}$. We are here using the notation introduced in \eqref{eq:notationCoeffSingPoles} and \eqref{eq:projection}. We then define
\begin{align}
    \Fgred = \cFpert - \cF_g^{\mathrm{asymp,\,P}} \,,
\end{align}
with $\cF_g^{\mathrm{asymp,\,P}}$ obtained from \eqref{eq:exactasympt} by replacing $(\mu\circ\Delta_{\omega}^+)\hatFpert(\zeta)$ by its Padé approximant.

$(\mu\circ\Delta_{\omega}^+)\hatFpert(\zeta)$ will typically have branch points and poles on the ray $\ray_\omega$. When we subtract beyond the leading asymptotics, it is important, as we have chosen to integrate along $\ray^-_{\omega}$ to the left of $\ray_\omega$, that we analytically continue to the right of all singularities encountered along $\ray_\omega$ before evaluating the alien derivative at $\omega$, so that our path of analytic continuation does not cross the integration contour.

\hspace{0.2cm}

\textit{Note: }Subtracting singularities requires the knowledge of $S_{\omega_{\gamma}}$. The subtraction is numerically very sensitive to the correct choice of $S_{\omega_{\gamma}}$. Typically, the subtraction only leads to the targeted singularity disappearing from the Borel plane if the integer invariants from which $S_{\omega_{\gamma}}$ is computed are identified precisely. If we have a prediction for the value of $S_{\omega_{\gamma}}$, we can thus verify it by checking whether the subtraction procedure with this choice succeeds.

\subsection{Local $\IP^2$ at large radius: beyond $n_0^d$}\label{sec : BPSnum}
At large radius in the large radius frame, the topological string amplitude is captured by the Gopakumar-Vafa formula \cite{Gopakumar:1998jq}. By considering the Laurent/Taylor series for the powers of the sine function occurring in this formula, one arrives at the following expression for the topological string amplitude at genus $g$ in terms of Gopakumar-Vafa invariants (see e.g. \cite[Sections 3.2.1 and 5.7.1]{Gu:2023mgf}:
\begin{equation}
    \cFpert_g=\sum_{d>0,n\in \IZ}[n^d_0\frac{B_{2g}\alpha^{2g-2}}{2g(2g-2)}+\mathcal{T}_{g,d,n}]\frac{1}{(d \tilde{T}+n x^0)^{2g-2}}
    \label{eq:GVformula}
\end{equation}
with $\mathcal{T}_{g,d,m}$ given by
\begin{equation}
   \mathcal{T}_{g,d,n}=\alpha^{2g-2}(2g-3)!(-1)^{g-1}[\frac{2(-1)^{g}n^d_2}{(2g-2)!}+...-\frac{g-2}{12}n^d_{g-1}+n^d_g] \,.
\end{equation}
The factorial behavior $B_{2g} \sim (2g)!$ of the Bernoulli numbers implies that the asymptotics of $\cFpert_g$ will be governed by the genus 0 GV invariants \cite{Gu:2023mgf}. The large $g$ asymptotics of the remaining contribution
\begin{align}
    \mathcal{T}_g=\sum_{d>0,m\in \IZ}\mathcal{T}_{g,d,m}\frac{1}{(d\, \tilde{T}+m \,x^0)^{2g-2}}
\end{align}
is more difficult to tackle due to the $g$-dependence of $n^d_g$. In this subsection, we will identify singularities in the Borel plane beyond those predicted by the leading Bernoulli dependence of $\cFpert_g$, thus demonstrating that also $\cT_g$ is encoded in the Borel plane.

We will work at $z=10^{-6}$ in the large radius frame. Subtracting the leading singularities, which arise at $\propto  \tilde{T}+n \,x^0$, $n =-2,\ldots,4$, and are associated to the GV invariants $n_0^d$,\footnote{Note that in all the following, we work with amplitudes $\cF_g$ for $\KPt$ from which the constant map contribution has been subtracted, see e.g. \cite{Gu:2022sqc}.} allows us to uncover four new singularities associated to the charges
\begin{equation}
   \gamma_1=[1,-2,0)\,,\quad \gamma_2=[1,-1,0)\,,\quad \gamma_3=[1,0,1)\,, \quad \gamma_4=[1,1,3) \,.
 \end{equation}
We numerically determine that the associated Stokes constants all coincide and equal 1. Upon subtracting these, we uncover a new set of singularities associated to the charges
\begin{align}
    {\gamma_{i,n} = \gamma_i - n[0,0,1)}\,, \quad i =1, \ldots, 4
\end{align} 
at $n=1$. We can repeat this procedure another two times before the loss of precision washes out the result. We record the numerically determined Stokes constants in Table \ref{tab:StokesHilbert}.
\begin{table}[h]
    \centering
    \begin{tabular}{c|c|c|c|c}
    & $\gamma_{i,1}$ & $\gamma_{i,2}$ & $\gamma_{i,3}$ & $\gamma_{i,4}$ \\ \hline
    $S_{\gamma_{i,n}}$ & 1 & 3 & 9 & 22 
    \end{tabular} \,.
    \caption{The Stokes constants of the singularities associated to the charges $\gamma_{i,n}$, for $i=1,\dots, 4$.} \label{tab:StokesHilbert}
\end{table}
The Borel planes upon each subtraction are depicted in Figures \ref{First and second layer} and \ref{third and fourth layer}. 

The charge $\gamma_3 - n[0,0,1) = [1,0,1-n)$ describes a bound state of a D4-brane with $n$ D0-branes. The moduli space can be identified with the Hilbert scheme of $n$ points (intuitively the position of the D0-branes) on $\IP^2$. The associated Poincaré polynomial was computed in \cite{Goettsche} and reproduces the values in Table \ref{tab:StokesHilbert}. These numbers were also reproduced via the scattering diagram method in \cite{Bousseau:2022snm}.

As the four charges $\gamma_{i,n}$, $i=1,\ldots,4$ at fixed $n$ are related via successive tensoring by a line bundle $\cO(1)$ (spectral flow),\footnote{Tensoring by $\cO(n)$ shifts the charge via $[r,d,\chi)\rightarrow [r,d+nr,\chi+n(d+\frac{3}{2}r)+\frac{n^2}{2}r)$.} their DT invariants coincide.

All numerically determined Stokes constants thus coincide with the respective DT invariants in the large radius chamber:
\begin{equation}
     S_{\omega_{\gamma_{i,n}}}=\bar{\Omega}(\gamma_{i,n},z=0),\quad   i=1,\ldots,4\,,\,\,\,n=0,\ldots,3 \,.
\end{equation}
\begin{figure}[H]
    \centering
    \includegraphics[scale=0.4]{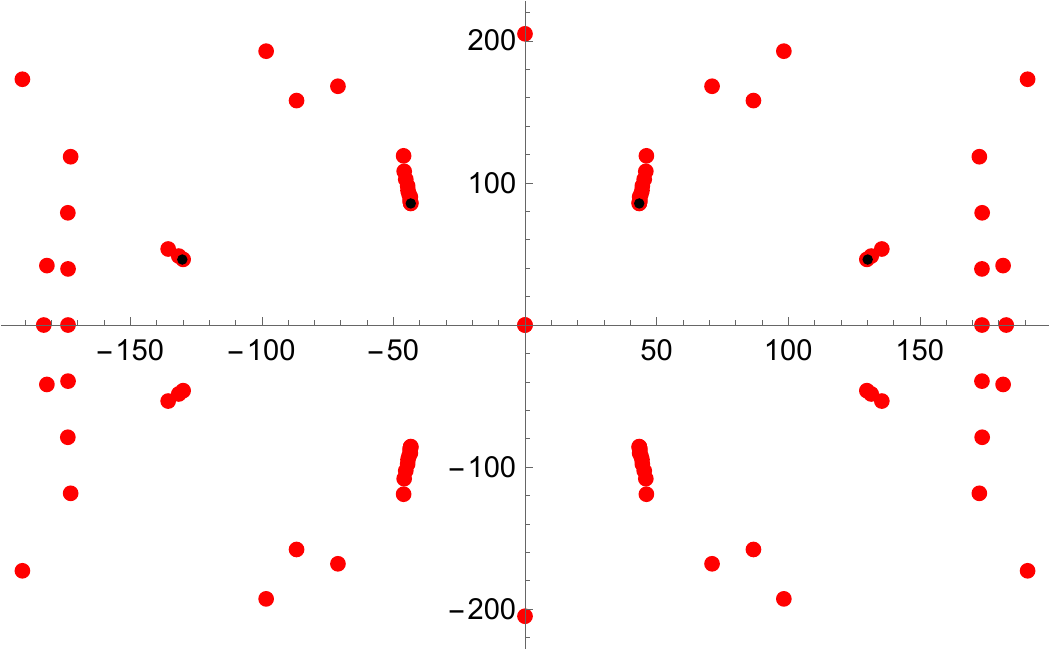}
    \includegraphics[scale=0.4]{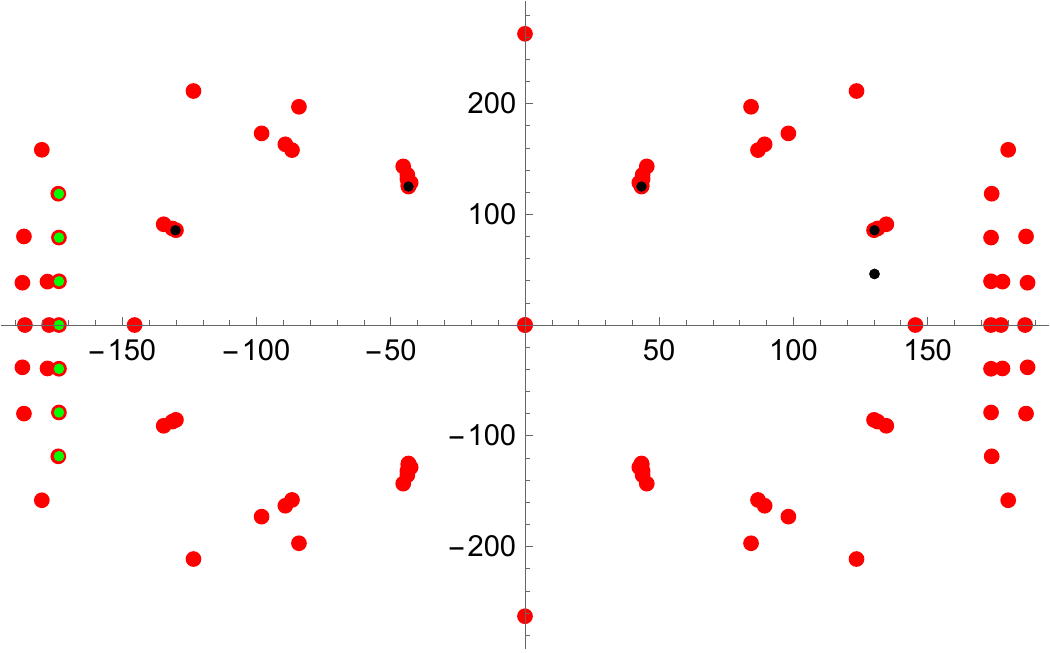}
    \caption{The Borel plane of local $\IP^2$ at $z=10^{-6}$ upon subtraction of leading (left) and leading and subleading (right) singularities. Black dots indicate the points $\omega_{\gamma_{i,n}}$ with $n=0$ on the left and $n=1$ on the right. Green dots are associated to D2-D0 charges $[0,d,n)$.}
    \label{First and second layer}
\end{figure}
 \begin{figure}[H]
    \centering
    \includegraphics[scale=0.4]{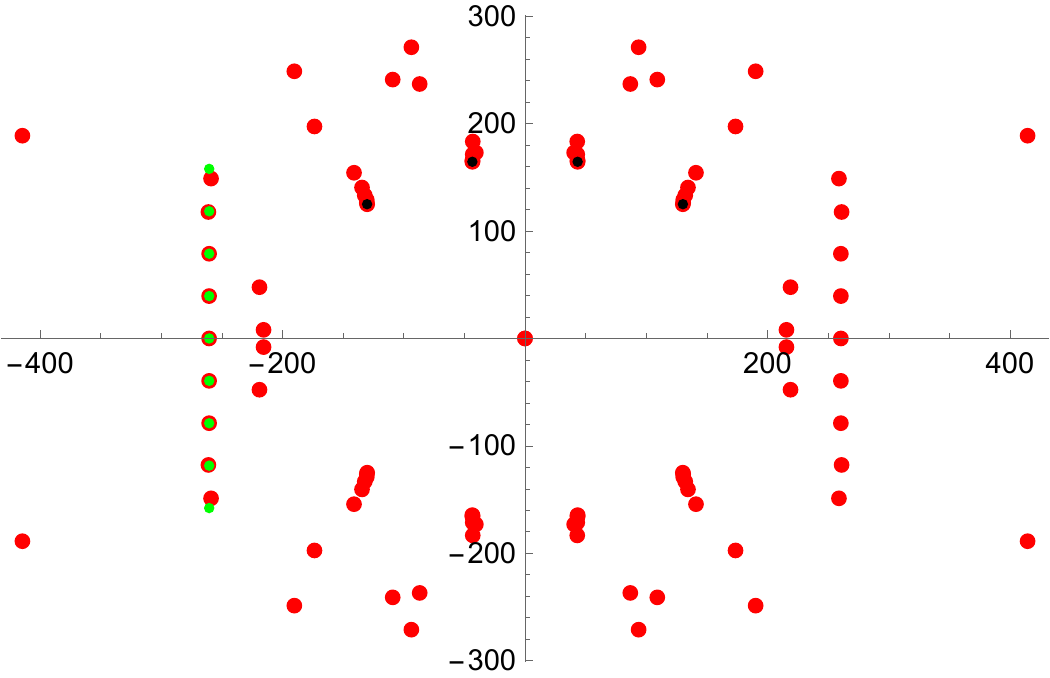}
    \includegraphics[scale=0.4]{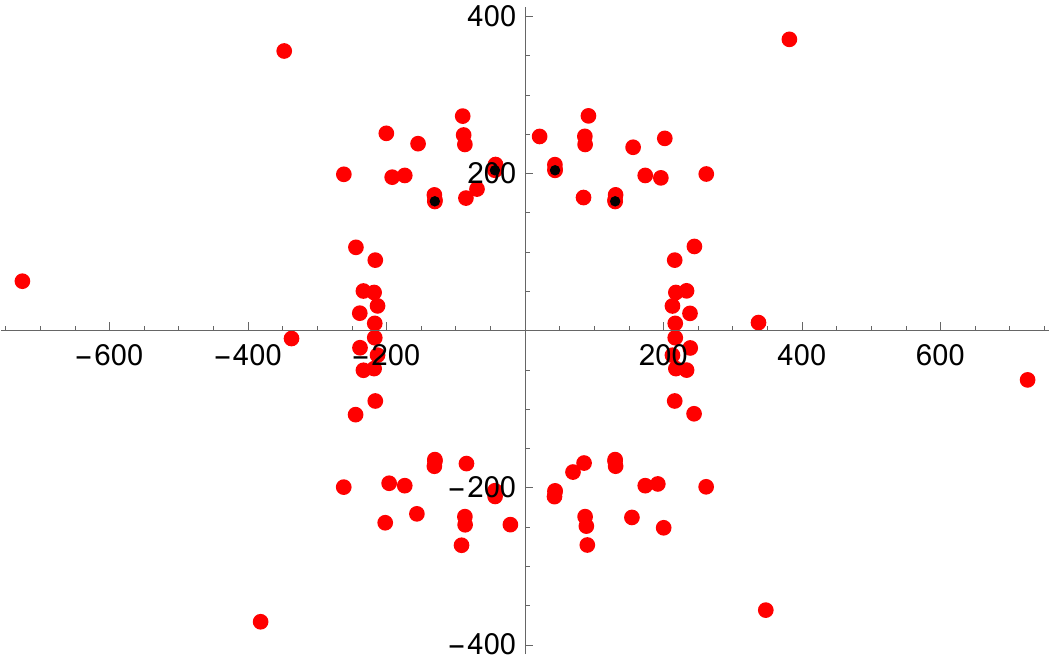}
    \caption{The Borel plane of local $\IP^2$ at $z=10^{-6}$ upon further subtractions. Black dots indicate the points $\omega_{\gamma_{i,n}}$ with $n=2$ on the left and $n=3$ on the right.  Green dots are associated to D2-D0 charges $[0,d,n)$.}
    \label{third and fourth layer}
\end{figure}

\subsection{$\quintic$ at large radius: beyond $n_0^d$}
We can repeat the computation of the previous section in the compact case as well \footnote{The Gopakumar-Vafa formula  for the quintic follows from the replacement $(\tilde{T},x^0)\rightarrow (X^1,X^0)$ in \eqref{eq:GVformula}. }. We consider the point $z=10^{-1}\mu$ in the complex structure moduli space of $\mirrorQ$ in the large radius frame. The left panel of Figure \ref{fig:X5LRp0} depicts the Borel plane before subtracting the asymptotics due to the singularities at $\omega_{d,n} = \alpha (d X^1+n X^0)$; the dominant singularities on the imaginary axis are the constant map contributions at  $d=0,n=\pm1$. Upon subtraction (right panel), the singularity at $\alpha P_0$ becomes more pronounced. Unfortunately, our numerical precision does not suffice to explore beyond the $\alpha P_0$ singularity.
\begin{figure}[H]
    \centering
    \includegraphics[scale=0.5]{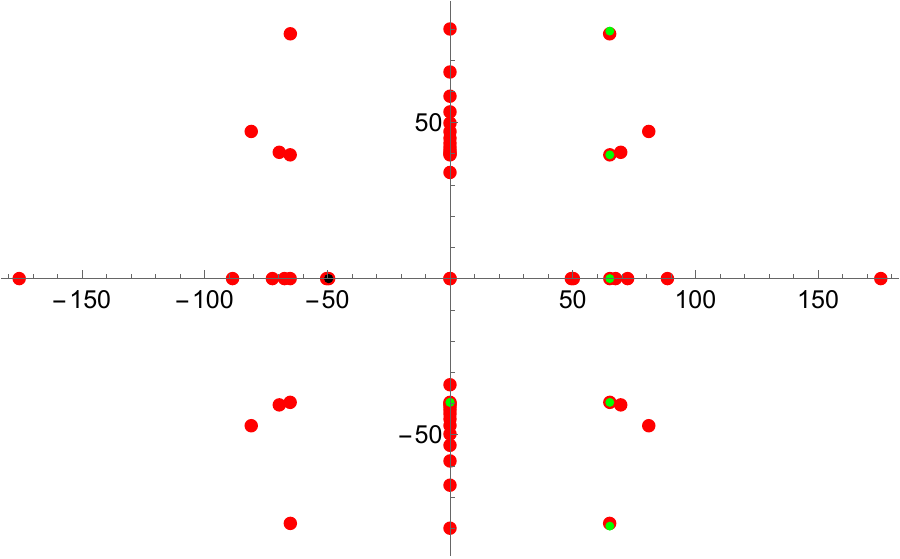}
    \includegraphics[scale=0.5]{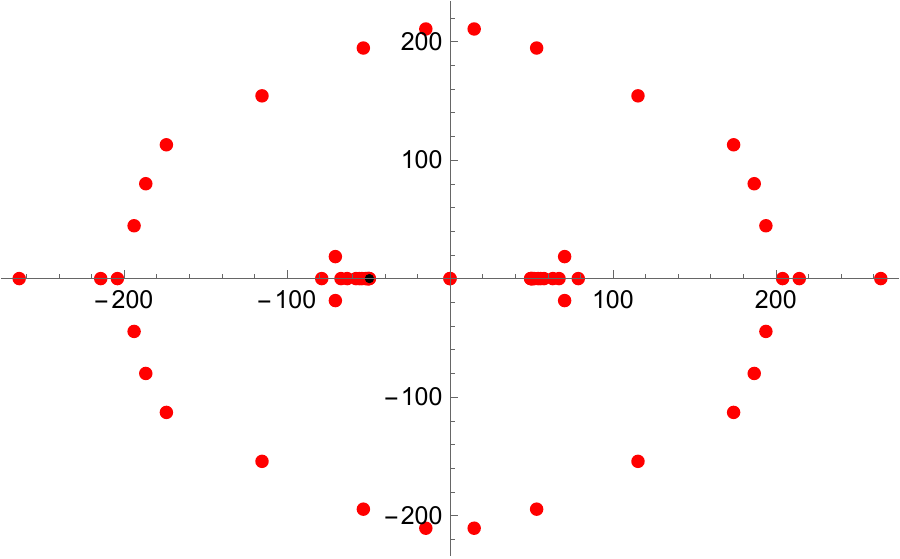}
    \caption{Borel plane of $\mirrorQ$ in the large radius frame at $z=10^{-1}\mu$ before (left) and after(right) subtracting the leading asymptotics. Green points indicate singularities of the form $\omega_{d,n}$: $(d,n)=(0,1)$ lies on the imaginary axis, $(1,n)$, $n= -2 ,\dots, 2$, lie beyond the $\alpha P^0$ point on the real axis. The black point lies at $\alpha P_0$. The singularity at this point becomes more pronounced upon subtracting those at the green points.}
    \label{fig:X5LRp0}
\end{figure}

\subsection{Numerical tests of the multi-instanton structure}
In this section, we will numerically test the results presented in the previous section regarding successive applications of alien derivatives. The multi-instanton sector of $\KPt$ was already studied in \cite{Couso-Santamaria:2014iia, Gu:2022sqc}.

\subsubsection{Local $\IP^2$ around conifold point: testing three-instanton corrections}
We begin by considering successive alien derivatives which are all evaluated at multiples of the same primitive singularity. This is not novel, as it amounts to e.g. computing higher instanton corrections to the Stokes automorphism. In fact, we will push the computation in \cite[Section 4.1]{Gu:2022sqc} of the discontinuity due to the conifold singularity at $\gamma_c = [1,0,1)$ from two to three instanton order. Operationally, however, we are using the differential operators $\alienDiffOp_\omega$ introduced in \eqref{eq:alienDiffOp} to compute the instanton amplitudes, so the analysis in this subsection can be seen as a warm-up exercise to the novel computations in Section \ref{s:resurgenceOneInstanton}, in which we consider successive applications of $\alienDiffOp_\omega$ at unaligned singularities to study the Borel plane of a one-instanton amplitude $\cF^{(\gamma)}$. 

We will work in the vicinity of the conifold point at $z=\frac{4}{3}z_c$ and in the large radius frame. The leading singularity here is associated to the charge $\gamma_c=[1,0,1)$, followed by those associated to $2\gamma_c$ and 3$\gamma_c$. We are interested in studying the higher instanton contributions to the Stokes discontinuity along $\ray_{\omega_{\gamma_c}}$ stemming from the charges $2\gamma_c$ and $3\gamma_c$. By \eqref{eq:discontinuity},  \eqref{eq:StokesAutoDeltaPlusExp} and \eqref{eq:Dexplike}, we have
\begin{align} \label{eq:discCfld3}
    \disc_{\theta_{\gamma_c}}\cFpert =\sS^{\theta_{\gamma_c}^-}\Big(& \big(\dotDeltagammaN{c}+\frac{1}{2}\dotDeltagammaN{c}^2+\dotDeltagammaKN{c}{2}+\frac{1}{6}\dotDeltagammaN{c}^3+ \\ 
    &+\frac{1}{2}(\dotDeltagammaKN{c}{2}\dotDeltagammaN{c}+\dotDeltagammaN{c}\dotDeltagammaKN{c}{2})+  \dotDeltagammaKN{c}{3}\big)\cFpert \Big)+O(\e^{\frac{-4\omega_{\gamma_c}}{g_s}}) \,. \nonumber
\end{align}
We can evaluate the successive alien derivatives via \eqref{eq:multiDeltaZ} up to knowledge of the Stokes constants. These can be e.g. extracted from the gapped form of the topological string amplitudes \cite{Couso-Santamaria:2013kmu,Couso-Santamaria:2014iia}: the only primitive contribution occurs at charge $\gamma_c$, where $S_{\gamma_c} = 1$. Hence, $S_{k\omega_{\gamma_c}} = \tfrac{1}{k^2}$ for $k\in \IN^*$. We can thus express \eqref{eq:discCfld3} in terms 
 of the functions $\cF^{(\gamma_1\vert...\vert \gamma_n)}$ introduced in \eqref{eq:defFgammas}:
\begin{align}
      \disc_{\theta_{\gamma_c}}\cFpert=\sS^{\theta_{\gamma_c}^-}\Bigg(&\frac{\e^{-\frac{\omega_{\gamma_c}}{g_s}}}{g_s^2}\cF^{(\gamma_c)}+\frac{\e^{-\frac{2\omega_{\gamma_c}}{g_s}}}{g_s^3}\Big(\frac{\cF^{(\gamma_c\vert\gamma_c)}}{2}+g_s\frac{\cF^{(2\gamma_c)}}{4}\Big)\\
     &+ \frac{\e^{-\frac{3\omega_{\gamma_c}}{g_s}}}{g_s^4} \Big(\frac{\cF^{(\gamma_c\vert\gamma_c\vert \gamma_c)}}{6}+g_s\frac{\cF^{(2\gamma_c\vert \gamma_c)}}{4}+g_s^2\frac{\cF^{(3\gamma_c)}}{9}\Big)\Bigg)+O(\e^{\frac{- 4\omega_{\gamma_c}}{g_s}})\,. \nonumber
\end{align}
To study the 2 instanton contribution stemming from the singularity at $2\omega_{\gamma_c}$, we consider the quotient
\begin{equation}
  Q_2(g_s)=  \frac{\disc_{\theta_{\gamma_c}}\cFpert-\sS^{\theta_{\gamma_c}^-}(\frac{\e^{-\frac{\omega_{\gamma_c}}{g_s}}}{g_s^2}\cF^{(\gamma_c)})}{\sS^{\theta_{\gamma_c}^-}(\frac{\e^{-\frac{2\omega_{\gamma_c}}{g_s}}}{g_s^3}(\frac{\cF^{(\gamma_c\vert\gamma_c)}}{2}+g_s\frac{\cF^{(2\gamma_c})}{4}))} \,,
\end{equation}
which satisfies $Q_2=1+O(\e^{-\frac{\omega_{\gamma_c}}{g_s}})$. We show the numerical values that we obtain for $Q_2$ for a wide range of values of $g_s$ in Table \ref{tab:Q2gs}. Our numerical error proves smaller than the theoretically expected deviation from the value 1 due to ignoring three instantons.
\begin{table}[h]
\centering
\begin{tabular}{c|c|c}
$g_s\ee^{-\ii\theta_{\gamma_c}}$ & $Q_2(g_s)$ & order of $\ee^{-\frac{\omega_{\gamma_c}}{g_s}}$ \\
\hline
$\frac{1}{2}$  & $0.990728 + 0.00310664\,\ii$         & $10^{-1}$ \\
$\frac{1}{5}$  & $0.999553 + 0.000129395\,\ii$        & $10^{-3}$ \\
$\frac{1}{10}$ & $0.999996 + 1.24626\times10^{-6}\,\ii$ & $10^{-5}$ \\
$\frac{1}{15}$ & $1. + 1.10194\times10^{-8}\,\ii$     & $10^{-7}$ \\
$\frac{1}{20}$ & $1. + 9.05822\times10^{-11}\,\ii$    & $10^{-9}$ \\
$\frac{1}{25}$ & $1. + 7.08262\times10^{-13}\,\ii$    & $10^{-11}$ \\
$\frac{1}{30}$ & $1. - 7.45886\times10^{-14}\,\ii$    & $10^{-13}$ \\
\hline
\end{tabular}
\caption{Values of $Q_2(g_s)$ for $g_s =\frac{1}{k}\e^{\ii\theta_{\gamma_c}}$.}
\label{tab:Q2gs}
\end{table}
We can now study the 3 instanton contribution stemming from the singularity at $3\omega_{\gamma_c}$ by evaluating the quotient 
\begin{equation}
    Q_3(g_s)= \frac{\disc_{\theta_{\gamma_c}}\cFpert-\sS^{\theta_{\gamma_c}^-}(\frac{\e^{-\frac{\omega_{\gamma_c}}{g_s}}}{g_s^2}\cF^{(\gamma_c)})+\frac{\e^{-\frac{2\omega_{\gamma_c}}{g_s}}}{g_s^3}(\frac{\cF^{(\gamma_c\vert\gamma_c)}}{2}+g_s\frac{\cF^{(2\gamma_c})}{4}))}{\sS^{\theta_{\gamma_c}^-}(\frac{\e^{-\frac{3\omega_{\gamma_c}}{g_s}}}{g_s^4}(\frac{\cF^{(\gamma_c\vert\gamma_c\vert \gamma_c)}}{6}+g_s\frac{\cF^{(2\gamma_c\vert \gamma_c)}}{4}+g_s^2\frac{\cF^{(3\gamma_c)}}{9}))}.
\end{equation}
Again, this quotient should satisfy $Q_3=1+O(\e^{-\frac{\omega_{\gamma_c}}{g_s}})$. From the values recorded in Table \ref{tab:Q3gs}, we see that $Q_3 \sim 1$; however, the numerical error is now larger than the correction expected from the 4 instanton contribution.
\begin{table}[h]
\centering
\begin{tabular}{c|c }
$g_s\ee^{-\ii\theta_{\gamma_c}}$ & $Q_3(g_s)$ \\
\hline
$\frac{1}{5}$  & $0.998178 - 0.00480932\,\ii$ \\
$\frac{1}{10}$ & $1.00074 - 0.00320779\,\ii$ \\
$\frac{1}{15}$ & $1.00105 - 0.00247649\,\ii$ \\
$\frac{1}{20}$ & $1.00098 - 0.00217809\,\ii$ \\
$\frac{1}{25}$ & $1.03234 + 0.0107923\,\ii$ \\
\hline
\end{tabular}
\caption{Values of $Q_3(g_s)$ for $g_s = \frac{1}{k}\ee^{\ii\theta_{\gamma_c}}$.}
\label{tab:Q3gs}
\end{table}

A second test of the the instanton contributions from $2\omega_{\gamma_c}$ and $3 \omega_{\gamma_c}$ is provided by the subleading asymptotics of $\cF_g$. In order to see the contribution from $2 \omega_{\gamma_c}$, we first subtract the leading asymptotics due to the singularity at $\omega_{\gamma_c}$ following the discussion in Section \ref{sec:numtools}. This yields the sequence $\Fgredn{1}$, whose leading asymptotics is governed by 
\begin{equation}
    \cF_{2\gamma_c}=g_s^2\Delta_{2\omega_{\gamma_c}}^+(\cF)=g_s S_{2\omega_{\gamma_c}}\cF^{(2\gamma_c)}+\frac{1}{2} S_{\omega_{\gamma_c}}^2\cF^{(\gamma_c \vert \gamma_c)}=\frac{g_s}{4}\cF^{(2\gamma_c)}+\frac{1}{2}\cF^{(\gamma_c \vert \gamma_c)}
\end{equation}
via the formula
\begin{equation}
    \Fgredn{1} \sim- \frac{1}{\pi\ii}\sum_{h\geq 0}\frac{(2g-1-h)!}{(2\omega_{\gamma_c})^{2g-h}}(\cF_{2\gamma_c})_h \,.
\end{equation}
We can test the contribution of $\cF_{2\gamma_c}$ to the asymptotics by considering the sequence
\begin{equation}
    s^{(2)}_{g,h}=-\frac{\pi \ii(2\omega_{\gamma_c})^{2g-h}}{(2g-h-1)!(\cF_{2\gamma_c})_h}(  \Fgredn{1} +\frac{1}{\pi \ii}\sum_{k=0}^{h-1}\frac{(2g-k-1)!}{(2\omega_{\gamma_c})^{2g-k}}(\cF_{2\gamma_c})_k)
    \label{eq:s2gh} \,,
\end{equation}
which should approach 1 as $g$ grows large. We use Richardson extrapolation to accelerate this approach. Table \ref{tab:sh2} records the result of this extrapolation for the first 16 values of $h$, with the Richardson extrapolation of the limit denoted as $\widehat{s^{\smash[t]{\scriptscriptstyle(2)}\vphantom{\rule{0pt}{0.5ex}}}_{g_{\mathrm{max},h}}}$. At large $g$, we have $s^{(2)}_{g,h}-1=\frac{2 \omega_{\gamma_c}}{2g}\frac{(\cF_{2\gamma_c})_{h+1}}{(\cF_{2\gamma_c})_h}+\cO(\frac{1}{g^2})$. Numerically, the term in $\frac{1}{g}$ at $g=70$ is of order $10^{-1}$. The Richardson extrapolation thus allows us to approach this limit with a precision far greater than this error.
\begin{table}[h!]
\centering
\begin{tabular}{c| c}
$h$ & $\widehat{s^{\scriptscriptstyle(2)}_{70,h}}$ \\
\hline
0  & $1.0000000000000068 + 1.36\times 10^{-14} \ii$ \\
1  & $0.9999999999999704 - 4.32\times 10^{-14} \ii$ \\
2  & $0.999999999999188 - 2.82\times 10^{-13} \ii$ \\
3  & $1.00000000000222 + 2.371\times 10^{-11} \ii$ \\
4  & $1.00000000370 + 1.24\times 10^{-9} \ii$ \\
5  & $1.0000000765 - 2.35\times 10^{-8} \ii$ \\
6  & $1.00000004 + 0\times 10^{-9} \ii$ \\
7  & $0.9999941 - 1.15\times 10^{-5} \ii$ \\
8  & $0.9999972 - 3.0\times 10^{-6} \ii$ \\
9  & $0.9999808 - 1.48\times 10^{-5} \ii$ \\
10 & $1.0000013 - 5.6\times 10^{-6} \ii$ \\
11 & $0.999980 - 2.4\times 10^{-5} \ii$ \\
12 & $0.999710 - 1.0\times 10^{-4} \ii$ \\
13 & $0.99841 - 3.5\times 10^{-4} \ii$ \\
14 & $0.99696 - 1.8\times 10^{-4} \ii$ \\
15 & $0.99783 - 5.2\times 10^{-4} \ii$ \\
\end{tabular}
\caption{Richardson extrapolation applied to the sequence $s^{(2)}_{g,h}$ for $0 \le h \le 15$.}\label{tab:sh2}
\end{table}

Next, we study the contribution from $3\omega_{\gamma_c}$ to the asymptotics. We subtract the contribution from $2\omega_{\gamma_c}$ to obtain $\Fgredn{2}$. Its leading asymptotics is governed by 
\begin{align}
    \cF_{3\omega_{\gamma_c}}&=g_s^3\Delta^+_{3\gamma_c}(\cF)=g_s^2\bar{\Omega}(3\gamma_c)\cF^{(3\gamma_c)}+g_s\bar{\Omega}(\gamma_c)\bar{\Omega}(2\gamma_c)\cF^{(\gamma_c \vert 2\gamma_c)}+\frac{1}{6}\bar{\Omega}(\gamma_c)^3\cF^{(\gamma_c \vert \gamma_c\vert \gamma_c)}\\
 &= \frac{g_s^2}{9}\cF^{(3\gamma_c)}+\frac{g_s}{4}\cF^{(\gamma_c \vert 2\gamma_c)}+\frac{1}{6}\cF^{(\gamma_c \vert \gamma_c\vert \gamma_c)}
\end{align}
via
\begin{equation}
    \Fgredn{2} \sim- \frac{1}{\pi \ii}\sum_{h\geq 0}\frac{(2g-h)!}{(3\omega_{\gamma_c})^{2g+1-h}}(\cF_{3\gamma_c})_h \,.
\end{equation} 
We test the coefficient $(\cF_{3\gamma_c})_h$ by computing
\begin{equation}
    s^{(3)}_{g,h}=\frac{\pi \ii(3\omega_{\gamma_c})^{2g-h+1}}{(2g-h)!(\cF_{3\gamma_c})_h}\left(  \Fgredn{2} +\frac{1}{\pi \ii}\sum_{k=0}^{h-1}\frac{(2g-k)!}{(3 \omega_{\gamma_c})^{2g-k+1}}(\cF_{3\gamma_c})_k\right).
\end{equation}
The results of Richardson extrapolation of this sequence are recorded in Table \ref{tab:sh3}. We clearly have less numerical precision than at two instanton level, yet for the first three values of $h$, the Richardson extrapolation of the sequence yields $\sim 1$ to second decimal order. The theoretical error is ${s^{(3)}_{g,h}=1+\cO(10^{-1})}$. For $h=3$, the errors from our numerical procedure (primarily the use of the Padé approximation) exceed this error.
\begin{table}[h!]
\centering
\begin{tabular}{c |c}
$h$ & $\widehat{s^{\scriptscriptstyle(3)}_{70,h}}$ \\
\hline
0  & $1.00087022 - 0.00110481\ii$ \\
1  & $1.00155 - 0.00785\ii$ \\
2  & $1.001 - 0.027  \ii$ \\
\end{tabular}
\caption{Richardson extrapolation applied to the sequence $s^{(3)}_{g,h}$ for $0 \le h \le 2$.}
\label{tab:sh3}
\end{table}

\subsubsection{$\quintic$ around conifold point: testing two-instanton corrections}
We next perform the analogous study for $\quintic$. As the $\cFpert_g$ are known to lower maximal genus, our numerical precision is only sufficient to study two-instanton corrections

We work at $z = (1-10^{-3})\mu$ and in the large radius frame. The leading and subleading singularity are at $\omega_{\gamma_c}=\alpha P_0$ and $2\omega_{\gamma_c}$ respectively. We obtain better results by studying the asymptotics of $\cFpert_g$, rather than studying the discontinuity of $\cFpert$. We subtract the leading asymptotics in exact analogy to the local $\IP^2$ analysis in the previous subsection, and compute the sequence $s^{(2)}_{g,h}$ defined by \eqref{eq:s2gh}. The Richardson extrapolations of the limit of this sequence for $h=0, \ldots, 6$ are recorded in Table \ref{tab:s2ghX5}. The expected theoretical error is again $\cO(10^{-1})$.

\begin{table}[h!]
\centering
\begin{tabular}{c| c}
$h$ & $\widehat{s^{\smash[t]{\scriptscriptstyle(2)}\vphantom{\rule{0pt}{0.5ex}}}_{64,h}}$ \\
\hline
0  & $1. - 1.7496\times 10^{-8}\,\ii$ \\
1  & $1. + 2.5431\times 10^{-8}\,\ii$ \\
2  & $1. - 1.61335\times 10^{-7}\,\ii$ \\
3  & $0.999998 - 9.92036\times 10^{-6}\,\ii$ \\
4  & $0.997989 + 4.33882\times 10^{-4}\,\ii$ \\
5  & $0.998818 + 9.97854\times 10^{-4}\,\ii$ \\
\end{tabular}
\caption{Richardson extrapolation applied to the sequence $s^{(2)}_{g,h}$ for $0 \le h \le 5$.}
\label{tab:s2ghX5}
\end{table}

\subsubsection{Local $\IP^2$ at large radius: resurgence of the one-instanton amplitude $\cF^{(\gamma_{1,3})}$} \label{s:resurgenceOneInstanton}
As we have argued in Section \ref{s:formalism}, the differential operator $\alienDiffOp_{\omega_{\gamma}}$ introduced in \eqref{eq:alienDiffOp} can be equally well applied to the topological string amplitude $\cFpert$ and to higher instanton amplitudes $\cF^{(\gamma_1| \ldots | \gamma_n)}$. In this subsection, we will study the Borel plane of such an instanton amplitude close to large radius. As in Section \ref{sec : BPSnum}, we choose the point $z=10^{-6}$ in moduli space, and we study the singularity at charge $\gamma = \gamma_{1,3} = [1,-2,-3)$ uncovered there, in the large radius frame. 

The Borel plane of $\cF^{(\gamma)}$ is depicted in Figure \ref{fig:BPF13}. It exhibits the same pattern of singularities as the Borel plane of $\cFpert$, reflecting the fact that the set of singularities $\Omega$ is closed under addition. An immediately visible departure from the Borel plane of $\cFpert$ is the absence of $\IZ_2$ symmetry under reflection through the origin. This symmetry in the case of $\cFpert$ is a consequence of its $g_s \rightarrow -g_s$ symmetry, absent for $\cF^{(\gamma)}$. Thus, singularities at $\omega$ and $-\omega$ will contribute differently to the asymptotics of $\cF^{(\gamma)}$. Numerically, for instance, 
$\cF^{(\gamma\vert \gamma_{2,0})} \ll \cF^{(\gamma\vert -\gamma_{2,0})}$; in the Borel plane depicted in Figure \ref{fig:BPF13}, this asymmetry is reflected in the more pronounced singularity at $-\gamma_{2,0}$.
\begin{figure}[h]
    \centering
    \includegraphics[scale=0.6]{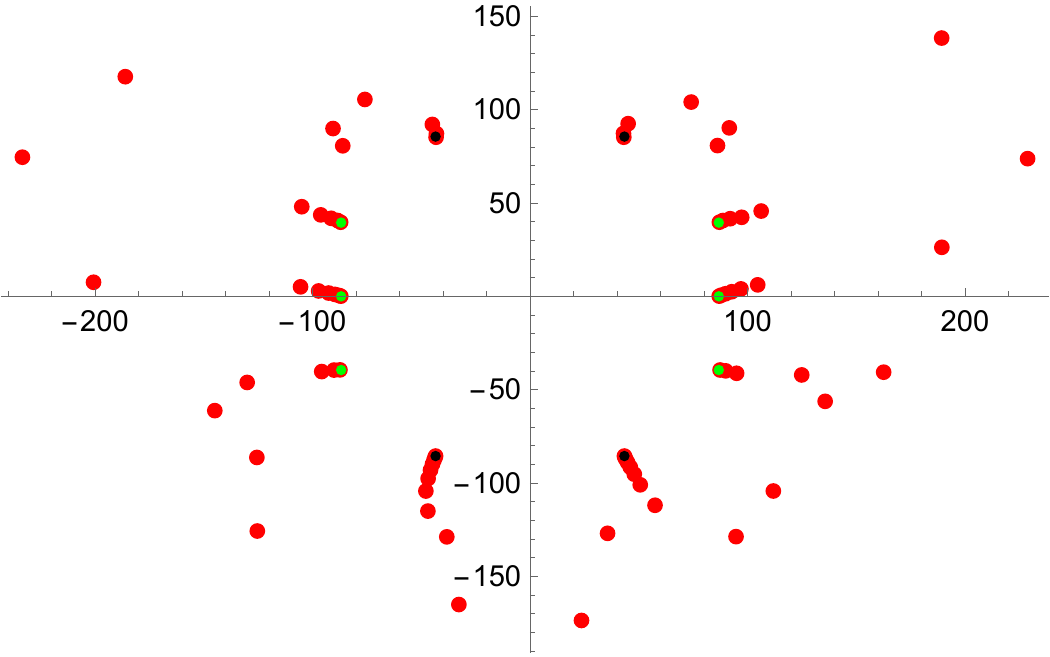}
    \caption{Borel plane of $\cF^{(\gamma_{1,3})}$ of local $\IP^2$ at $z=10^{-6}$ in the large radius frame. Black dots indicate the points $\pm \omega_{\gamma_{2,0}}$ and $\pm \omega_{\gamma_{3,0}}$. Green dots indicate the singularities associated to D2-D0 charges $\pm [0,1,m)$ for $m=-2,-1,0$ .}
    \label{fig:BPF13}
\end{figure}

We wish to numerically test the equation \eqref{eq:alienDFinst}, which predicts
\begin{equation} \label{eq:alienDpredict}
    \pd_{\omega_{\gamma'}}\cF^{(\gamma)}=S_{\omega_{\gamma'}}\frac{\e^{-\frac{\omega_{\gamma'}}{g_s}}}{g_s}\cF^{(\gamma\vert \gamma')} \,.
\end{equation}
We will study the singularity associated to the charge $\gamma'=-\gamma_{2,0}$.  As multiple singularities contribute comparably to the asymptotics of $\cF^{(\gamma)}$, we have recourse to \ref{method3}, introduced in Section \ref{sec:numtools}: the integral\footnote{The shift in the factorial and the power of $\zeta$ in the denominator compared to \eqref{eq:intGomegaPade} is due to the different indexing conventions of the $g_s$-coefficients of $\cFpert$ and $\cF^{(\gamma)}$.} 
\begin{align}
        I^P_{n,\omega_{\gamma'}}(\cF^{(\gamma)}) = \frac{n!}{2\pi\, \ii} \int_{\cH_{\omega_{\gamma'}}} \frac{\pade(\cF^{(\gamma)})}{\zeta^{n+1}}d\zeta = -  n!\sum_{\substack{\mathrm{poles}\, \zeta_i \\ \arg(\zeta_i) \simeq \arg(\omega_{\gamma})}} \res_{\zeta_i}\left(\frac{\pade(\cF^{(\gamma)})}{\zeta^{n+1}}\right) \,.
    \end{align}
can be approximated by
\begin{align}
    I_{n,\omega_{\gamma'}}^{I,h}(\cF^{(\gamma)}) = -\frac{S_{\omega_{\gamma'}}}{2\pi \ii} \sum_{k=0}^h \frac{(n-k)!}{\omega_{\gamma'}^{n+1-k}} \cF_k^{(\gamma|\gamma')}\,.
\end{align}

To study the successive contributions of the instanton coefficients to this approximation, we introduce the sequence
\begin{equation} \label{eq:sggpnh}
    s^{\gamma,\gamma'}_{n,h}=
    \frac{2 \pi \ii\, \omega_{\gamma'}^{n+1-h}}
    {S_{\omega_{\gamma'}} \cF_h^{(\gamma \vert \gamma')}(n-h)!}
    \left(
    I_{n,\omega'}^P (\cF^{(\gamma)})
    + \frac{S_{\omega_{\gamma'}}}{2\pi \ii}
    \sum_{k=0}^{h-1}
    \frac{(n-k)!}{\omega_{\gamma'}^{n+1-k}}
    \cF_k^{(\gamma \vert \gamma')}
    \right).
\end{equation}
We record the values of the Richardson extrapolation of the sequence $s_{n,h}$ for $h = 0, \ldots, 10$ in Table~\ref{tab:limsnh}. For the range of $h$ we consider, the expected theoretical error is $s^{\gamma,\gamma'}_{155,h}=1+\cO(10^{-2})$.
\begin{table}[h]
\centering
\begin{tabular}{c|c}
$h$ & $\widehat{s^{\gamma,\gamma'}_{155,h}}$ \\
\hline
0  & $0.9999532 + 1.5\times10^{-6}\, \ii$ \\
1  & $0.9999713 - 1.2\times10^{-6}\, \ii$ \\
2  & $0.9999365 - 8.0\times10^{-6}\, \ii$ \\
3  & $0.99998705 + 2.9\times10^{-7}\, \ii$ \\
4  & $0.9999922 + 2.6\times10^{-6}\, \ii$ \\
5  & $1.000028 + 5.0\times10^{-5}\, \ii$ \\
6  & $1.000224 - 1.07\times10^{-4}\, \ii$ \\
7  & $0.998947 - 6.10\times10^{-4}\, \ii$ \\
8  & $0.999005 - 8.7\times10^{-5}\, \ii$ \\
9  & $0.998767 + 2.4\times10^{-5}\, \ii$ \\
10 & $0.99852 - 4.8\times10^{-4}\, \ii$ \\
\end{tabular}
\caption{Richardson extrapolation applied to the sequence $s^{\gamma,\gamma'}_{155,h}$ for $0 \le h \le 10$.}
\label{tab:limsnh}
\end{table}

We can also test equation \eqref{eq:alienDpredict} by comparing the discontinuity it implies for $\cF^{(\gamma)}$,
\begin{equation} \label{eq:discPredict}
    \disc_{\theta_{\gamma'}}(\cF^{(\gamma)})
    =\sS^{\theta_{\gamma'}^-}(\pd_{\omega_{\gamma'}}\cF^{(\gamma)})
    +O(\e^{-\frac{2\omega_{\gamma'}}{g_s}}) \,,
\end{equation}
with the numerically computed discontinuity. We record the quotient $Q$ of the numerically determined discontinuity by $\sS^{\theta_{\gamma'}^-}(\pd_{\omega_{\gamma'}}\cF^{(\gamma)})$ in Table \ref{tab:discFgamma}, with the Laplace transform in $\sS$ performed numerically. We do not attempt to estimate the error arising from Padé approximation and numerical integration.
\begin{table}[H]
\centering
\begin{tabular}{c|c}
$g_s\ee^{-\ii \theta_{-\gamma_{2,0}}}$ &
Q\\
\hline
$3$ & $0.999826 - 3.85135\times10^{-5}\,\ii$ \\
$2$ & $0.999999 + 1.73965\times10^{-6}\,\ii$ \\
$\frac{3}{2}$ & $1 + 3.29437\times10^{-9}\,\ii$ \\
$1$ & $1 - 1.04575\times10^{-12}\,\ii$ \\
$\frac{1}{2}$ & $1 + 9.92491\times10^{-10}\,\ii$ \\
$\frac{2}{3}$ & $1 - 3.2401\times10^{-12}\,\ii$ \\
$\frac{1}{3}$ & $1.00001 - 4.72328\times10^{-7}\,\ii$ \\
$\frac{1}{5}$ & $1.00036 - 2.49717\times10^{-4}\,\ii$ \\
$\frac{1}{10}$ & $1.00407 - 5.79685\times10^{-3}\,\ii$ \\
\end{tabular}
\caption{The quotient $Q$ of the numerically determined Stokes discontinuity by prediction \eqref{eq:discPredict} for various values of $g_s$.}
\label{tab:discFgamma}
\end{table}
We performed analogous numerical checks for the pairs
\begin{align}
    (\gamma,\gamma') \in \,\left\{ (\gamma_{1,3},-\gamma_{3,0}),(-\gamma_{1,2},\gamma_{2,0}),(\gamma_{1,3},\gamma^1) \right\}\,,\,\,\,\gamma^1=[0,1,1)\,,
\end{align}

with comparable results.

\subsubsection{Local $\IP^2$ at large radius: resurgence of the two-instanton amplitude $\cF^{(\gamma_{1,3}\vert -\gamma_{0,2})}$}
We can also study the Borel plane of $\cF^{(\gamma_{1,3}\vert -\gamma_{0,2})}$ with the same numerical parameters as in the previous section. The  corresponding Borel plane is shown in Figure \ref{fig:BPF1302}. It shares the same features as the Borel plane of $\cF^{(\gamma_{1,3})}$, albeit with less pronounced singularities at $\pm \omega_{ \gamma_{2,0}}$ and $\pm \omega_{ \gamma_{3,0}}$.
\begin{figure}[H]
    \centering
    \includegraphics[scale=0.6]{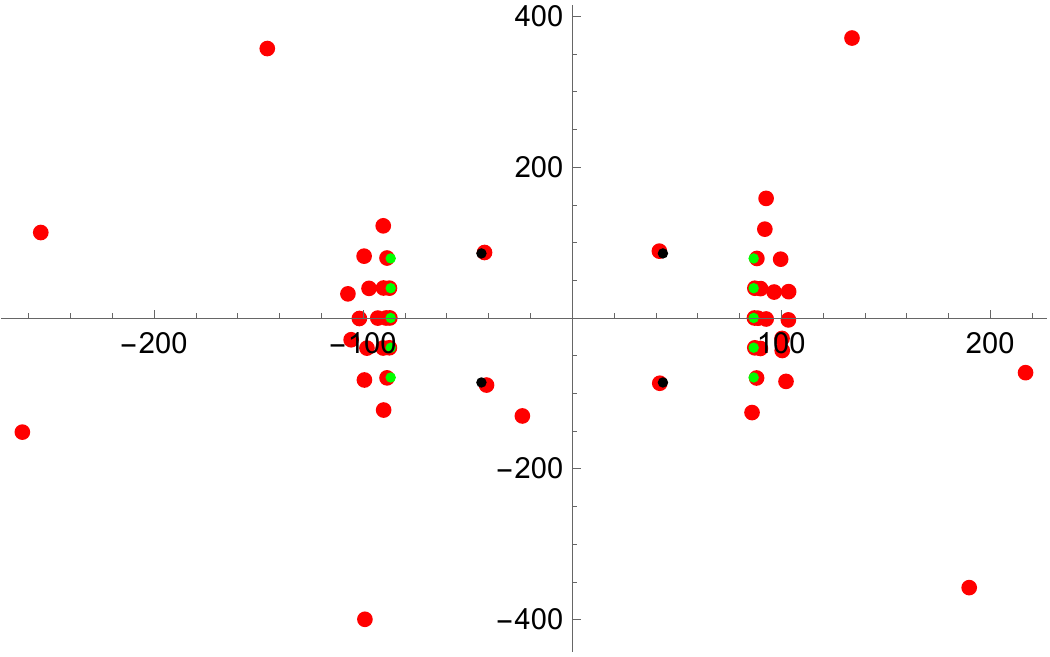}
    \caption{Borel plane of $\cF^{(\gamma_{1,3}\vert -\gamma_{0,2})}$ of local $\IP^2$ at $z=10^{-6}$ in the large radius frame. Black dots indicate the points $\pm \omega_{\gamma_{2,0}}$ and $\pm \omega_{\gamma_{3,0}}$. Green dots indicate the singularities associated to D2-D0 charges $\pm [0,1,m)$ for $m=-3,-2,-1,0,1$.}
    \label{fig:BPF1302}
\end{figure}
We numerically test the equation \eqref{eq:alienDFinst}, which predicts
\begin{equation} \label{eq:alienDpredict2}
    \pd_{\omega_{\gamma''}}\cF^{(\gamma\vert \gamma')}=S_{\omega_{\gamma''}}\frac{\e^{-\frac{\omega_{\gamma''}}{g_s}}}{g_s}\cF^{(\gamma\vert \gamma'\vert\gamma'')} \,,
\end{equation}
where $\gamma''=\gamma^1=[0,1,1)$, with $\omega_{\gamma^1}$ on the real axis, and $S_{\omega_{\gamma}''}=n^0_1=3$, by comparing the discontinuity it implies for $\cF^{(\gamma\vert \gamma')}$,
\begin{equation} \label{eq:discPredict2}
    \disc_{\theta_{\gamma''}}(\cF^{(\gamma\vert \gamma')})
    =\sS^{\theta_{\gamma''}^-}(\pd_{\omega_{\gamma''}}\cF^{(\gamma\vert \gamma')})
    +O(\e^{-\frac{2\omega_{\gamma''}}{g_s}}) \,,
\end{equation}
with the numerically determined discontinuity. We record the quotient $Q'$ of the numerically determined discontinuity by $\sS^{\theta_{\gamma''}^-}(\pd_{\omega_{\gamma''}}\cF^{(\gamma\vert \gamma ')})$ in Table \ref{tab:discFgammagammap}. 
\begin{table}[h]
\centering
\begin{tabular}{c|c}
$g_s\ee^{-\ii \theta_{\gamma^1}}$ &
$Q'$\\
\hline
$1$ & $1.0000033 - 4.68267\times10^{-7}\,\ii$ \\
$2$ & $0.99993 - 3.11781\times10^{-5}\,\ii$ \\
$3$ & $1.00044 + 4.26617\times10^{-4}\,\ii$ \\
\end{tabular}
\caption{The quotient $Q'$ of the numerically determined Stokes discontinuity by prediction \eqref{eq:discPredict2} for three values of $g_s$. At smaller values of $g_s$, the discrepancy between $Q'$ and 1 increases.}
\label{tab:discFgammagammap}
\end{table}
Surprisingly, when we consider $g_s$ closer to 0 than the values given in Table \ref{tab:discFgammagammap}, we appear to lose precision.

\subsubsection{$\quintic$ at large radius: resurgence of the one-instanton amplitude $\cF^{(\gamma_c)}$}
For $\quintic$, we depict the Borel plane of the instanton amplitude $\cF^{(\gamma_c)}$ near large radius in Figure \ref{fig:borelplaneX5Fgammac}. Singularities at $\alpha X^0=\omega_{\gamma^0}$ and $-\alpha P_0=-\omega_{\gamma_c}$ are clearly visible -- these are also leading singularities in the Borel plane of $\cFpert_g$. The absence of a visible singularity at $\omega_{\gamma_c}$ is consistent with  $\cF^{(\gamma_c,-\gamma_c)}\gg\cF^{(\gamma_c,\gamma_c)}$.
\begin{figure}[h]
    \centering
    \includegraphics[scale=0.6]{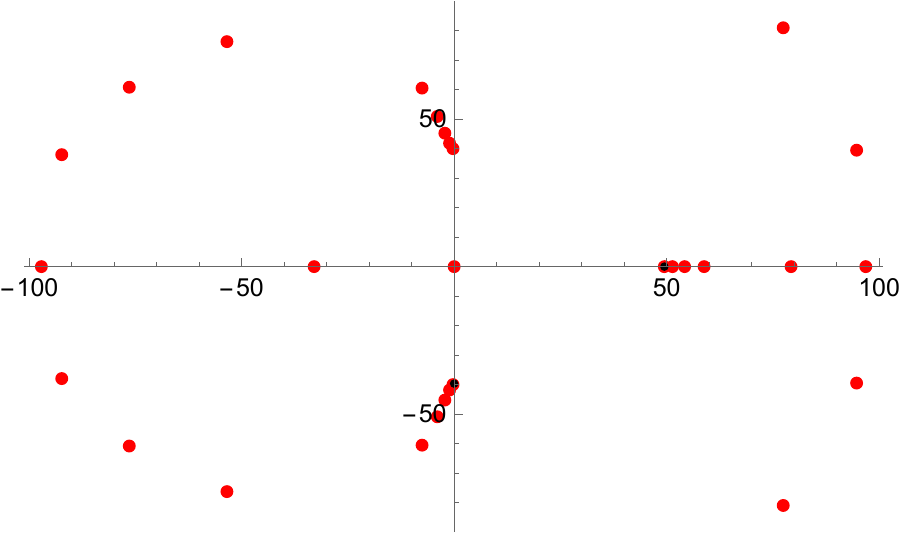}
    \caption{Borel plane of $\cF^{(\gamma_c)}$ for $\mirrorQ$ at $z=10^{-1}\mu$ in the large radius frame. Black dots depict the point $-\alpha P_0$ on the  real axis and $\alpha X^0$ on the imaginary axis.}
    \label{fig:borelplaneX5Fgammac}
\end{figure}
We can study the contribution of $\alpha X^0$ and $-\alpha P_0 $ to the asymptotics of $\cF^{(\gamma_c)}$. We again form the sequence $s^{\gamma,\gamma'}_{n,h}$ introduced in equation \eqref{eq:sggpnh}, with $\gamma = \gamma_c$ and $\gamma' \in \{\gamma^0, -\gamma_c\}$. Note that \cite{Gu:2023mgf,Douaud:2024khu}
\begin{align}
    S_{\omega_{\gamma^0}} =\bar{\Omega}(\gamma^0)= \chi(\mirrorQ) = 200 \,, \quad S_{\omega_{\gamma_c}} =\bar{\Omega}(\gamma_c)=1  \,.
\end{align}
We record the Richardson extrapolation for the sequences $s^{\gamma_c,\gamma'}_{n,h}$ in Table \ref{table:snhX5} for $h=0,\ldots,3$. Note that for $\gamma'=\gamma_c$,  $\cF^{(\gamma_c\vert-\gamma_c)}$ is even in $g_s$, hence $\cF^{(\gamma_c\vert-\gamma_c)}_{2k+1}$=0 and we only give the values of the approximant for $h$ even. For our range of $h$ we should have  $s^{\gamma,\gamma'}_{60,h}=1+\cO(10^{-1})$.
\begin{table}[h]
\centering
\begin{tabular}{c|c|c}
$h$ & $\widehat{s^{\gamma_c,\gamma^0}_{60,h}}$ & $\widehat{s^{\gamma_c,-\gamma_c}_{60,h}}$ \\
\hline
0 & $*$ & $0.99947 + 0\,\ii$ \\
1 & $1.09004 - 0.01831\,\ii$ & $*$ \\
2 & $1.02094 + 0.05396\,\ii$ & $1.02122 + 0\,\ii$ \\
3 & $1.00669 - 0.01580\,\ii$ & $*$ \\
\end{tabular}
\caption{Richardson extrapolation applied to the sequence $s^{\gamma_c,\gamma'}_{n,h}$ for $0 \le h \le 3$. $*$ corresponds to $\cF^{(\gamma_c\vert\gamma')}_h=0$.}
\label{table:snhX5}
\end{table}

\subsection{Wall-crossing in local $\IP^2$: D2-D0 decay away from large radius}
The refined subtraction method introduced in Section \ref{sec:numtools} enables us to study wall-crossing in the Borel plane of local $\IP^2$. Wall-crossing in the context of resurgence has previously been illustrated numerically in \cite{Marino:2024yme} via the example of W-boson and dyon decay from weak to strong coupling in pure Seiberg-Witten theory (Figure 5 in that reference shows a similar phenomenon in the case of local $\IF_0$), and in \cite{Douaud:2024khu} in the case of the decay of a D6-D0 bound state as one approaches the conifold point in all hypergeometric one-parameter Calabi-Yau threefolds (save $X_{3,2,2}$, for which the requisite numerical precision was lacking). In this section, we are going to trace the decay of a D2-D0 bound state in the Borel plane. The decay we focus on was analyzed in \cite[Section 4.4]{Bousseau:2022snm}: coming in from large radius, the state $\gamma_1=[0,d,\chi)=[0,1,1)$, associated to the class of the hyperplane divisor of $\IP^2$ decays. When $|\text{Arg}(- \ii Z_{\gamma_1})|$ is bounded by $\sim 0.824$, as is the case for the point $z_*$ in moduli space we shall consider, the decay takes the form \cite[Section 6.5]{Bousseau:2022snm}
\begin{align}
    \gamma_1 \rightarrow \gamma_2 + \gamma_3 \,,
\end{align}
with $\gamma_2=[1,0,1)=\gamma_c$ and $\gamma_3=[-1,1,0)$. As the charges involved are primitive and the decay only involves two constituents, the wall-crossing formula reads
\begin{equation}
     \Delta \Omega(\gamma_1)= \Omega(\gamma_1,z_*^+)- \Omega(\gamma_1,z_*^-)=\langle \gamma_2,\gamma_3\rangle(-1)^{\langle \gamma_2,\gamma_3\rangle}\Omega(\gamma_2,z_*)\Omega(\gamma_3,z_*) \,,
 \end{equation}
consistent with  
$\Omega(\gamma_2,z_*)=\Omega(\gamma_3,z_*)=1$,  $\Omega(\gamma_1,z_*^-)=3$ (the genus 0 GV invariant of degree 1), and, decisively, $ \Omega(\gamma_1,z_*^+)=0$, indicating the decay of the state of charge $\gamma_1$ beyond the wall.

In Figure \ref{fig:wall}, we trace out the associated wall of marginal stability in the moduli space: the dashed line indicates the locus on which the central charges $Z_{\gamma_2}(z_*)$ and $Z_{\gamma_3}(z_*)$ are aligned. 
  \begin{figure}[H]
    \centering 
    \includegraphics[scale=1]{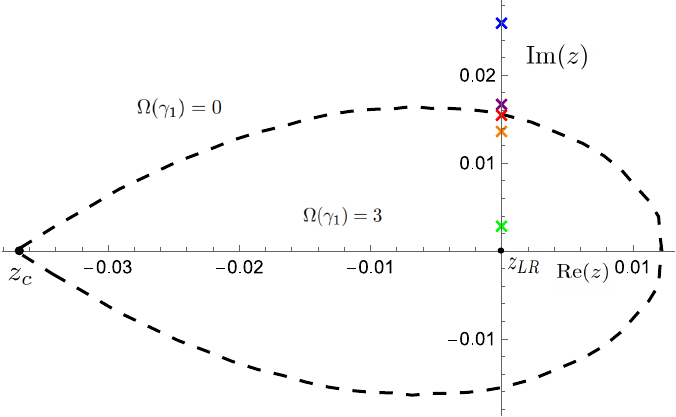}
    \caption{The wall of marginal stability in the $z$ plane for the decay $[0,1,1)\rightarrow [1,0,1)+[-1,1,0)$. Green/Orange/Red/Purple/Blue markers are  $z_1/z_2/z_*/z_3/z_4$.}
    \label{fig:wall}
\end{figure}
To trace the decay, we fix a point on the intersection of the wall with the imaginary axis,
\begin{equation}
    z_*\simeq 0.0154\ii \,,
\end{equation}
and compare the Borel plane at two points above and two below the wall: \begin{equation}
    z_1=\frac{2}{10}z_*,\,  z_2=\frac{9}{10}z_*,\, z_3=\frac{11}{10}z_*,\,   z_4=\frac{17}{10}z_* \,.
\end{equation}
These four points are indicated in Figure \ref{fig:wall}. Note that $z_1$ and $z_2$ are located inside the wall that encircles the large radius point $z_{\mathrm{LR}}$, $z_3$ and $z_4$ outside this wall. By the nature of the decay, $\omega_{\gamma_1}$ should be the position of a singularity in the Borel plane at the values $z_1$ and $z_2$ of the modulus, but a regular point (and hence invisible in the Borel plane) at $z_3$ and $z_4$.

For the numerical analysis of this decay to be feasible, it is important that the three charges involved in the decay are dominant in the vicinity of the point $z_*$, $\omega_{\gamma_2}$, $\omega_{\gamma_3}$, $\omega_{\gamma_1}$, see Figure \ref{fig:wczstar}.
\begin{figure}[H]
    \centering 
    \includegraphics[scale=0.4]{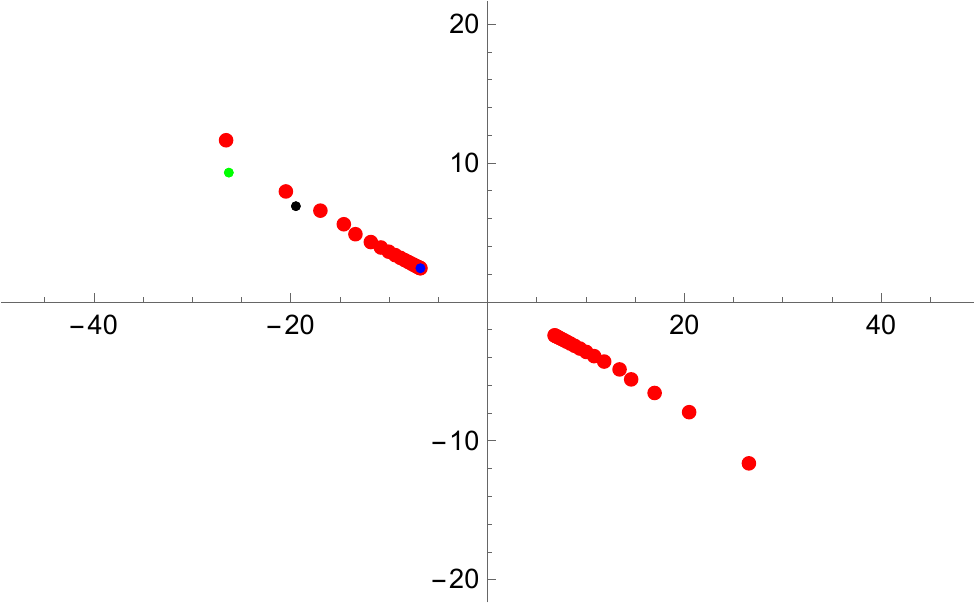}
    \caption{The Borel plane of $\cFpert$ at $z = z^*$. The three indicated points at increasing distance from the origin are $\omega_{\gamma_2}$ (blue), $\omega_{\gamma_3}$ (black), $\omega_{\gamma_1}$ (green).}
    \label{fig:wczstar}
\end{figure}

If this were not the case, the requirement of subtracting more dominant singularities would result in a loss of precision, possibly washing out the desired decay entirely.

As the periods satisfy $|\omega_{\gamma_1}| > |\omega_{\gamma_3}| > |\omega_{\gamma_2}|$ in the vicinity of $z_*$ that we are studying, we will work in a conifold frame, i.e. a frame with $\gamma_2$ as an A-cycle, such that the singularity associated to $\gamma_2$ can easily be subtracted. The resulting amplitudes $\Fgred$ are the starting point of our analysis summarized in Figure \ref{fig:sequenceWC}:
	\begin{figure}[h]
		\centering
		\begin{tikzpicture}[
			figwidth/.store in=\figw, figwidth=5.2cm,
			node distance=0.3cm
			]
			
			\node (z1) {\includegraphics[width=6cm]{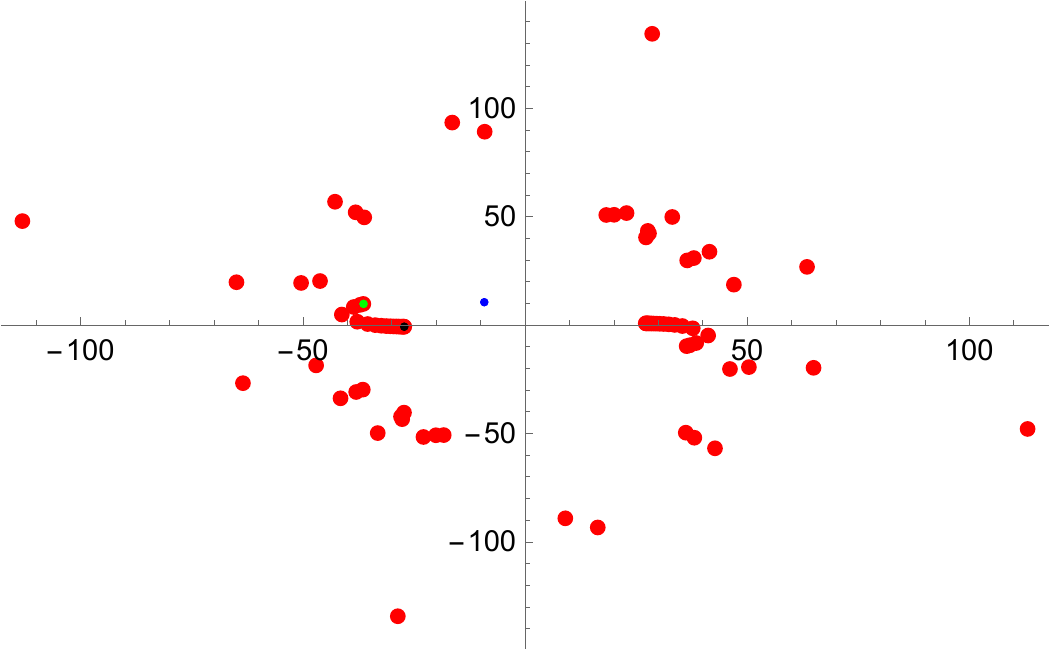}};
			
			\node (z2b) [below=0.5cm of z1, xshift=-3.0cm]
			{\includegraphics[width=\figw]{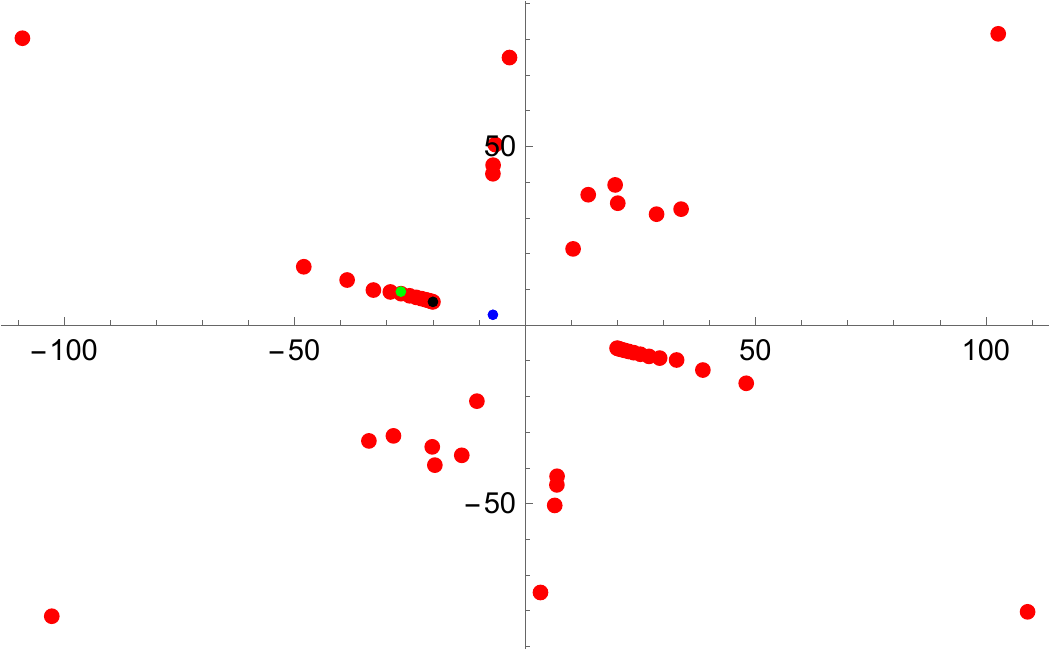}};
			\node (z2a) [right=0.4cm of z2b]
			{\includegraphics[width=\figw]{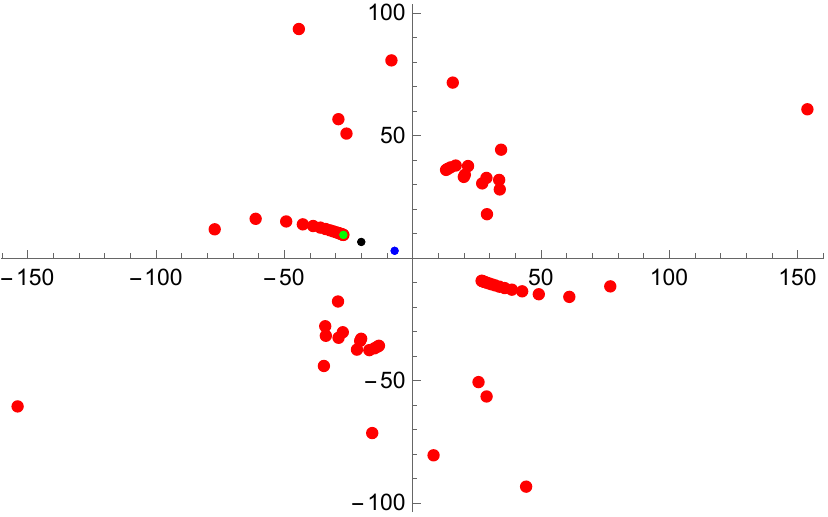}};
			
			\node (z3b) [below=0.5cm of z2b]
			{\includegraphics[width=\figw]{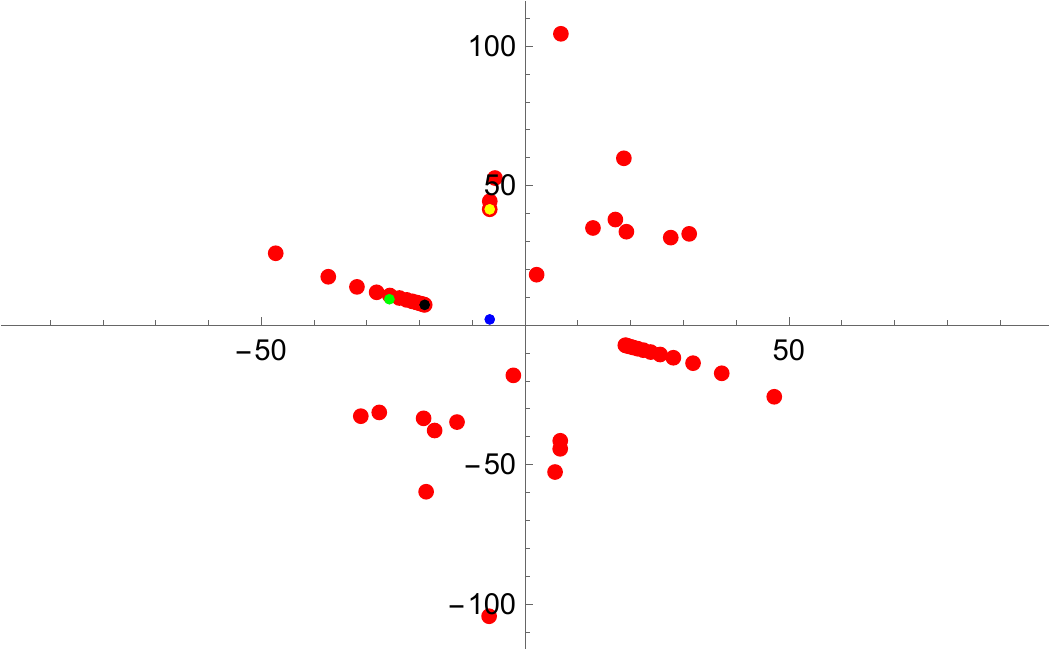}};
			\node (z3a) [right=0.4cm of z3b]
			{\includegraphics[width=\figw]{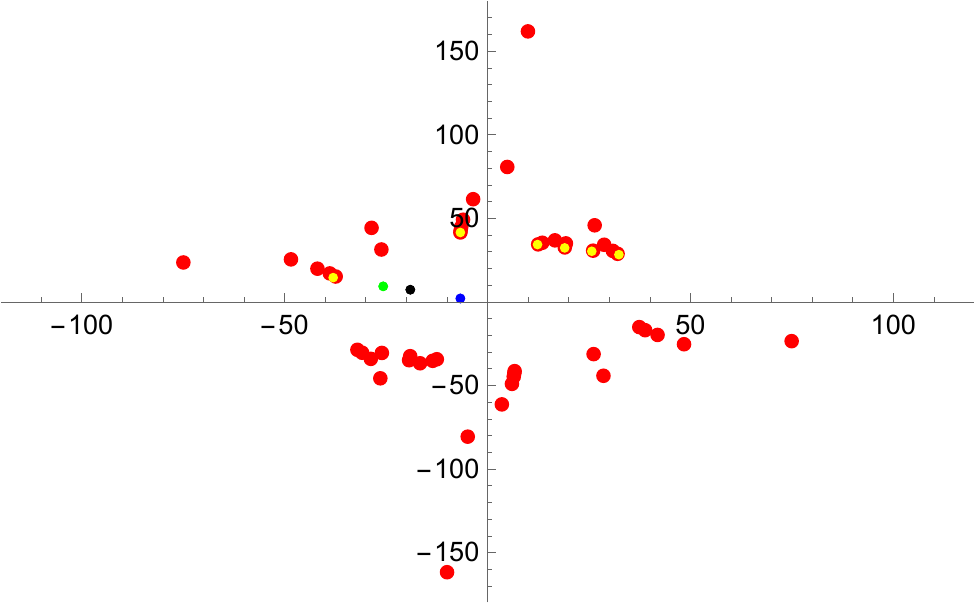}};
			
			\node (z4b) [below=0.5cm of z3b]
			{\includegraphics[width=\figw]{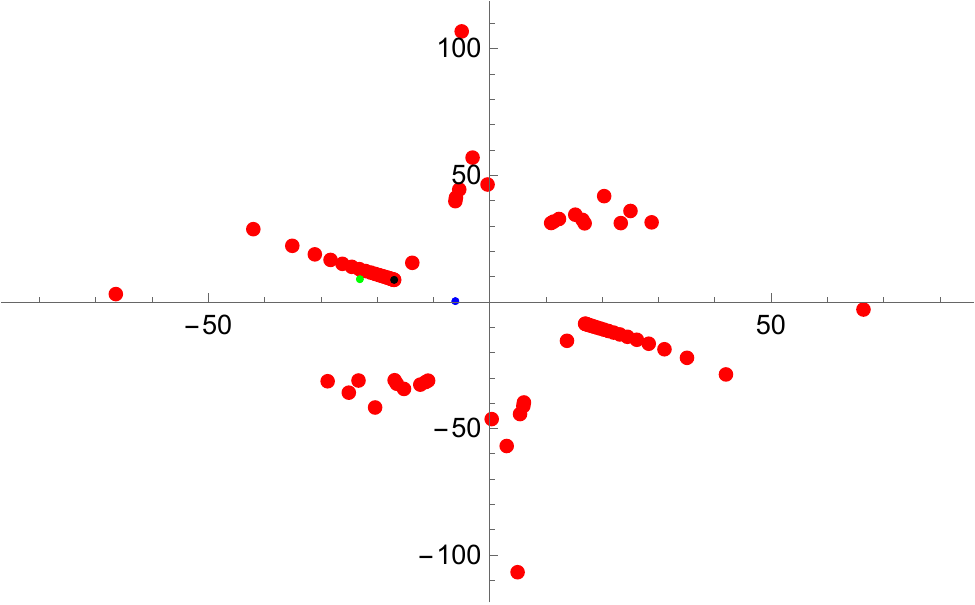}};
			\node (z4a) [right=0.4cm of z4b]
			{\includegraphics[width=\figw]{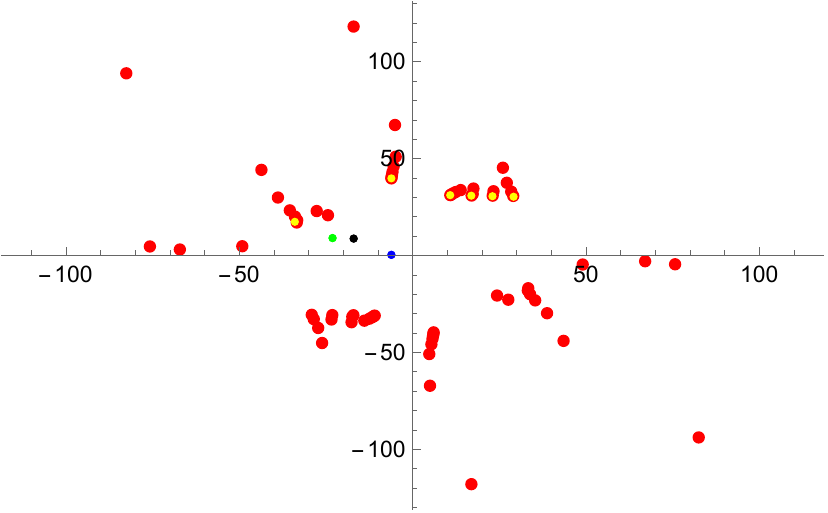}};
			
			\coordinate (labelcol) at ([xshift=-0.5cm] z2b.west);
			\node[font=\large] at (labelcol |- z1.center)  {$z_1$};
			\node[font=\large] at (labelcol |- z2b.center) {$z_2$};
			\node[font=\large] at (labelcol |- z3b.center) {$z_3$};
			\node[font=\large] at (labelcol |- z4b.center) {$z_4$};
			
		\node[above=0.15cm of z2b.north, font=\small]  {before subtracting singularity at $\omega_{\gamma_3}$};
		\node[above=0.15cm of z2a.north, font=\small]  {after subtracting singularity at $\omega_{\gamma_3}$};

			\node[
			draw, rounded corners=3pt,
			inner sep=6pt,
			font=\small,
			right=0.5cm of z1.east,
			anchor=west,
			yshift=1cm
			] (legend) {%
				\begin{tikzpicture}[x=0.9cm, y=0.55cm]
					\fill[green!70!black] (0,2) circle (2.5pt);
					\node[right, font=\small] at (0.15,2) {$\gamma_1$};
					\fill[blue]             (0,1) circle (2.5pt);
					\node[right, font=\small] at (0.15,1) {$\gamma_2$};
					\fill[black]            (0,0) circle (2.5pt);
					\node[right, font=\small] at (0.15,0) {$\gamma_3$};
				\end{tikzpicture}
			};
			
		\end{tikzpicture}
        \caption{The Borel plane at $z=z_i$, $i=1, \ldots, 4$. The singularity at $\omega_{\gamma_3}$ is removed on the right panels, to reveal a singularity at $\omega_{\gamma_1}$ still present at $z=z_2$, but absent at $z=z_3$ and $z=z_4$.The yellow points are the singularities identified in \eqref{eq:yellowsing}.} \label{fig:sequenceWC}
	\end{figure}
\begin{enumerate}
    \item At $z=z_1$, singularities in the Borel plane of $\Fgred$ are clearly visible at $\omega_{\gamma_1}$ and $\omega_{\gamma_3}$. We can confirm numerically that the Stokes constant at $\omega_{\gamma_1}$ is indeed 3.
    \item At $z=z_2$, the two periods $\omega_{\gamma_1}$ and $\omega_{\gamma_3}$ are almost aligned. To confirm that a singularity at $\omega_{\gamma_1}$ is present, we subtract the more dominant singularity at $\omega_{\gamma_3}$. 
    \item At $z=z_3$, the two periods remain nearly aligned. Upon subtracting the singularity at $\omega_{\gamma_3}$, we note that the singularity at $\omega_{\gamma_1}$ has disappeared. A host of singularities are visible in the Borel plane, associated to the charges 
    \begin{equation}
        [1, 0, 0),\quad [0, -1, -2),\quad [-1, -1, -3),\quad [2, -1, 0), \quad [1, -1, -1),\quad [-2, 2,0),
  \label{eq:yellowsing}
    \end{equation}
    Some of these were not visible before the subtraction. The dominant singularity among these is the one associated to the charge $[-2,2,0)$. By
    \begin{equation}
         27.3 \approx\vert \omega_{\gamma_1}(z_3)\vert \leq \vert  \omega_{[-2,2,0)}(z_3)\vert \approx 40.7 \,,
    \end{equation}
    the fact that a singularity at $\omega_{[2,1,0)}$ is visible gives us confidence that a singularity at $\omega_{\gamma_1}$, were it present, would also fall within the sensitivity of our analysis.
    \item At $z=z_4$, we observe $\omega_{\gamma_1}$ and $\omega_{\gamma_3}$ beginning to move out of alignment. The Borel plane upon subtracting the singularity at $\omega_{\gamma_3}$ resembles the plane at $z=z_3$, with still no singularity at $\omega_{\gamma_1}$ visible.
\end{enumerate}

\section{Conclusions} \label{s:conclusions}
We have argued that the resurgent structure of the topological string is captured by simple differential operators $\alienDiffOp_{\omegag}$, $\gamma \in \Gamma$. These operators provide a realization of the Kontsevich-Soibelman Lie algebra, suggesting an intimate relation between wall-crossing and the resurgent structure of the topological amplitudes.

Numerically, we continued the exploration of the Borel plane of the topological string amplitude $\cFpert$ for the quintic and local $\IP^2$. In the case of $\KPt$, we uncover singularities in the Borel plane associated to bound states involving D4-branes and match the associated Stokes constants and generalized Donaldson-Thomas invariants. The introduction of the operators $\alienDiffOp_\omegag$ allowed us to extend our exploration to the Borel plane of instanton amplitudes; here, we numerically confirmed our expressions for the action of successive alien derivatives on $\cFpert$. Finally, we identified the manifestation of a predicted decay in the Borel plane, and verified that it matches theoretical predictions.

Continuing this line of work, a natural next step is to identify further walls in the Borel plane of local $\mathbb{P}^2$ -- studying them numerically and comparing systematically with the scattering diagram perspective of \cite{Bousseau:2022snm}.
It is also desirable to push this comparison further to other local threefolds, such as local $\IP^1 \times \IP^1$, for which a scattering diagram analysis is available \cite{LeFloch:2024cwl}.

Interesting results regarding subleading singularities around the conifold point of local $\IP^2$ have been obtained in \cite{Gu_2022} based on the TS/ST formalism (see \cite{Marino:2015nla} for a review); these should be accessible from the Borel plane perspective as well.

The form of the operator $\twistedDG$ which enters into the definition of $\alienDiffOp_{\omegag}$ is reminiscent of the wave-function approach to the holomorphic anomaly equations \cite{Witten:1993ed,Aganagic:2006wq}, in which B-periods are mapped to derivative operators. Indeed, the references \cite{Marino:2024yme,Marino:2024tbx} provide a derivation of the holomorphic limit of instanton amplitudes on local Calabi-Yau threefolds in this framework. We will pursue this line of reasoning and extend it to the compact case in upcoming work \cite{Douaud2026}.

In the text, we put forth a tentative proof strategy for identifying Stokes constants with generalized Donaldson-Thomas invariants, though several steps remain technically formidable. Perhaps combining our results with ideas of Bridgeland \cite{Bridgeland:2016nqw, Bridgeland:2024ecw} can help overcome some of these difficulties.

\section*{Acknowledgements}
We would like to thank Jie Gu, Gengbei Guo, Marcos Mari\~no, Boris Pioline, David Sauzin and Maximilian Schwick for useful discussions, encouragement and collaboration on related projects. We would like to thank Marcos Mari\~no for sharing his unpublished notes regarding the numerical computation of instanton amplitudes with us, and David Sauzin for sharing his unpublished notes regarding parametric resurgence.   
\vspace{0.3cm}

\noindent
The work of the authors was supported under ANR grant ANR-21-CE31-0021.

\appendix
\section{Resurgent tools} \label{app:resurgentTools}
\subsection{Notation}
\begin{tabular}{ll}
$D(a,R)$ & the open disk with center $a\in \IC$ and radius $R>0$ \\
$D^*(a,R)$ & $D(a,R) - \{0\}$ \\
$C_r(a)$ & a circular path of radius $r>0$ traversed positively around the point $a\in\IC$, \\
& around the origin when the argument $a$ is omitted  \\
$\overline{C_r(a)}$ & the image of $C_r(a)$ \\
$\theta_\omega$ & the argument of $\omega \in \IC$ \\
$\ray_\omega$ & the ray emanating from the origin through $\omega$ \\
$[\omega_1,\omega_2]$ & the oriented straight path from $\omega_1 \in \IC$ to $\omega_2 \in \IC$ \\
$\tilde{\varphi}$     & a formal power series without constant term \\
$\hat{\varphi}$ & a holomorphic germ, its analytic continuation in a open set of the Borel plane,\\
&can also denote the minor of $\singphi$, $\hat{\varphi}$ = $\var(\singphi)$ \\
$\singphi$   & an element of SING \\
$\check{\varphi}$ & a major of $\singphi$, two representatives of the class differ by a holomorphic function near the origin, \\
 & lives in ANA, that is on $\tC$
\end{tabular}
\vspace{0.2cm}

The space $\SINGsimp$ is isomorphic to $\IC \delta \oplus \IC \{\zeta \}$. When we write an element of $\SINGsimp$ as $a \delta + \hat{\varphi}$, this isomorphism is implied. We proceed analogously for the space $\SINGsimpRam$.

\subsection{Basic definitions}
The starting point of our considerations is a formal power series in a parameter $z$, $\tildephi(z)$, without constant term:
\begin{align}
    \tildephi(z) = \sum_{n=0}^\infty c_n z^{n+1} \in z\,\IC[[z]]\,.
\end{align}
\begin{definition}
    The formal Borel transform is defined as the following linear map between formal power series:
    \begin{eqnarray} \label{eq:Boreltransform}
        \sB :\quad z \,\IC[[z]]\quad &\longrightarrow& \quad \IC [[\zeta]] \\
         \sum_{n=0}^\infty c_n z^{n+1} &\longmapsto& \sum_{n=0}^\infty c_n \frac{\zeta^n}{n!} \nonumber \,.
    \end{eqnarray}
\end{definition}
If and only if $\tildephi(z)$ is a Gevrey-1 series -- we write $\tildephi(z) \in z\, \IC[[z]]_1$ -- the radius of convergence of $\hatphi = \sB(\tildephi)$ is non-zero. If $\hatphi(\zeta)$ exists and its analytic continuation along a ray drops off sufficiently quickly, we can compute its Laplace transform along this ray. The condition of dropping off sufficiently quickly is captured by the following definition:
\begin{definition}
    Let $\cN_{c_0}(\e^{\ii \theta}\IR^+)$ be the set of all convergent power series $\hatphi(\zeta) \in \IC\{\zeta\}$ which extend analytically to a half-strip of width $\delta$, $\delta >0$, around the ray $\e^{\ii \theta} \IR^+$ and are bounded by
    \begin{align}\label{eq:boundIntegrandLaplace}
        |\hatphi(\zeta)| \le A \e^{c_0 |\zeta|} \quad \mathrm{for} \quad \zeta \in  \e^{\ii \theta}\IR^+ 
    \end{align}
    for some appropriate choice of $A>0$ and $c_0 >0$. Set
    \begin{align}
        \cN(\e^{\ii \theta}\IR^+) = \bigcup_{c_0 \in \IR} \cN_{c_0}(\e^{\ii\theta}\IR^+) \,.
    \end{align}
\end{definition}
We then have the following theorem (see e.g. \cite[Theorem 5.20]{Sauzin}):
\begin{theorem} \label{thm:LaplaceToAsymptotic}
    Let $\theta \in [0,2\pi)$, and let $\hatphi \in \cN_{c_0}(e^{\ii \theta}\IR^+)$. Then the Laplace transform $(\sL^{\theta} \hatphi)(z)$ of $\hatphi$, defined by the integral
    \begin{equation} \label{eq:laplaceTransform}
        (\sL^{\theta} \widehat{\varphi})(z) = \int_0^\infty \e^{-\frac{\xi \e^{\ii \theta}}{z}} \widehat{\varphi}(\xi \e^{\ii \theta}) \e^{\ii \theta} \dd \xi 
    \end{equation}
    is holomorphic for $\re(z) > c_0$ and has the formal series $\widetilde{\varphi} := \sB^{-1} \widehat{\varphi} \in z\,\IC[[z]]_1$ as its uniform asymptotic expansion for any $z : \re(z) >c_1$, where $c_1$ is any positive constant such that $c_1 > c_0$.
\end{theorem}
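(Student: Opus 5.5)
The proof will follow \cite[Theorem 5.20]{Sauzin}: first show that the integral \eqref{eq:laplaceTransform} converges locally uniformly, then expand $\hatphi$ in its Taylor series with a remainder term that is bounded on the whole ray, and integrate term by term.

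\emph{Domain.} I read the condition $\re(z)>c_0$ of the statement as the half-plane condition $\re\!\left(\e^{\ii\theta}/z\right)>c_0$ on the variable $1/z$. This is the region where the exponential factor in \eqref{eq:laplaceTransform} beats the growth bound \eqref{eq:boundIntegrandLaplace}. Concretely, for $\zeta=\xi\e^{\ii\theta}$ the integrand is bounded by
\[
A\,\e^{-\xi\left(\re(\e^{\ii\theta}/z)-c_0\right)} .
\]
On any compact subset of the region $\re(\e^{\ii\theta}/z)\ge c_1>c_0$ this gives a $z$-independent integrable majorant $A\e^{-(c_1-c_0)\xi}$. For each fixed $\xi$ the integrand is holomorphic in $z$. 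Dominated convergence, together with Morera's theorem (or differentiation under the integral sign), then shows that $\sL^\theta\hatphi$ is holomorphic on the full open region.

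\emph{Exact transforms of monomials.} Write $\hatphi(\zeta)=\sum_n c_n\zeta^n/n!$. A direct computation (substitute $t=\xi\e^{\ii\theta}/z$ and rotate the contour, which is legitimate in this region) gives
\[
\sL^\theta\!\left(\zeta^n/n!\right)=z^{n+1}.
\]
So the Laplace transform of the truncation $\hatphi_N=\sum_{n<N}c_n\zeta^n/n!$ is exactly the truncation of $\tildephi$ at order $z^N$.

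\emph{Uniform bound on the remainder.} This is the step I expect to be the main obstacle. I want constants $K,L$, independent of $N$, such that
\[
|\hatphi(\zeta)-\hatphi_N(\zeta)|\le K L^N\,\frac{|\zeta|^N}{N!}\,N!\;\cdot\frac{1}{N!}\,\e^{c_1'|\zeta|}\quad\text{with}\quad c_0<c_1'<c_1,
\]
that is, a bound of the form $K L^N |\zeta|^N \e^{c_1'|\zeta|}$, valid on the whole ray. I will split the ray in two.
\begin{itemize}[leftmargin=1.5em]
\item For $|\zeta|\le R$, with $R$ below the radius of convergence, the Cauchy estimates on a slightly larger disk control the Taylor remainder by $K L^N|\zeta|^N$.
\item For $|\zeta|\ge R$, I bound $|\hatphi|$ by \eqref{eq:boundIntegrandLaplace}. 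The polynomial $\hatphi_N$ is bounded using $|c_n|\le KL^n n!$ and $\sum_{n<N}(L|\zeta|)^n/n!\le \e^{L|\zeta|}$ (after shrinking as needed). Since $|\zeta|/R\ge1$, multiplying by $(|\zeta|/R)^N$ puts both pieces in the same form.
\end{itemize}
The half-strip hypothesis in the definition of $\cN_{c_0}$ lets me apply the Cauchy estimates uniformly near the ray. If the exponent in the far region comes out as $c_0+L$ rather than something close to $c_0$, I will first absorb this by treating the coefficients carefully as in Sauzin. Failing that, I will settle for the weaker conclusion on a slightly smaller region, which still suffices for any $c_1>c_0$.

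\emph{Integrating the remainder.} Integrating this bound against $|\e^{-\zeta/z}|$ gives
\[
\bigl|\sL^\theta\hatphi(z)-\textstyle\sum_{n<N}c_n z^{n+1}\bigr|\le K L^N N!\,|z|^{N+1}\,C
\]
uniformly for $\re(\e^{\ii\theta}/z)>c_1$, using $\int_0^\infty\xi^N\e^{-\epsilon\xi}\,\dd\xi=N!/\epsilon^{N+1}$ with the appropriate scaling in $z$. This is exactly the statement that $\tildephi$ is the uniform (indeed Gevrey-1) asymptotic expansion of $\sL^\theta\hatphi$.
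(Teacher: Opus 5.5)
The paper gives no proof of this statement beyond citing \cite[Theorem 5.20]{Sauzin}. Your preliminary steps are fine: reading the condition as $\re(\e^{\ii\theta}/z)>c_0$, holomorphy by dominated convergence, and $\sL^\theta(\zeta^n/n!)=z^{n+1}$. The gap is in the last step.

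Integrating the modulus of your remainder bound gives
\[
\Bigl|\sL^\theta\hatphi(z)-\sum_{n<N}\hatphi^{(n)}(0)\,z^{n+1}\Bigr|\le\frac{K L^N N!}{\bigl(\re(\e^{\ii\theta}/z)-c_1'\bigr)^{N+1}} .
\]
This is $O(|z|^{N+1})$ only where $\re(\e^{\ii\theta}/z)\ge\kappa/|z|$ for some $\kappa>0$, i.e.\ on proper subsectors. The region $\re(\e^{\ii\theta}/z)>c_1$ is a half-plane in the variable $\e^{\ii\theta}/z$. It contains points with $|z|\to0$ while $\re(\e^{\ii\theta}/z)$ stays bounded, and there your bound does not even tend to zero. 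No ``appropriate scaling in $z$'' repairs this: putting absolute values inside the integral discards the oscillation of $\e^{-\zeta/z}$, which is exactly what produces decay in those directions.

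Note also that your remainder estimate never uses the half-strip; the disc of convergence plus the bound on the ray suffice for it. The uniform statement, however, genuinely needs control off the ray. Take $\theta=0$ and $\hatphi(\zeta)=\exp(\ii\e^{\zeta})$, which is entire with modulus $1$ on $\IR^+$. For $1/z=c+\ii\tau$, with $c>c_1$ fixed and $\tau\to+\infty$, the phase $\e^{\xi}-\tau\xi$ has a stationary point at $\xi=\log\tau$. It contributes a term of size $\sqrt{2\pi}\,\tau^{-c-1/2}$, so the expansion fails beyond finite order uniformly on the half-plane.

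The missing idea, which is the argument behind Sauzin's theorem, is to integrate by parts instead of Taylor expanding. Since $\sL^\theta(\hatphi')=z^{-1}\sL^\theta\hatphi-\hatphi(0)$, iterating gives
\[
\sL^\theta\hatphi(z)=\sum_{n<N}\hatphi^{(n)}(0)\,z^{n+1}+z^N\,\sL^\theta\bigl(\hatphi^{(N)}\bigr)(z) .
\]
The factor $z^N$ thus appears exactly. It remains to show $|\hatphi^{(N)}(\zeta)|\le KL^N N!\,\e^{c_0|\zeta|}$ along the whole ray. The remainder is then bounded by $|z|^N KL^NN!/(c_1-c_0)$ on all of $\re(\e^{\ii\theta}/z)\ge c_1$, and going one order further gives the $|z|^{N+1}$ form.

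The derivative bounds come from Cauchy estimates on discs of fixed radius $\delta'<\delta$ centred on the ray. That requires the exponential bound on the whole strip, as in Sauzin's definition, not only on the ray as in the paraphrased definition of $\cN_{c_0}$ here. Your remark that the half-strip hypothesis ``lets me apply the Cauchy estimates'' tacitly assumes this stronger hypothesis.

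A minor slip as well: since your coefficients satisfy $|c_n\zeta^n/n!|\le KL^n|\zeta|^n$, there is no $1/n!$ left to sum to $\e^{L|\zeta|}$. Instead $\hatphi_N$ is bounded by $K\sum_{n<N}(L|\zeta|)^n$, which still has the form $K'L'^N|\zeta|^N$ for $|\zeta|\ge R$.
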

We can therefore define an operator
\begin{equation} \label{eq:BorelResum}
    \sS^\theta = \sL^{\theta} \circ \sB 
\end{equation}
which under favorable circumstances and appropriate choice of $\theta$ maps a Gevrey-1 formal power series $\tildephi$ to a function $\varphi$ of which it is the asymptotic expansion. When this succeeds, $\varphi$ is called a Borel resummation of $\widetilde{\varphi}$, and $\varphi$ is called fine-summable in the direction $\theta$.

Varying $\theta$ and studying how the image of $\sS^\theta$ changes will play a central role in our discussion. For an open interval $I$, we will let $\cN(I)$ denote the generalization of $\cN(\e^{\ii \theta} \IR^+)$ to holomorphic functions defined in a neighborhood of the origin which analytically extend to the open sector 
\begin{align}
    \{\xi \e^{\ii \theta} \,| \,\xi>0,\, \theta\in I\} \,.
\end{align}
The precise generalization of the bound \eqref{eq:boundIntegrandLaplace} can be found in \cite[Definition 5.29]{Sauzin}. The image of $\sS^\theta$ as $\theta$ varies depends sensitively on the singularities that $\hatphi$ develops as it is analytically continued away from the origin of the Borel plane.

We can define compatible products on the space of Gevrey-1 series and on its image under Borel transform. Let $\tildephi(z), \tildepsi(z) \in z \, \IC[[z]]$, 
\begin{align}
    \tildephi(z)=\sum_{n=0}^\infty a_n z^{n+1} \,, \quad \tildepsi(z)=\sum_{n=0}^\infty b_n z^{n+1} \,,
\end{align}
and set $\hatphi=\sB(\tildephi)$, $\hatpsi = \sB(\tildepsi)$, $\hatphi(\zeta), \hatpsi(\zeta) \in \IC[[\zeta]]$. The Cauchy product on $z\, \IC[[z]]$, 
\begin{align}
    \tildephi \cdot \tildepsi = \sum_{n=0}^\infty c_n z^{n+2}\,, \quad c_n = \sum_{p+q=n} a_p b_q 
\end{align}
pushes forward to the convolution product on $\IC[[\zeta]]$,
\begin{align}
    (\hatphi * \hatpsi)(\zeta) = \sum_{n=0}^\infty c_n \frac{\zeta^{n+1}}{(n+1)!} \,, \quad c_n = \sum_{p+q=n} a_p b_q \,.
\end{align}
When $\hatphi, \hatpsi$ have finite radius of convergence around the origin, i.e. $\hatphi, \hatpsi \in \IC\{\zeta\}$, their convolution product can be defined analytically \cite[Lemma 5.14]{Sauzin}: Letting $\Phi$, $\Psi$ denote the holomorphic functions defined by $\hatphi, \hatpsi$ in the disc $D(0,R) = \{\zeta \in \IC \, | \, |\zeta| < R\}$ with $R>0$ smaller than the radius of convergence of both, we can define, for for $\zeta \in D(0,R)$, the holomorphic function 
\begin{align}
    (\Phi * \Psi)(\zeta) &= \int_0^\zeta \Phi(\xi) \Psi(\zeta - \xi) \, d\xi \,.
\end{align}
$(\Phi * \Psi)(\zeta)$ is the sum of the power series $\hatphi * \hatpsi$, whose radius of convergence is thus at least $R$.

\subsection{Some formal definitions}
We will be dealing with holomorphic functions whose analytic continuations exhibit infinitely many branch points of logarithms. As we will only be interested in the behavior of these functions close to the singular points, we do not need to riddle the complex plane with branch cuts. Instead, we proceed as follows:
\begin{enumerate}
    \item The Borel transform $\hatphi(\zeta)$ of a 1-Gevrey series is a holomorphic function with non-zero radius of convergence. To eliminate the need to specify this radius, we introduce the space of {\it holomorphic germs} by identifying all functions holomorphic on some open set containing the origin which coincide on some such open set. This space can be identified with the space $\IC\{\zeta\}$ of convergent power series at the origin. Hence, we write $\hatphi(\zeta) \in \IC\{\zeta \}$. The space of holomorphic germs at any point of the complex plane is defined analogously.
    \item The holomorphic germs we will be interested in can be continued along any path $\gamma:[0,1] \rightarrow \IC- \Omega$ in the complex plane which avoids a discrete set of points $\Omega$. We call the holomorphic germ at the endpoint $\gamma(1)$ of the path $\gamma$
    \begin{align}
        \cont_{\gamma} \hatphi \,.
    \end{align}
    The germ $\cont_\gamma(\hatphi)$ will generically depend on the homotopy class of $\gamma$ in $\IC - \Omega$.
    \item We will need to consider holomorphic functions which develop poles or single logarithms. To extend the notion of germs to logarithms, we need to address the fact that the logarithm cannot be defined on any punctured disk centered at its branch point. One solution is to work with the universal cover $\tC$ of $\IC^*$ with base point 1. This is the space of equivalence classes of paths $\gamma : [0,1] \rightarrow \IC^*$ with base point $\gamma(0) = 1$ under homotopy with fixed endpoints. Concretely, two points $[\gamma_1], [\gamma_2] \in \tC$ coincide if $\gamma_1(1) = \gamma_2(1)$ and both paths wrap the same number of times around the origin. We define $\underline{\e}^{i\theta}$ for $\theta \in \IR$ as the homotopy class of the path which wraps around the origin $(\theta \mod 2\pi \ii) $ times and ends at $\e^{\ii\theta}$. Any point $\zeta$ of $\tC$ can be written as $\zeta = r \underline{\e}^{\ii \theta}$ for $r\in (0,\infty)$, $\theta \in \IR$. There is a natural projection $\pi : \tC \rightarrow \IC$ given by $[\gamma]  \mapsto \gamma(1)$. $\tC$ is called the Riemann surface of the logarithm, as the logarithm defined via
    \begin{align}
        \log: \tC &\longrightarrow \IC \\
        [\gamma]   &\longmapsto \int_\gamma \frac{d\zeta}{\zeta}
    \end{align}
    is single-valued on it. To replicate the definition of holomorphic germs, we need to generalize the notion of a punctured disk of arbitrary radius to $\tC$. The relevant definition can be found in \cite[Definition 6.37]{Sauzin}. The corresponding space of {\it singular germs} is called $\ana$.
    \item We will need to lift holomorphic functions defined on a disk $D \subset \IC^*$ to which the origin is adherent to an element of the space $\ana$. This requires choosing a connected component among the infinite components $\tD$ of $\pi^{-1}(D)$. We specify the component by indicated $\xi \in \tD$ which maps to the center of the disk $D$, and we write $\tD = \tD_D(\xi)$. $\pi|_{\tD}$ defines a biholomorphism between $\tD$ and $D$, and hence allows us to lift a holomorphic function $f:D \rightarrow \IC$ to a function $\fana : \tD \rightarrow \IC$ via $\fana = f \circ \pi$. If $f$ has the spiral continuation property, then $\fana$ can be extended to an element of $\ana$ by analytic continuation.
    \item It will be useful to mod out the space $\ana$ by contributions which do not contribute to Stokes automorphisms. These are precisely the lift of contributions which are holomorphic at the origin. Using the canonical embedding
    \begin{align} \label{eq:convPowerSING}
        \IC\{\zeta\} \hookrightarrow \ana \,,
    \end{align}
    we define the space of singularities and the associated projection
    \begin{align}
        \SING = \ana/\holgerms \,, \quad \sing_0 : \ana \rightarrow \SING \,.
    \end{align}
    
    \end{enumerate}

\subsection{Simply ramified singularities} \label{app:simplyRamSing}
The Borel transform of a 1-Gevrey formal series is typically not an entire function. The singularities that its analytic continuation develops -- we speak of singularities in the Borel plane -- render the non-perturbative completion of the formal series ambiguous and play a central role in the analysis to follow. The singularities that we have found to be relevant for the non-perturbative analysis of the topological string partition function are poles and logarithms. We hence define, following \cite[eq. (6.28)]{Sauzin},
\begin{align}
    \delta = \sing_0 \left( \frac{1}{2 \pi \ii \,\zeta}\right) \,, \quad \delta^{(k)} = \sing_0 \left( \frac{(-1)^k k!}{2 \pi \ii \,\zeta^{k+1}} \right) \,,\,\,k\ge 0\,.
\end{align}
The normalizations are chosen such that
\begin{align}
    \delta^{(k)} = \left(\frac{d}{d\zeta}\right)^k \delta \,.
\end{align}
We then define a simply ramified singularity of order $n \in \IN$ to be any singularity of the form \cite[p50]{sauzin2007}
\begin{align}
    \singphi = \sum_{k=1}^n a_{-k}\delta^{(k-1)}+\sing_0 \left(\hatphi(\zeta) \frac{\log \zeta}{2 \pi \ii} \right) 
\end{align}
with $a_i \in \IC$ for $i=1, \ldots, n$ and $\hatphi(\zeta) \in \holgerms$.
We denote the subspace of all simply ramified singularities of arbitrary order $n$ as $\SINGsimpRam$. Note that we have a $\IC$-linear isomorphism
\begin{align} \label{eq:isoSingSimpInf}
    \bigoplus_{n=0}^\infty \IC \delta^{(n)} \oplus \IC\{\zeta \} \quad&\longrightarrow \quad \SINGsimpRam \\
    \sum_{k=1}^n a_{-k}\delta^{(k-1)}+ \hatphi(\zeta) \quad &\longmapsto \quad \sum_{k=1}^n a_{-k}\delta^{(k-1)}+\sing_0 \left(\hatphi(\zeta) \frac{\log \zeta}{2 \pi \ii} \right) \nonumber
\end{align}
with inverse map given by the variation map \cite[Definition 6.46]{Sauzin}. The subspace of $\SINGsimpRam$ given by the image of $\IC \delta^{(0)} \oplus \IC\{\zeta \}$ is called the space of simple singularities $\SING^{\simp}$. It will be convenient to introduce the notation
\begin{align} \label{eq:notationCoeffSingPoles}
    (\singphi)_k := a_k \,, 
\end{align}
as well as the projection
\begin{align} \label{eq:projection}
    \mu : \bigoplus_{n=0}^\infty \IC \delta^{(n)} \oplus \IC\{\zeta \} \rightarrow \IC\{\zeta \} \,,
\end{align}
such that
\begin{align} \label{eq:logCoeff}
    \hatphi(\zeta) = \mu(\singphi) \,,
\end{align}
with the isomorphism \eqref{eq:isoSingSimpInf} on the right-hand side of \eqref{eq:logCoeff} left implicit.

The space $\SINGsimpRam$, unlike the space $\SING^{\simp}$, is closed under the operator $\tfrac{d}{d\zeta}$, which on the right hand side of the isomorphism \eqref{eq:isoSingSimpInf} acts as differentiation, and on the left hand side acts as
\begin{align} \label{eq:diffIsoSing}
    \frac{d}{d\zeta}\,: \quad \bigoplus_{n=0}^\infty \IC \delta^{(n)} \oplus \IC\{\zeta \} \quad&\longrightarrow \quad \bigoplus_{n=0}^\infty\IC \delta^{(n)} \oplus \IC\{\zeta \}\\
    \sum_{k=1}^n a_{-k}\delta^{(k-1)}+ \hatphi(\zeta) \quad &\longmapsto \quad \sum_{k=1}^n a_{-k}\delta^{(k)}+ \hatphi(0)\delta^{(0)} + \hatphi'(\zeta) \nonumber
\end{align}
The space $\SING$ can be equipped with a commutative and associative product, the convolution \mbox{product}~$*$, with unit $\delta$ \cite[Proposition 8]{sauzin2007}. The space $\SINGsimpRam$ is closed under this product, such that $\SINGsimpRam \subset \SING$ is an algebra inclusion. The rules for computing the convolution product on $\SINGsimpRam$ follow from
\begin{enumerate}
    \item 
    \begin{align}
        \sing_0\left(\hatphi(\zeta) \frac{\log \zeta}{2\pi \ii} \right) * \sing_0\left(\hatpsi(\zeta) \frac{\log \zeta}{2\pi \ii} \right) = \sing_0\left(\hatphi(\zeta)*\hatpsi(\zeta) \frac{\log \zeta}{2\pi \ii} \right)
    \end{align}
    \item 
    \begin{align}
        \left(\frac{d}{d \zeta}\right)^n \left( \singphi * \singpsi \right) =  \left( \left(\frac{d}{d \zeta}\right)^n \singphi \right) * \singpsi  = \singphi * \left(\frac{d}{d \zeta}\right)^n  \left(\singpsi \right)
    \end{align}
\end{enumerate}
E.g.,
\begin{align}
    \delta^{(m)} * \delta^{(n)} = \left(\frac{d}{d \zeta}\right)^{m+n} \delta = \delta^{(m+n)} \,.
\end{align}

\subsection{Moving away from the origin}
The theory of resurgence requires analytically continuing the holomorphic germ $\hatphi \in \IC\{\zeta\}$ away from the origin. 

\begin{definition} \cite[Definition 6.1]{Sauzin}
    Let $\Omega$ be a non-empty closed discrete subset of $\IC$. We say that $\hatphi \in \IC\{\zeta\}$ is an $\Omega$-continuable germ if it can be analytically continued along any path in $\IC - \Omega$ from within a pointed disc $D^*(0,R)$ of radius smaller than the radius of convergence of $\hatphi$ and satisfying $D^*(0,R) \cap \Omega = \emptyset$. We denote the set of all $\Omega$-continuable germs by $\sRhat_{\Omega}$.
\end{definition} 
Using the canonical embedding \eqref{eq:convPowerSING}, we can interpret $\sRhat_{\Omega}$ as a subspace of $\SING$. We then introduce the direct sum
\begin{align} \label{eq:omResF}
    \IC \delta^{(n-1)} \oplus \cdots \oplus \IC \delta \oplus \sRhat_{\Omega} \,,
\end{align}
whose elements we call $\Omega$-resurgent function of order $n$. We call the set of all such functions for arbitrary $n$ the set of $\Omega$-resurgent functions.

It will sometimes prove convenient to embed $\sRhat_{\Omega} \hookrightarrow   \IC\{\zeta\}$ without invoking \eqref{eq:convPowerSING}. To make sense of \eqref{eq:omResF} in such instances, we need an alternative home for $\delta^{(n)}$. We return to this in Section \ref{app:backToz}.

\subsection{Alien operators in general}
Singularities of the analytic continuation of holomorphic germs at the origin, arising as the image of Borel transforms, are at the origin of the Stokes automorphism phenomenon. An alien operator extracts the singularity of the analytic continuation of a holomorphic germ at the origin to a point $\omega \in \IC$. In full generality, 
\begin{definition} \label{def:Agen}
    We define
\begin{equation}
    \begin{aligned}
        \sA_\omega^{\gamma, \xi} : \IC \delta^{(n-1)} \oplus \cdots \oplus \IC \delta \oplus \sRhat_{\Omega} & \rightarrow \mathrm{SING} \\
        \sum_{k=1}^n a_{-k} \delta^{(k-1)} + \hatphi & \mapsto \mathrm{sing}_0 (\fana) \,,
    \end{aligned}
\end{equation}
where
\begin{equation} \label{eq:deffanaforA}
    \fana(\zeta) = (\mathrm{cont}_\gamma \hatphi) (\omega +\pi(\zeta)) \quad \mathrm{for} \quad \zeta \in \tD \subset \tC \,.
\end{equation}
Here, $\gamma$ is a path from a point with the convergence radius of $\hatphi$ around the origin to a point $\zeta_1 \in \IC - \Omega$ at which a disk $D \subset \IC-\Omega$ is centered to which $\omega$ is adherent. $-\omega + D$ hence defines a disk with center $-\omega + \zeta_1$ to which the origin is adherent. We choose $\xi \in \tC$ in the preimage of $-\omega + \zeta_1$ under the projection $\pi$, and set $\tD = \tD_{-\omega+D}(\xi)$. By $\Omega$-continuability of $\hatphi$, $\fana$ as defined in \eqref{eq:deffanaforA} can be analytically continued to yield an element of $\ana$. 
\end{definition}
$\sA^{\gamma,\xi}_\omega$ is an example of an alien operator. A general alien operator is defined as compatible compositions of such operators, see below. We note the following:
\begin{itemize}
    \item $\sA_\omega^{\gamma, \xi}$ essentially extracts the singularity of the analytic continuation of $\hatphi$ to a neighborhood of $\omega$. The lift to $\tC$ is necessary to accommodate the definition of $\SING$.
    \item $\sA_\omega^{\gamma,\xi} \delta^{(i)} = 0$ by definition. This definition is natural, as a pole at the origin has a unique analytic continuation to a holomorphic function on all of $\IC^*$.
    \item The lift of a pole via \eqref{eq:deffanaforA} to an element of $\ana$   does not depend on the choice of $\xi$. The lift of a function $\hatpsi(\zeta) \anyLog(\zeta)$, for $\hatpsi \in \IC\{\zeta\}$ and $\anyLog$ denoting a specific branch of the logarithm, does depend on $\xi$. Different choices differ by $2\pi \ii n \hatpsi$ for some $n \in \IZ$, hence project onto the same element in $\SING$.
\end{itemize}
We can now consider the subset of $\Omega$-resurgent functions defined by
\begin{definition}
    We call a simply ramified $\Omega$-resurgent function of order $n$ any $\Omega$-resurgent function $\phiOR$ such that, for all $(\omega, \gamma, \xi)$ as in Definition \ref{def:Agen}, $\sA_\omega^{\gamma, \xi} \phiOR$ is a simply ramified singularity of order $n$. We denote the subset of all simply ramified $\Omega$-resurgent functions of order $n$ contained in $\sRhat_\Omega$ by $\sRhat^{\sr,n}_\Omega$, and denote by $\sRhat^{\sr}_\Omega$ the set of simply ramified $\Omega$-resurgent functions of arbitrary order contained in $\sRhat_\Omega$.
\end{definition}
The set of all simply ramified $\Omega$-resurgent functions is thus
\begin{align}
    \Esing_0 :=\bigoplus_{n\in \IN} \IC \delta^{(n)} \oplus \sRhat^{\sr}_\Omega \,.
\end{align}
Note that by the comment above, we can omit the superscript $\xi$ from $\sA^{\gamma,\xi}_\omega$ when acting on simple ramified $\Omega$-resurgent functions. We will do so in the following.

In simple terms, if $\hatphi \in \sRhat^{\sr}_\Omega$, then for any $\omega \in \Omega$, any analytic continuation of $\hatphi$ along a path $\gamma$ leading close to $\omega$ has the form
\begin{equation} \label{eq:anContPhiSram}
     (\mathrm{cont}_\gamma \hat\varphi) (\omega +\zeta) = \sum_{k=1}^n \frac{(-1)^{k-1} (k-1)! \,a_{-k}}{2 \pi \ii \,\zeta^{k}} + \hat{\psi}(\zeta) \frac{\anyLog(\zeta)}{2\pi \ii} + R(\zeta) \,,
\end{equation}
for $\zeta$ close to zero and $R(\zeta) \in \IC\{\zeta\}$. The alien operator along $\gamma$ at $\omega$ acting on $\hatphi$ then yields
\begin{equation}
    \sA_\omega^{\gamma} \hphi = \sum_{k=1}^{n} a_{-k} \delta^{(k-1)} + \hat{\psi} \,.
\end{equation}
To allow for composing alien operators of the form $\sA^\gamma_\omega$, it is convenient to invoke the isomorphism \eqref{eq:isoSingSimpInf} to regard $\Omega$-resurgent functions as simply ramified singularities. With this understanding, it follows that \cite[Lemma 6.54]{Sauzin}
\begin{align}
    \sA^\gamma_\omega \quad : \quad \omegaResurgentSimplyRamified \quad \longrightarrow \quad \bigoplus_{n\in \IN} \IC \delta^{(n)} \oplus \sRhat^{\sr}_{-\omega+\Omega} \,.
\end{align}
It will prove convenient to speak of singularities based at $\omega \in \IC$ rather than the origin. We can straightforwardly generalize the definition of the spaces $\ana$ and $\SING$ to their analogues based at $\omega \in \IC$, and define the corresponding projection map $\sing_\omega$. We then define
\begin{align}
    \delta_\omega = \sing_\omega \left( \frac{1}{2 \pi \ii \,(\zeta-\omega)}\right) \,, \quad \delta^{(k)}_\omega = \sing_\omega \left( \frac{(-1)^k k!}{2 \pi \ii \,(\zeta-\omega)^{k+1}} \right) \,,\,\,k\ge 0\,.
\end{align}
We define a translation operator $\tau_\omega$ whose action on $\Esing_0$ is induced by
\begin{align}
    \tau_\omega(\delta^{(n)}) := \delta^{(n)}_\omega \,, \quad \tau_\omega(\hatphi)(\zeta) := \hatphi(\zeta - \omega) \,\,\,\mathrm{for}\,\,\,\hatphi\in \IC\{\zeta\}\,.
\end{align}
and use it define the space of simply ramified $\Omega$-resurgent functions based at $\omega$ via
\begin{align}
    \Esing_\omega := \tau_\omega \left(\bigoplus_{n\in \IN} \IC \delta^{(n)}\oplus \sRhat^{\sr}_{-\omega + \Omega}\right) \,.
\end{align}
We finally define the pointed alien operator $\sA^\gamma_\omega$ as
\begin{align}
    \dot{\sA}^\gamma_\omega := \tau_\omega \circ \sA^\gamma_\omega \quad : \quad \Esing_0 \rightarrow \Esing_\omega \,.
\end{align}
Using the translation operator, we extend the definition of the convolution product on singularities to $\Esing_\omega$ via
\begin{align} \label{eq:pointedConvolution}
    (\varphi,\psi) \in (\Esing_{\omega'},\Esing_{\omega''}) \quad \Longrightarrow \quad \varphi * \psi := \tau_{\omega'+\omega''} \left(\tau_{\omega'}^{-1} \varphi * \tau_{\omega''}^{-1}\psi \right) \in \Esing_{\omega'+\omega''} \,.
\end{align}
The fact that the product is an element of $\Esing_{\omega'+\omega''}$ is a consequence of Theorem \ref{thm:towardsStokesAuto} below. Note that we may have to replace $\Omega$ by $\Omega \cup (\Omega + \Omega)$. In the following, we will assume that $\Omega$ is stable under addition. 

\subsection{Alien operators and rays of singularities} \label{app:alienOpsAndRays}
We will in particular be interested in considering singularities along a ray $d$, as this notion will directly enter in the consideration of the Stokes automorphism. Let $\omega \in \Omega$ and define $[0,\omega] \subset d$ as the straight line from $0$ to $\omega$. We introduce
\begin{equation}
    \Omega \cap [0,\omega] = \{\omega_0, \ldots, \omega_r \} \,, \quad \omega_0 = 0\,,\,\, \omega_r = \omega \,,
\end{equation}
where the $\omega_i$ are ordered by absolute value. We define canonical paths labeled by $\epsilon = (\epsilon_1 , \ldots ,\epsilon_{r-1}) \in \{+,-\}^{r-1}$ from 0 to $\omega$, where we deviate from the straight line connecting 0 to $\omega$ to avoid $\omega_i$ via a small semi-circle to the left or to the right (as we move in the direction from 0 to $\omega$) depending on whether $\epsilon_i = -$ or $+$. Based on this class of paths, we define two alien operators
\begin{equation}
    \Delta_\omega^+ = \mathscr{A}^{(+,\ldots,+)}_\omega \,, \quad \Delta_\omega = \sum_{\epsilon \in \{+,-\}^{r-1}} \frac{p(\epsilon)q(\epsilon)}{r!}  \mathscr{A}^{\epsilon}_\omega \,,
\end{equation}
where $p(\epsilon)$ and $q(\epsilon)$ count the number of `$+$' and `$-$' symbols respectively in the tuple $\epsilon$. These two classes of operators are not independent. They satisfy the relation \cite[Theorem 6.72]{Sauzin}
\begin{align} \label{eq:Dloglike}
    \Delta_\omega &= \sum_{s=1}^\infty \frac{(-1)^{s+1}}{s} \sum_{(\eta_1, \ldots, \eta_{s-1}) \in \Sigma(s, \omega, \Omega)} \Delta^+_{\omega-\eta_{s-1}} \circ \cdots \circ \Delta^+_{\eta_{2}-\eta_{1}} \circ \Delta^+_{\eta_1} \,, \\ \label{eq:Dexplike}
    \Delta^+_\omega &= \sum_{s=1}^\infty \frac{1}{s!} \sum_{(\eta_1, \ldots, \eta_{s-1}) \in \Sigma(s,\omega,\Omega)} \Delta_{\omega-\eta_{s-1}} \circ \cdots \circ \Delta_{\eta_{2}-\eta_{1}} \circ \Delta_{\eta_1} \,, 
\end{align}
where $\Sigma(s, \omega, \Omega)$ is the set of all increasing sequences (ordered by absolute value) $(\eta_1, \ldots, \eta_s) \in (0,\omega) \cap \Omega$ and for $s=1$, the composite operators in the summands are interpreted as $\Delta^+_\omega$, $\Delta_\omega$ respectively. These relations can be written in more convenient form as follows:
\begin{itemize}
    \item We eliminate the need to shift $\Omega$ upon acting with an alien operator by considering pointed alien operators,
    \begin{align}
        \dot{\Delta}^+_\omega := \tau_\omega \circ \Delta^+_\omega \,, \quad \dot{\Delta}_\omega := \tau_\omega \circ \Delta_\omega \quad:\quad \Esing_0 \rightarrow \Esing_\omega \,.
    \end{align}
    \item We straightforwardly generalize the definition of $\dot{\Delta}^+_\omega$, $\dot{\Delta}^+_\omega$ to obtain, for $i<j$,
    \begin{align}
        \dotDelta^+_{\omega_i, \omega_j} := \tau_{\omega_j} \circ \Delta^+_{\omega_i+\omega_j} \circ \tau_{\omega_i}^{-1}\,, \quad \dotDelta_{\omega_i, \omega_j} := \tau_{\omega_j} \circ \Delta_{\omega_i+\omega_j} \circ \tau_{\omega_i}^{-1}\quad: \quad \Esing_{\omega_i} \rightarrow \Esing_{\omega_j} \,.
    \end{align}
    \item We set
    \begin{align}
        \Esing(d) = \hat{\bigoplus}_{\omega \in \Omega \cap d} \Esing_\omega
    \end{align}
    (the hat above the direct sum symbol indicates completion, i.e. infinite sums are allowed). We call $\singphi \in \Esing_\omega \hookrightarrow \Esing(d)$ a homogeneous element of $\Esing(d)$ of degree $\omega$ (or, by keeping track of the ordering along $d$, of degree $m$ for $\omega = \omega_m$). Note that $\Esing(d)$ is an algebra with regard to the convolution product \eqref{eq:pointedConvolution}. We define, for $r\in \IN^*$, operators 
    \begin{align}
        \dotDeltaPlusR\,, \,\, \dotDeltaR \quad : \quad \Esing(d) \rightarrow \Esing(d)
    \end{align}
    such that
    \begin{align}
        \dotDeltaPlusR |_{\Esing_{\omega_i}} = \dot{\Delta}^+_{\omega_i, \omega_{i+r}} \,, \quad         \dotDeltaR |_{\Esing_{\omega_i}} = \dot{\Delta}_{\omega_i, \omega_{i+r}} \,.
    \end{align}
\end{itemize}
With these definitions, the relations \eqref{eq:Dloglike} and \eqref{eq:Dexplike} yield the following \cite[Theorem 6.73]{Sauzin}\footnote{Note that Sauzin proves the theorem for simple singularities, but the extension to simply ramified singularities is immediate.} 
\begin{theorem}
    With the notation from above, define the two operators
    \begin{align}
        \Delta^+_d := \mathrm{Id} + \sum_{r\in \IN^*}\dotDeltaPlusR \,, \quad \Delta_d :=\sum_{r \in \IN^*} \dotDeltaR \,.
    \end{align}
    Then the following two relations hold:
    \begin{align} \label{eq:DeltaDandDeltaPlusD}
        \Delta_d = \sum_{s\in \IN^*} \frac{(-1)^{s-1}}{s} \left(\Delta_d^+ - \mathrm{Id} \right)^s \,, \quad \Delta^+_d = \sum_{s\in \IN} \frac{1}{s!} (\Delta_d)^s \,.
    \end{align}
\end{theorem}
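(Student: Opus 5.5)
The identity \eqref{eq:DeltaDandDeltaPlusD} is a graded reformulation of the relations \eqref{eq:Dloglike} and \eqref{eq:Dexplike}, so the plan is to deduce it from them by comparing homogeneous components. By linearity and completeness of $\Esing(d)$, it suffices to check both identities on a homogeneous element $\singphi \in \Esing_{\omega_i}$ and then project onto each $\Esing_{\omega_{i+r}}$. The operators $\dotDeltaPlusR$ and $\dotDeltaR$ raise the degree by exactly $r$ steps along $d$. Hence the degree-$(i+r)$ part of $(\Delta_d^+ - \mathrm{Id})^s\singphi$ is
\begin{align}
    \sum_{\substack{r_1,\ldots,r_s \in \IN^* \\ r_1+\cdots+r_s = r}} \dotDelta^+_{r_s}\circ\cdots\circ\dotDelta^+_{r_1}\,\singphi \,,
\end{align}
a finite sum. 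The infinite series over $s$ therefore truncate at $s\le r$ in each degree, and all manipulations below are legitimate in the completed direct sum.

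Next, I identify this expression with the right-hand side of \eqref{eq:Dloglike}. Take the pointed operators. Set $\eta_k=\omega_{i+r_1+\cdots+r_k}-\omega_i$. Composed pointed operators then telescope:
\begin{align}
    \dotDelta^+_{\omega_{i+a},\omega_{i+b}}\circ\dotDelta^+_{\omega_i,\omega_{i+a}} = \tau_{\omega_{i+b}}\circ\Delta^+_{\omega_{i+b}-\omega_{i+a}}\circ\Delta^+_{\omega_{i+a}-\omega_i}\circ\tau_{\omega_i}^{-1} \,,
\end{align}
because the intermediate $\tau^{-1}_{\omega_{i+a}}\tau_{\omega_{i+a}}$ cancel. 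This needs a small care point. The definition writes $\Delta^+_{\omega_i+\omega_j}$, but it should be read as the alien operator at the displacement $\omega_j-\omega_i$ applied to the singularity recentred at the origin, with respect to $-\omega_i+\Omega$. I will state that reading explicitly, and use the standing assumption that $\Omega$ is stable under addition so that all displaced sets stay inside $\Omega$.

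Under this dictionary, compositions $(r_1,\ldots,r_s)$ of $r$ correspond bijectively to increasing chains $(\eta_1,\ldots,\eta_{s-1})\in\Sigma(s,\omega_{i+r}-\omega_i,-\omega_i+\Omega)$. This uses that the singular points on the segment of $d$ from $\omega_i$ to $\omega_{i+r}$ are exactly the $\omega_{i+k}$. Applying \eqref{eq:Dloglike} at the point $\omega_{i+r}-\omega_i$ to $\tau_{\omega_i}^{-1}\singphi$ and conjugating by $\tau$ then gives the first identity in each degree. The second identity follows the same way from \eqref{eq:Dexplike}. Alternatively, it follows formally: both sides are power series in $\Delta_d^+-\mathrm{Id}$, which is nilpotent in each degree, so $\log$ and $\exp$ are mutually inverse there.

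The main obstacle is justifying \eqref{eq:Dloglike} and \eqref{eq:Dexplike} for simply ramified rather than simple singularities, since Sauzin proves them only for the latter. The route I would take is to redo Sauzin's proof of Theorem 6.72: it compares analytic continuations along the paths $\epsilon$ using only the decomposition \eqref{eq:anContPhiSram} and the monodromy $\anyLog\mapsto\anyLog+2\pi\ii$. Pole terms $\delta^{(k)}$ carry no monodromy, and $\sA^\gamma_\omega$ annihilates them. So the extra polar parts contribute only to the output coefficients $a_{-k}$ and never to the combinatorics of paths. The combinatorial weights $p(\epsilon)q(\epsilon)/r!$ and the resulting identities are therefore unchanged.
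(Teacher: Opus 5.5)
Your proposal is correct and follows the paper's route. The paper obtains the graded identities directly from Sauzin's relations (Theorems 6.72--6.73) and asserts that the extension to simply ramified singularities is immediate; your degree-by-degree telescoping of pointed operators, and your observation that the poles $\delta^{(k)}$ carry no monodromy and are annihilated by alien operators, make that explicit.

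One correction for when you redo Sauzin's argument: the weights in $\Delta_\omega$ must be $p(\epsilon)!\,q(\epsilon)!/r!$. The paper's $p(\epsilon)q(\epsilon)/r!$ is a typo that would give $\Delta_{\omega_1}=0$, and you should fix it just as you correctly repaired the subscript $\Delta^+_{\omega_i+\omega_j}$ to $\Delta^+_{\omega_j-\omega_i}$.
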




\subsection{Back to formal series in the $z$-plane} \label{app:backToz}
The Borel transform $\sB$ as defined in \eqref{eq:Boreltransform} induces a linear isomorphism between $z\,\IC[[z]]_1$ and $\IC \{\zeta\}$. Under this isomorphism, we have
\begin{align} \label{eq:isoDzetaDivz}
    \sB\left(\frac{1}{z} \tildephi\right) = \frac{d}{d\zeta}\sB(\tildephi) \,, \quad \tildephi \in z^2\,\IC[[z]]_1 \,,
\end{align}
i.e. the operation multiplication by $\tfrac{1}{z}$ intertwines with $\tfrac{d}{d\zeta}$ under Borel transform. By enlarging $z \, \IC[[z]]_1$ to\footnote{\label{footnote:defGevreyLaurent}Elements of this space are sums of polynomials in $\tfrac{1}{z}$ and Gevrey-1 series in $z$. The notation $\IC((z))$ for formal Laurent series is standard, the notation $\IC((z))_1$ to our knowledge is not.}
\begin{align} \label{eq:extendingGevrey}
    z\, \IC[[z]]_1 \hookrightarrow \varinjlim_n \, z^{-n}\, \IC[[z]]_1 =: \IC((z))_1 \,, 
\end{align}
we obtain a space which is closed under multiplication by $\tfrac{1}{z}$. We correspondingly extend $\IC\{\zeta\}$ by adjoining formal generators $\delta^{(n)}$,
\begin{align} \label{eq:extendingConvPower}
    \IC\{\zeta\} \hookrightarrow \bigoplus_{n\in \IN} \IC \,\delta^{(n)}\oplus \IC\{\zeta\} =: \IC\{\zeta\}_{\delta}\,,
\end{align}
such that we can extend the definition of the Borel transform $\sB$ via
\begin{align}
    \sB\left(\frac{1}{z^n}\right) = \delta^{(n)} \,, \quad n \in \IN \,.
\end{align}
In the following, we will also write $\delta := \delta^{(0)}$. To retain the intertwining operator identity between $\times \tfrac{1}{z}$ and $\tfrac{d}{d\zeta}$, we must then extend the action of $\tfrac{d}{d\zeta}$ to $\IC\{\zeta\}_{\delta}$ by defining it as differentiation on $\zeta \IC\{\zeta\}$, and imposing furthermore
\begin{align}
    \frac{d}{d\zeta} 1 = \delta \,, \quad \frac{d}{d\zeta} \delta^{(n)} = \delta^{(n+1)} \,\,\,\, \mbox{for} \,\, n \in \IN\,.
\end{align}
Note that this action intertwines with the action of $\tfrac{d}{d\zeta}$ upon embedding $\IC\{\zeta\}_{\delta}$ into $\SING$, see in particular \eqref{eq:diffIsoSing}.\footnote{Indeed, to avoid a proliferation of symbols and following the conventions in the literature, we have used the same symbols $\delta^{(n)}$ and $\tfrac{d}{d\zeta}$ on both sides of this map.} 

Note also that there exists a unique extension of the convolution product to $\IC\{\zeta\}_{\delta}$ such that the Borel transform $\sB$ induces an algebra homomorphism between $\IC((z))_1$ and $\IC\{\zeta\}_{\delta}$: $\delta$ must act as the convolution product identity, and the product rule $\tfrac{d}{d\zeta} (f*g) = (\tfrac{d}{d\zeta} f)*g=f* (\tfrac{d}{d\zeta} g)$ must be retained.

We extend the definition of alien operators to the space $\IC((z))_1$ by composing the definitions above by $\sB^{-1}$. We similarly introduce the counterpart of sets defined on the Borel plane by taking their inverse image under the Borel transform. E.g., we define
\begin{align}
    \sRtilde^{\sr}_\Omega = \sB^{-1}\sRhat^{\sr}_\Omega \,.
\end{align}
To apply the pointed alien operators to formal power series in $z$, we can formally map the space $\tau_\omega \,\IC\{\zeta\}$ to the $z$-plane via
\begin{align}
    \sB^{-1} \left(\tau_\omega \hatphi \right) = \sB^{-1} \left( \e^{-\omega \,\tfrac{d}{d \zeta}} \right) \sB^{-1}(\hatphi) = \e^{-\omega/z} \tildephi \,,
\end{align}
where we have applied \eqref{eq:isoDzetaDivz} termwise. Note that the right-hand side, upon expanding the exponential, is not an element of $C((z))$, as formal Laurent series have a leading pole of finite order. It is therefore best to think of it as a symbol rather than a power series \cite[Section 6.12.4]{Sauzin}. We also define
\begin{align}
    \sB^{-1} (\delta^{(n)}_\omega) = \e^{-\omega/z} \sB^{-1}(\delta^{(n)}) \,. 
\end{align}
The notation is suggestive, as for $\tildephi$ fine-summable,
\begin{align} \label{eq:symbolvsExponential}
    \sL^\theta \circ \sB(\e^{-\omega/z} \tildephi) = \sL^\theta (\tau_\omega \hatphi) = \e^{-\omega/z} \sS^\theta (\tildephi) \,,
\end{align}
with $\e^{-\omega/z}$ on the left-hand side indicating the exponential symbol, on the right-hand side the exponential function. In terms of this symbol, we have e.g., for $\tildephi \in \IC((z))_1$,
\begin{align} \label{eq:dottedUndotted}
    \dotDelta_\omega \tildephi = \e^{-\omega/z} \Delta_\omega \tildephi \,.
\end{align}

\subsection{The asymptotics of the the coefficients of the formal power series $\tildephi(z)$} \label{app:asympCoeff}
    \usetikzlibrary{arrows.meta,decorations.markings}
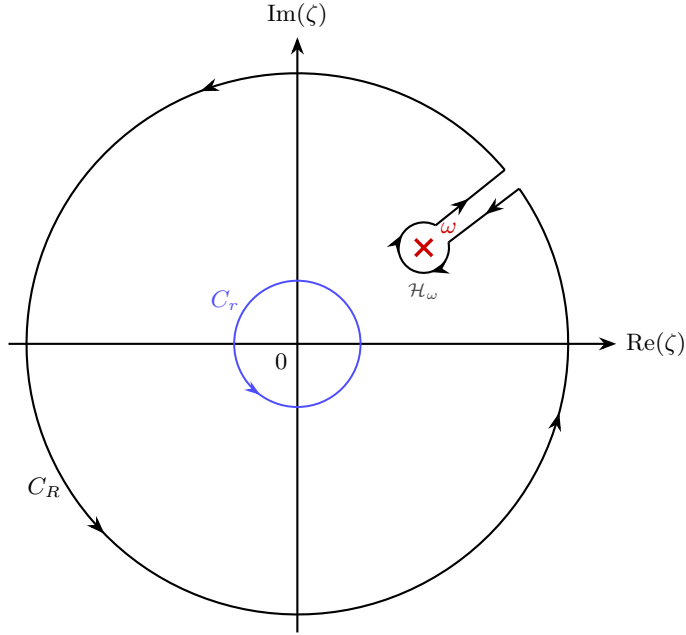
\begin{figure}[H]
    \centering\begin{tikzpicture}[
  scale      = 0.80,
  line width = 0.80pt,
  >=Stealth,
]

\draw[->] (-4.8,0)--(5.3,0)
  node[right,font=\small]{$\operatorname{Re}(\zeta)$};
\draw[->] (0,-4.8)--(0,5.1)
  node[above,font=\small]{$\operatorname{Im}(\zeta)$};
\node[below left,font=\small] at (0,0){$0$};


\draw[red!80!black,line width=1.3pt]
  (1.96,1.46)--(2.24,1.74)
  (1.96,1.74)--(2.24,1.46);
\node[red!80!black,above right,font=\small]
  at (2.22,1.68){$\omega$};

\draw[blue!70,line width=0.75pt,
  postaction={decorate},decoration={markings,
    mark=at position 0.65 with
      {\arrow[blue!70]{Stealth[length=6pt,width=4pt]}}}
] (0,0) circle[radius=1.05];
\node[blue!70,font=\small] at (-1.2,0.72){$C_r$};


\draw[
  postaction={decorate},decoration={markings,
    mark=at position 0.14 with{\arrowreversed{Stealth[length=7pt,width=5pt]}},
    mark=at position 0.48 with{\arrowreversed{Stealth[length=7pt,width=5pt]}},
    mark=at position 0.80 with{\arrowreversed{Stealth[length=7pt,width=5pt]}}}
] ({4.5*cos(35)},{4.5*sin(35)})
  arc[start angle=35, delta angle=-355, radius=4.5];

\node[font=\small] at ({4.5*cos(213)-.43},{4.5*sin(213)+.06})
  {$C_R$};

\draw[
  postaction={decorate},decoration={markings,
    mark=at position 0.52 with{\arrowreversed{Stealth[length=7pt,width=5pt]}}}
] ({4.5*cos(40)},{4.5*sin(40)}) -- (2.297,1.971);

\draw[
  postaction={decorate},decoration={markings,
    mark=at position 0.28 with{\arrowreversed{Stealth[length=7pt,width=5pt]}},
    mark=at position 0.72 with{\arrowreversed{Stealth[length=7pt,width=5pt]}}}
] (2.297,1.971)
  arc[start angle=62, delta angle=310, radius=0.42];

\draw[
  postaction={decorate},decoration={markings,
    mark=at position 0.52 with{\arrowreversed{Stealth[length=7pt,width=5pt]}}}
] (2.511,1.687) -- ({4.5*cos(35)},{4.5*sin(35)});


\node[font=\scriptsize,gray!55!black]
  at (2.1,{1.6-0.42-0.30}){$\mathcal{H}_\omega$};


\end{tikzpicture}
    \caption{Integration contour for asymptotic estimate of coefficients of $\tildephi$} \label{fig:asymCoeff}
\end{figure}
Let $\tildephi(z) = \sum_{m=0}^\infty c_m z^{m+1} \in \sRtilde^{\sr}_\Omega$. Assume that there exists an $\omega \in \Omega$ closest to the origin, and let $\omega'$ be a singularity second closest to the origin, with $|\omega'|\le2|\omega|$. The Borel transform $\hatphi(\zeta) = \sum_{m=0}^\infty \frac{c_m}{m!} \zeta^{m}$ of $\tildephi(z)$ satisfies
\begin{align}
    \dotDelta_\omega \hatphi= \sum_{k=1}^n c^\omega_{-k} \delta_\omega^{(k-1)} + \hatphi^{\omega}(\zeta-\omega)  
\end{align}
for some $n$. Let $C_r$ be the integration contour indicated in Figure~\ref{fig:asymCoeff} for some $r < |\omega|$. Then
\begin{align}
    \frac{c_m}{m!} = \frac{1}{2\pi \ii} \oint_{C_r} \frac{\hatphi(\zeta)}{\zeta^{m+1}} d\zeta=: I(C_r) \,.
\end{align}
Next, we push out the integration contour to $C_R$ as depicted in Figure \ref{fig:asymCoeff}, with $|\omega| < R < |\omega'|$ and let $C_R' = C_R - \cH_\omega$, such that
\begin{align}
    \frac{c_m}{m!} = I(C_R) =  I(\cH_\omega) + I(C'_R)\,.  
\end{align}
The integral $I(C'_R)=:\cE_{C'_R}$ constitutes one of several error terms to the result we wish to establish. It can be bounded by
\begin{align} \label{eq:ECR}
    |\cE_{C'_R}| \le \frac{M_R}{R^m} \,, \quad M_R = \max_{\zeta\in \overline{C'_R}} |\hatphi(\zeta)|\,,
\end{align}
where by abuse of notation, we are using $\hatphi$ to denote the analytic continuation, avoiding $\ray_\omega$, of $\hatphi \in \sRhat^{\sr}_\Omega$  to an open set containing $C'_R$. The integration around the Hankel-like contour $\cH_\omega$ yields
\begin{align} \label{eq:twoContributionsHankel}
    I(\cH_\omega) = -\frac{1}{2\pi \ii}\sum_{k=1}^n  \frac{(m+k-1)!}{m!}\frac{c^\omega_{-k}}{\omega^{m+k}} - \frac{1}{2\pi \ii} \int_\omega^{R\e^{\ii \theta_\omega}} \frac{\hatphi^\omega(\zeta-\omega)}{\zeta^{m+1}} d\zeta \,.
\end{align}
By assumption on $R$, $\hatphi^\omega(\zeta-\omega)$ can be written as a  power series
\begin{align} \label{eq:defckomega}
    \hatphi^\omega(\zeta - \omega) = \sum_{k=0}^\infty \frac{c^\omega_k}{k!} (\zeta-\omega)^k
\end{align}
which converges in the disk $D(\omega, |\omega'-\omega|)$. Choosing $N$ such that $N+1<m$, we write
\begin{align}
    \hatphi^\omega(\zeta - \omega) = \sum_{k=0}^N \frac{c^\omega_k}{k!} (\zeta-\omega)^k + r_{N+1}(\zeta) \,.
\end{align}
We can estimate the error term $r_{N+1}(\zeta)$ for $\zeta \in D(\omega,|\omega'-\omega|)$ by choosing an integration contour $C_\rho(\omega)$ of radius $\rho$ satisfying 
\begin{align} \label{eq:Randrho}
    R< |\omega|+\rho <|\omega'| \,,
\end{align}
such that $[\omega, R\,e^{\ii \theta_\omega}] \subset D(\omega,\rho)$, to write for $\zeta \in D(\omega,\rho)$
\begin{align}
    r_{N+1}(\zeta) &= \frac{1}{2\pi \ii}\left[\oint_{C_\rho} \frac{\hatphi^\omega(\xi-\omega)}{\xi-\zeta}d\xi-\sum_{k=0}^N (\zeta-\omega)^k \oint_{C_\rho}\frac{\hatphi^\omega(\xi-\omega)}{(\xi-\omega)^{k+1}}d\xi \right]\\
    &= \frac{(\zeta-\omega)^{N+1}}{2\pi \ii}\oint_{C_\rho} \frac{\hatphi^\omega(\xi-\omega)}{(\xi-\omega)^{N+1}(\xi-\zeta)}d\xi \,,
\end{align}
yielding the estimate
\begin{align} \label{eq:rEstimate}
    |r_{N+1}(\zeta)| \le \left(\frac{|\zeta-\omega|}{\rho}\right)^{N+1}\frac{M^\omega_\rho}{1-\frac{|\zeta-\omega|}{\rho}} \,, \quad M^\omega_\rho = \max_{\xi \in \overline{C_\rho}}|\hatphi^\omega(\xi-\omega)| \,.
\end{align}
Then
\begin{align}
    - \frac{1}{2\pi \ii} \int_\omega^{R\e^{\ii \theta_\omega}} \frac{\hatphi^\omega(\zeta-\omega)}{\zeta^{m+1}} d\zeta = I + \cE_{r_{N+1}}
\end{align}
with 
\begin{align} \label{def:Ernp1}
    I = - \frac{1}{2\pi \ii} \sum_{k=0}^N \frac{c_k^\omega}{k!} \int_{\omega}^{R\e^{\ii \theta_\omega}} \frac{(\zeta-\omega)^k}{\zeta^{m+1}} d\zeta \,, \quad     \cE_{r_{N+1}} = - \frac{1}{2\pi \ii}  \int_{\omega}^{R\e^{\ii \theta_\omega}} \frac{r_{N+1}(\zeta)}{\zeta^{m+1}} d\zeta \,.
\end{align}
Introducing the variables $s$ and $\epsilon$ via $\zeta = \omega(s+1)$ and $\epsilon = R/|\omega|-1$, we obtain
\begin{align}
    \int_{\omega}^{R \e^{\ii \theta_\omega}} \frac{(\zeta-\omega)^k}{\zeta^{m+1}} d\zeta &= \frac{1}{\omega^{m-k}}\int_0^\epsilon \frac{s^k}{(1+s)^{m+1}} ds \\
    &= \frac{1}{\omega^{m-k}}\big( B(m-k,k+1)- \cE_\infty(k) \big) \,,
\end{align}
where the beta function $B(z_1,z_2)$ satisfies
\begin{align}
    B(z_1,z_2) = \int_0^1 t^{z_1-1}(1-t)^{z_2-1} dt = \frac{\Gamma(z_1)\Gamma(z_2)}{\Gamma(z_1+z_2)} \quad \mathrm{for} \quad \re(z_1), \re(z_2) >0 \,,
\end{align}
and 
\begin{align}
    \cE_\infty(k) =  \int_\epsilon^\infty \frac{s^k}{(1+s)^{m+1}} ds 
\end{align}
satisfies the estimate
\begin{align}
    |\cE_\infty(k)| \le \int_\epsilon^\infty (1+s)^{k-m-1} ds = \frac{1}{m-k}\frac{1}{(1+\epsilon)^{m-k}} = \frac{1}{m-k} \left(\frac{|\omega|}{R}\right)^{m-k}  \,.
\end{align}
The correction term due to extending the integral in $I$ to infinity is thus given by
\begin{align} \label{def:Einf}
    \cE_\infty = \frac{1}{2\pi \ii} 
    \frac{1}{\omega^m}\sum_{k=0}^N \frac{c^\omega_k \,\omega^k}{k!} \cE_\infty(k) \,,
\end{align}
and satisfies the estimate
\begin{align} \label{eq:Einf}
    |\cE_\infty| \le \frac{C_\infty}{(m-N)R^{m-N}} \,, \quad C_\infty = \frac{1}{2\pi} \max_{k=0,\ldots,N} \frac{|c^\omega_k|}{k!} \,.
\end{align}
It remains to estimate $\cE_{r_{N+1}}$. With \eqref{eq:rEstimate}, we obtain
\begin{align}
    |\cE_{r_{N+1}}| &\le \frac{1}{2\pi} \frac{M^\omega_\rho}{\rho^{N+1}} \int_{\omega}^{R \e^{\ii \theta_\omega}} \frac{1}{1-\frac{|\zeta-\omega|}{\rho}}\frac{(\zeta-\omega)^{N+1}}{\zeta^{m+1}}d\zeta \le \frac{1}{2\pi} \frac{M^\omega_\rho}{\rho^{N} (\rho+|\omega|-R)} \int_{\omega}^{R \e^{\ii \theta_\omega}} \frac{(\zeta-\omega)^{N+1}}{\zeta^{m+1}}d\zeta \nonumber \\
    &\le \frac{1}{2\pi} \frac{1}{|\omega|^m}\frac{M^\omega_\rho}{1-\frac{R-|\omega|}{\rho}} \left(\frac{|\omega|}{\rho}\right)^{N+1} \frac{(m-N-2)!(N+1)!}{m!} \,. \label{eq:ERN}
\end{align}
Putting this all together, we finally arrive at
\begin{align} \label{eq:pertAsymp}
    \frac{c_m}{m!} = -\frac{1}{2\pi \ii} \frac{1}{m!} \sum_{k=-n}^N \frac{(m-k-1)!}{\omega^{m-k}}c^\omega_k + \cE_\infty + \cE_{r_{N+1}} + \cE_{C'_R} \,,
\end{align}
with the error estimates \eqref{eq:Einf}, \eqref{eq:ERN} and \eqref{eq:ECR}.
It is clear that in the $m\gg1$ limit at constant $N$, all three error terms become small, yielding the classical estimate of the asymptotics of the coefficients $c_m$. To go beyond this estimate, note that $\cE_{r_{N+1}}$ is suppressed by a factor $ \tfrac{1}{|\omega|^m}$, vs. $\tfrac{1}{R^m}$ and $\tfrac{1}{R^{m-N}}$ for $\cE_{C'_R}$ and $\cE_\infty$ respectively. The term $1-\tfrac{R-|\omega|}{\rho}$ in the denominator of the estimate for $\cE_{r_{N+1}}$ reflects the fact that the estimate of the error term $r_{N+1}$ deteriorates as $\rho$ approaches the boundary of analyticity of $\hatphi^\omega$. Given \eqref{eq:Randrho}, we should hence resist the temptation of choosing $R$ too close to $|\omega'|$. As we discuss in the text, we can mitigate the errors $\cE_\infty$ and $\cE_{r_{N+1}}$ by performing the second contribution to the integral $I(\cH_\omega)$ in \eqref{eq:twoContributionsHankel} numerically, modeling $\hatphi^{\omega}$ by its Padé approximant.

\subsection{The Laplace transform of majors}
To trivially extend Theorem \ref{thm:LaplaceToAsymptotic} to the space 
\begin{align} \label{eq:CzetaDeltaLaplaceable}
    \bigoplus_{n \in \IN} \IC \, \delta^{(n)} \oplus \cN(\e^{\ii \theta}\IR^+) \,,
\end{align}
we define
\begin{align}
    \sL^{\theta}\left(\delta^{(n)}\right) = \frac{1}{z^n} \,.
\end{align}
Note that upon embedding \eqref{eq:CzetaDeltaLaplaceable} into $\SING$, we can obtain the action of $\sL^{\theta}$ by the following operation \cite[Section 6.9.2]{Sauzin}: we choose as major $\phimajor$ of the singularity 
\begin{align}
    \singphi = \sum_{k=1}^n a_{-k}\delta^{(k-1)} + \hatphi(\zeta) 
\end{align}
the singular germ
\begin{align} \label{eq:LaplaceMajors}
    \phimajor = \sum_{k=1}^n \frac{(-1)^{k-1} (k-1)! \,a_{-k}}{2 \pi \ii \,\zeta^{k}} + \hatphi(\zeta) \frac{\log(\zeta)}{2\pi \ii} \,,
\end{align}
and compute
\begin{align}
    \int_{\Gamma_{\theta,\epsilon}} \e^{-\frac{\zeta}{z}} \phimajor(\zeta)d\zeta \,,
\end{align}
where the integration contour $\Gamma_{\theta, \epsilon}$ comes in from infinity along the ray $\underline{\e}^{\ii (\theta-2\pi)}[\epsilon, \infty)$, encircles the origin counterclockwise along the circle of radius $\epsilon$, and then extends back to infinity along the ray $\underline{\e}^{\ii \theta} [\epsilon, \infty)$. The integral \eqref{eq:LaplaceMajors} is what Sauzin calls the Laplace transform of majors. Note that it is exactly this type of integral, albeit with integration path shifted away from the origin, which arises when computing Stokes automorphisms, see Section \ref{app:DeltaplusAndStokesAuto}.

\subsection{The operators $\Delta^+_d$ are automorphisms, and the operators $\Delta_d$ derivations}
The main rationale behind introducing the operator $\Delta_\omega$ alongside the more straightforward $\Delta^+_\omega$ is that the former define derivations with regard to the convolution product. This is shown in \cite[Section 6.13]{Sauzin} in the case of simple singularities. The proof carries over directly to the case of simply ramified singularities, the main difficulty being dealing with the logarithmic component of the singularity. The first step is to show that $\Delta^+_d$ is an automorphism of the algebra $\Esing(d)$. Directly generalizing \cite[Theorem 6.83]{Sauzin}, we have
\begin{theorem} \label{thm:towardsStokesAuto}
    Let $\tildephi \in \sRtilde^{\sr}_\Omega$, $\tildepsi \in \sRtilde^{\sr}_\Omega$. Then $\tildephi \tildepsi \in \sRtilde^{\sr}_\Omega$ and, for every $\omega \in \Omega - \{0\}$,
    \begin{align} \label{eq:deltaPlusProduct}
        \Deltaplus_{\omega} (\tildephi \tildepsi) &= (\Deltaplus_{\omega} \tildephi)\tildepsi + \sum_{\substack{\omega = \omega' + \omega'' \\ \omega', \omega'' \in (0,\omega)}} (\Deltaplus_{\omega'} \tildephi)(\Deltaplus_{\omega''}\tildepsi) + \tildephi (\Deltaplus_{\omega} \tildepsi) \,.
    \end{align}
\end{theorem}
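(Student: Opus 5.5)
Our plan is to follow the proof of \cite[Theorem 6.83]{Sauzin} closely, isolating the modifications needed when the poles are of arbitrary order. Work in the Borel plane. By linearity and the isomorphism \eqref{eq:isoSingSimpInf}, write $\sB\tildephi = \sum_k a_k\delta^{(k)} + \hatphi$ and $\sB\tildepsi = \sum_k b_k\delta^{(k)} + \hatpsi$ with $\hatphi,\hatpsi\in\sRhat^{\sr}_\Omega$. The product then splits into three kinds of terms.
\begin{enumerate}
\item Terms $\delta^{(m)}*\delta^{(n)}=\delta^{(m+n)}$. These are killed by every $\Deltaplus_\omega$, $\omega\neq 0$, consistently with the right-hand side of \eqref{eq:deltaPlusProduct}.
\item Mixed terms $\delta^{(m)}*\hatpsi = \bigl(\tfrac{d}{d\zeta}\bigr)^m\hatpsi$ plus finitely many $\delta^{(j)}$'s, by \eqref{eq:diffIsoSing}.
\item The genuine convolution $\hatphi*\hatpsi$.
\end{enumerate}
For the mixed terms, we need that $\Deltaplus_\omega$ commutes with $\tfrac{d}{d\zeta}$ on $\sRhat^{\sr}_\Omega$. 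This holds because analytic continuation commutes with differentiation, and differentiating the local form \eqref{eq:anContPhiSram} reproduces \eqref{eq:diffIsoSing} applied to $\sA^\gamma_\omega\hatphi$. Since $\Deltaplus_\omega\delta^{(m)}=0$, this disposes of these terms: only $(\Deltaplus_\omega\tildephi)\tildepsi$ and $\tildephi(\Deltaplus_\omega\tildepsi)$ contribute.

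The core step is the convolution of two $\Omega$-continuable simply ramified germs. Here I would reproduce Sauzin's symmetric-path argument. First, continue $\hatphi*\hatpsi(\zeta)=\int_0^\zeta\hatphi(\xi)\hatpsi(\zeta-\xi)\,d\xi$ along the path $\gamma^{(+,\dots,+)}$ to a point near $\omega$, deforming the integration path so that it avoids both $\Omega$ and $\zeta-\Omega$. This is the standard proof that $\sRhat_\Omega$ is stable under convolution, assuming $\Omega+\Omega\subset\Omega$. Next, near $\zeta=\omega$, the deformed path must be pinched. It is forced to pass through points $\omega'$ with $\omega-\omega'=\omega''\in\Omega$ on the segment $(0,\omega)$.

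We then decompose the integral into pieces.
\begin{itemize}
\item Pieces near the endpoints $0$ and $\zeta$ give $\hatpsi*\mathrm{cont}\hatphi$ and $\hatphi*\mathrm{cont}\hatpsi$. Their singular parts at $\omega$ are $\hatphi*\Deltaplus_\omega\hatpsi$ and $(\Deltaplus_\omega\hatphi)*\hatpsi$.
\item Pieces near each pinch point $\omega'$ produce a local convolution of the singularity of $\hatphi$ at $\omega'$ with that of $\hatpsi$ at $\omega''$. Its singular part is the convolution of singularities $(\Deltaplus_{\omega'}\hatphi)*(\Deltaplus_{\omega''}\hatpsi)$ in $\SING$, after translating with $\tau$.
\end{itemize}
The $+$ labels on every intermediate point follow because the chosen continuation path and the integration path both pass to the right.

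The main obstacle is identifying these pinch contributions as convolutions in $\SINGsimpRam$ when poles of arbitrary order appear, and checking that the result is again simply ramified. My first attempt is to avoid recomputing with poles altogether. Each singularity $\sum a_{-k}\delta^{(k-1)}+\sing_0(\hatchi\log/2\pi\ii)$ can be written as $\bigl(\tfrac{d}{d\zeta}\bigr)^{N}$ applied to a simple singularity. Concretely, we integrate the pole part $N$ times to logarithms times polynomials, and take $N$-fold primitives of the logarithmic part; both remain in the simple class up to holomorphic germs, which vanish in $\SING$. Sauzin's theorem for simple singularities then applies. Using rule 2 of the convolution on $\SINGsimpRam$ and the commutation of $\Deltaplus$ with $\tfrac{d}{d\zeta}$, we transport the identity back by differentiation. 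Finally, \eqref{eq:deltaPlusProduct} follows by applying $\sB^{-1}$, and simple ramification of $\tildephi\tildepsi$ follows from the same reduction.
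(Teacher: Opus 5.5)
Your proposal is correct and is essentially the paper's argument. The paper also reduces to Sauzin's simple-singularity theorem by passing to primitives $\hat\varphi^* = 1*\hat\varphi$, $\hat\psi^* = 1*\hat\psi$ (which lower the pole order by one), applying the known result, and differentiating back. The paper does this by induction on the order, one primitive at a time, and reassembles the three resulting types of terms by hand; you instead take all $N$ primitives at once and use rule 2 of the convolution together with the commutation of $\Delta^+_\omega$ with $\frac{d}{d\zeta}$, which the paper uses implicitly, so your preliminary pinch-point analysis is not needed.
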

\begin{proof}
    Upon Borel transformation, the identity \eqref{eq:deltaPlusProduct} maps to the relation
    \begin{align} \label{eq:deltaPlusProductBorel}
        \Deltaplus_\omega(\hatphi * \hatpsi) = \sum_{\eta \in \{0,\omega\} \cup \Sigma_\omega} (\Deltaplus_{\eta} \hatphi)*(\Deltaplus_{\omega-\eta} \hatpsi) \,,
    \end{align}
    with the understanding that $\Deltaplus_0 \hatphi = \hatphi$, $\Deltaplus_0 \hatpsi = \hatpsi$. 
    The proof of this relation proceeds by induction. Let us introduce the notation
    \begin{align}
        \EsingOmegaN{0}{n} :=\bigoplus_{k=0}^{n-1} \IC \delta^{(k)} \oplus \sRhat^{\mathrm{s.ram,n}}_\Omega \,.
    \end{align}
    Sauzin shows the relation for $\hatphi, \hatpsi \in \EsingOmegaN{0}{1}$ \cite[Theorem 6.83]{Sauzin}. Assume that it holds for $\hatphi, \hatpsi \in \EsingOmegaN{0}{n-1}$. Now let $\hatphi, \hatpsi \in \EsingOmegaN{0}{n}$, such that
    \begin{align}
        \Deltaplus_\eta \hatphi = a^\eta_{-n}\delta^{(n-1)} + \ldots + a^\eta_{-1}\delta + \hatphi_\eta \,, \quad \Deltaplus_{\omega-\eta} \hatpsi = b^{\omega-\eta}_{-n}\delta^{(n-1)} + \ldots + b^{\omega-\eta}_{-1}\delta + \hatpsi_{\omega-\eta} \,.
    \end{align}
    Define
    \begin{align}
        \hatphi^* = 1*\hatphi \,, \quad \hatpsi^* = 1*\hatpsi \,.
    \end{align}
    Then $\hatphi^*, \hatpsi^* \in \EsingOmegaN{0}{n-1}$, and
    \begin{align}
        \Deltaplus_\eta \hatphi^* = a^\eta_{-n}\delta^{(n-2)} + \ldots + a^\eta_{-1} + 1*\hatphi_\eta \,, \quad \Deltaplus_{\omega-\eta} \hatpsi^* = b^{\omega-\eta}_{-n}\delta^{(n-2)} + \ldots + b^{\omega-\eta}_{-1} + 1*\hatpsi_{\omega-\eta} \,.
    \end{align}
    By the induction hypothesis, $\Deltaplus_\omega(\hatphi^* * \hatpsi^*)$ satisfies \eqref{eq:deltaPlusProductBorel}. Hence, $\left( \tfrac{d}{d\zeta}\right)^2 \Deltaplus_\omega(\hatphi^* * \hatpsi^*)$  contains three types of terms: 
    \begin{enumerate}
        \item \begin{align}
                \left( \tfrac{d}{d\zeta}\right)^2 \left(a^\eta_{-j} b^{\omega-\eta}_{-k} \,\delta^{(j-2)}*\delta^{(k-2)} \right) & = a^\eta_{-j} b^{\omega-\eta}_{-k} \,\left( \tfrac{d}{d\zeta}\right)^2 \delta^{(j+k-4)} =a^\eta_{-j} b^{\omega-\eta}_{-k} \, \delta^{(j+k-2)} 
            \end{align}
        \item \begin{align}
                \left( \tfrac{d}{d\zeta}\right)^2 \left(a^\eta_{-j} \,\delta^{(j-2)}* 1 * \hatpsi_{\omega-\eta} \right) & = a^\eta_{-j} \,\delta^{(j-1)} * \hatpsi_{\omega-\eta} 
            \end{align}
        \item \begin{align}
            \left( \tfrac{d}{d\zeta}\right)^2 \left( (a^\eta_{-1} + 1*\hatphi_\eta)*(b^{\omega-\eta}_{-1} + 1* \hatpsi_{\omega-\eta}) \right) &= \left( \tfrac{d}{d\zeta}\right)^2 \big(a^\eta_{-1} b^{\omega-\eta}_{-1} \zeta + a^\eta_{-1} \zeta * \hatpsi_{\omega-\eta} \nonumber \\ 
            &\hspace{2cm} +b^{\omega-\eta}_{-1}\zeta * \hatphi_\eta +\zeta * \hatphi_\eta * \hatpsi_{\omega-\eta} \big)  \nonumber \\
            &= a^\eta_{-1} b^{\omega-\eta}_{-1} \delta + a^\eta_{-1}  \hatpsi_{\omega-\eta} +b^{\omega-\eta}_{-1}\hatphi_\eta + \hatphi_\eta * \hatpsi_{\omega-\eta} \,,
        \end{align}
        where we have used \eqref{eq:diffIsoSing}.
    \end{enumerate}
These terms assemble to yield \eqref{eq:deltaPlusProductBorel}, completing the argument.
\end{proof}
The relation \eqref{eq:deltaPlusProductBorel}, applied termwise to $(\Phi,\Psi) \in \Esing(d) \times \Esing(d)$, immediately yields \cite[Theorem 6.84]{Sauzin}
\begin{align}
    \Deltaplus_d (\Phi * \Psi) = \Deltaplus_d(\Phi)*\Deltaplus_d(\Psi) \,.
\end{align}
The algebraic result that the logarithm of an automorphism is a derivation \cite[Lemma 6.87]{Sauzin} now permits us to conclude that $\Delta_d$ is a derivation. From this, we can easily conclude that the (unpointed) operator $\Delta_\omega$ is also a derivation: choose $\Phi$ and $\Psi$ to be homogeneous of degree 0, and then extract their image under $\Delta_\omega$ from the degree $\omega$ component.

\subsection{$\Deltaplus_d$ is the Stokes automorphism operator} \label{app:DeltaplusAndStokesAuto}
Consider a ray $d = \{ t \,\e^{\ii\theta}\,|\, t>0 \}$ in the direction $\theta$, a small open interval $I$ centered at $\theta$ and decomposed as $I = I^- \cup \{\theta\}\cup I^+$, $I^{+} = \{\theta' \in I \, | \, \theta' < \theta\}$, $I^{-} = \{\theta' \in I \, | \, \theta' > \theta\}$. We want to compare the Laplace transform to the left and to the right of the ray $d$ of an element $\hatphi$ of the set
\begin{align}
    \EsingOmegaN{0}{I} = \bigoplus_{n\in \IN} \IC \delta^{(n)} \oplus \left( \sRhat^{\sr}_\Omega \cap \cN(I^+) \cap \cN(I^-) \right)\,.
\end{align}
The comparison involves a contour deformation which picks up contributions from $\dotDeltaPlusR \hatphi \in \Esing_{\omega_r}$. To write down these contributions, we extend the definition of the Laplace transform to $\hatphi^{\omega} \in \tau_{\omega}(\EsingOmegaN{0}{I})$ via
\begin{align}
    \sL^{\theta^{\pm}} \hatphi^{\omega} := \e^{-\omega/z} \sL^{\theta^{\pm}} \left(\tau_{\omega}^{-1}\hatphi^{\omega} \right) \,.
\end{align}

\begin{figure}[H]
    \centering
    \usetikzlibrary{arrows.meta,decorations.markings}
    \begin{tikzpicture}[
	scale=1.2, font=\small,
	hankel/.style={red!65!black, thick,
		postaction={decorate, decoration={markings,
				mark=at position 0.5 with {\arrowreversed{Stealth[length=5pt,width=4pt]}}}}},
	kink/.style={red!65!black, thick,
		postaction={decorate, decoration={markings,
				mark=at position 0.5 with {\arrow{Stealth[length=5pt,width=4pt]}}}}},
	laplace/.style={blue!65!black, thick,
		postaction={decorate, decoration={markings,
				mark=at position 0.6 with {\arrow{Stealth[length=6pt,width=5pt]}}}}},
	]
	
	\def\alp{15}
	\def\r{0.15}
	\def\xmax{7.0}
	\pgfmathsetmacro{\XR}{\xmax*cos(\alp)}
	
	\draw[dashed, gray!60, thin] (0,0) -- (\xmax,0) node[right,gray!70]{$\theta$};
	\draw[laplace] (0,0) -- (\XR,{ \xmax*sin(\alp)}) node[right]{$\mathcal{L}_{\theta^-}$};
	\draw[laplace] (0,0) -- (\XR,{-\xmax*sin(\alp)}) node[right]{$\mathcal{L}_{\theta^+}$};
	\fill (0,0) circle (1.5pt) node[below left=1pt]{$0$};
	
	\foreach \w/\lab in {2.0/{\omega_1}, 3.8/{\omega_n}}{
		\pgfmathsetmacro{\ex}{\w + \r*cos(\alp)}
		\pgfmathsetmacro{\eyp}{ \r*sin(\alp)}
		\pgfmathsetmacro{\eym}{-\r*sin(\alp)}
		\pgfmathsetmacro{\offset}{(\XR - \ex)*tan(\alp)}
		\pgfmathsetmacro{\yfarIn} {\eym + \offset}
		\pgfmathsetmacro{\yfarOut}{\eyp + \offset}
		
		\draw[hankel] (\XR, \yfarIn)  -- (\ex, \eym);
		\draw[hankel] (\ex, \eym)
		arc[start angle={-\alp}, end angle={\alp - 360}, radius=\r];
		\draw[hankel] (\ex, \eyp) -- (\XR, \yfarOut);
		
		\fill (\w, 0) circle (1.8pt);
		\node[below=3pt] at (\w, 0) {$\lab$};
		\node[red!65!black, left=4pt] at ({\w + 5*\r - 0.05},0.25) {$\mathcal{H}_{\lab}$};
	}
	
	\node at (3.1, 0) {$\ldots$};
	
	\pgfmathsetmacro{\wk}{5.4}
	\pgfmathsetmacro{\yKinkIn} {(\XR - \wk)*tan(\alp)}
	\pgfmathsetmacro{\yKinkOut}{-(\XR - \wk)*tan(\alp)}
	\draw[kink] (\XR, \yKinkIn)  -- (\wk, 0);
	\draw[kink] (\wk, 0) -- (\XR, \yKinkOut);
	
\end{tikzpicture}
    \caption{Integration contour for evaluating Stokes automorphism}
\end{figure}
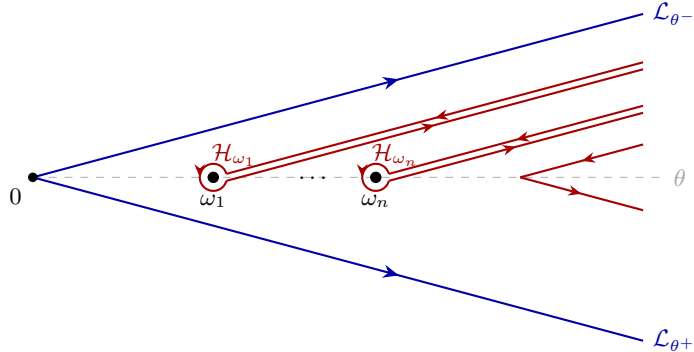
We have the following \cite[Theorem 6.77]{Sauzin}\footnote{Theorem 6.77 in  \cite{Sauzin} includes a uniformity statement for the error term which we omit, as we have not introduced the necessary definitions when introducing $\cN(I)$ above.}
\begin{theorem}
    Consider $m\in \IN^*$ and $\rho \in \IR$ such that $|\omega_m| < \rho < |\omega_{m+1}|$. Let $\hatphi \in \EsingOmegaN{0}{I}$ satisfy $\tau_{\omega_i}^{-1} \left(\Deltaplus_i \hatphi\right) \in \EsingOmegaN{0}{I}$ for $i=1, \ldots, m$. Then
    \begin{align} \label{eq:stokesAutoM}
        \sL^{\theta^+} \hatphi = \sL^{\theta^-}[\Deltaplus_d \hatphi]_m(z) + \cO \left(e^{-\rho \,\re(\e^{\ii\theta} z )} \right)\,,
    \end{align}
    where 
    \begin{align}
        [\Deltaplus_d \cdot]_m := \mathrm{Id} + \sum_{r=1}^m \dotDeltaPlusR\,.
    \end{align}
\end{theorem}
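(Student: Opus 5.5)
We will follow the strategy of \cite[Theorem 6.77]{Sauzin}: compare the two Laplace integrals by a contour deformation in the Borel plane and identify the contributions of the singularities $\omega_1,\ldots,\omega_m$ with the pointed operators $\dotDeltaPlusR$. The only real novelty relative to Sauzin is the presence of the polar parts $\delta^{(k)}$. These are handled termwise by the definition $\sL^{\theta}(\delta^{(n)}) = z^{-n}$, which is independent of $\theta$, so the polar part at the origin cancels between both sides. We may therefore assume $\hatphi \in \sRhat^{\sr}_\Omega \cap \cN(I^+)\cap\cN(I^-)$.

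\textbf{Contour deformation.} Truncate both Laplace integrals at radius $\rho$. Because $\hatphi\in\cN(I^\pm)$, the arcs at infinity joining $\theta^\pm$ to rays close to $\theta$ contribute nothing. The pieces of the contour beyond $|\zeta|=\rho$ are bounded by $\cO(\e^{-\rho\,\re(\e^{\ii\theta}z)})$, which uses the exponential bounds together with the choice $|\omega_m|<\rho$. The difference $\sL^{\theta^+}\hatphi-\sL^{\theta^-}\hatphi$ is then an integral over a closed contour in the sector between the two rays, up to the cutoff. This contour can be deformed into a sum of Hankel-type contours $\cH_{\omega_r}$, $r=1,\ldots,m$, together with a kink at $\rho$ that closes the contour. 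The kink is already of the stated exponential order. This is the picture shown in the figure preceding the statement.

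\textbf{Identification of each Hankel term.} Fix $r$. On $\cH_{\omega_r}$ the integrand is the analytic continuation of $\hatphi$ reached along the path that passes to the right of $\omega_1,\ldots,\omega_{r-1}$, namely the path $(+,\ldots,+)$. By \eqref{eq:anContPhiSram} this continuation has the form of poles plus $\hat\psi\,\anyLog/2\pi\ii$ plus a regular part $R$, with singular part $\Deltaplus_{\omega_r}\hatphi$. The regular part $R$ integrates to zero around the Hankel contour. For the logarithmic term, the variation across the cut gives $\hat\psi$, and the result is exactly the Laplace transform of the major \eqref{eq:LaplaceMajors}, shifted by $\omega_r$. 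This gives
\begin{align}
\e^{-\omega_r/z}\,\sL^{\theta^-}\!\left(\tau_{\omega_r}^{-1}\dotDeltaPlusR\hatphi\right) = \sL^{\theta^-}\dotDeltaPlusR\hatphi \,.
\end{align}
The pole terms $\delta^{(k-1)}_{\omega_r}$ are computed by residues and give $\e^{-\omega_r/z}z^{-(k-1)}$ with the normalization fixed earlier. The hypothesis $\tau_{\omega_i}^{-1}(\Deltaplus_i\hatphi)\in\EsingOmegaN{0}{I}$ ensures that each transform exists and that it can be rotated from the direction of $\cH_{\omega_r}$ to $\theta^-$.

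\textbf{Main obstacle.} The delicate step is the bookkeeping of sheets. One must check that after deforming the contour, the branch used on $\cH_{\omega_r}$ really is the $(+,\ldots,+)$ continuation, and one must keep track of the sign conventions for $\theta^\pm$. The plan is to do this inductively: push the $\theta^+$ contour leftward across one singularity at a time, and at each step collect the Hankel loop around $\omega_r$ on the sheet it has reached. A secondary technical issue is obtaining uniform bounds near the rays, which are needed to discard the integrals at infinity. For this we would simply invoke the $\cN(I^\pm)$ bounds as in \cite{Sauzin}.
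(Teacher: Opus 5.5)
Your proposal is correct and follows essentially the same route as the paper, which defers to \cite[Theorem 6.77]{Sauzin}. The paper's figure depicts exactly this deformation: the $\theta^+$ contour becomes the $\theta^-$ ray, plus Hankel contours $\cH_{\omega_r}$ on the $(+,\ldots,+)$ sheet, plus a kink at $\rho$, with each Hankel piece identified with the Laplace transform of the major of $\dotDeltaPlusR\hatphi$.

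One minor point: the exact factor $\e^{-\rho\,\re(\e^{\ii\theta}z)}$ comes from the kink having its vertex at $\rho\,\e^{\ii\theta}$, with the Hankel legs running to infinity parallel to $\theta^-$. You should therefore drop your preliminary truncation of the rays $\theta^\pm$ at radius $\rho$, which would only give $\re(\e^{\ii\theta^\pm}z)$ and would leave you with truncated Hankel contours.
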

The contributions $\sL^{\theta^-} \dotDeltaPlusR \hatphi$ arise precisely via the Laplace transform of majors of $\dotDeltaPlusR \hatphi$.

Under favorable circumstances, the limit $m \rightarrow \infty$ exists and  the relation \eqref{eq:stokesAutoM} extends to
\begin{align}
    \sL^{\theta^+} = \sL^{\theta^-} \circ \Deltaplus_d \,.
\end{align}
To obtain the Stokes automorphism operator $\stokesAut_d$, which is typically defined as mapping formal power series to formal power series, we consider
\begin{align}
    \sS^{\theta^+} =\sL^{\theta^+} \circ \sB = \sL^{\theta^-} \circ \sB \circ \sB^{-1}\circ \Deltaplus_d \circ \sB= \sL^{\theta^-} \circ \sB \circ \stokesAut_d =\sS^{\theta^{-}} \circ \stokesAut_d \,.
    \label{eq:LaplaceStokes}
\end{align}
In the conventions that we have been using, in which we use the same symbol for operators acting in the Borel plane or the $z$-plane, we thus have
\begin{align} \label{eq:StokesAutoDeltaPlusExp}
    \stokesAut_d = \Deltaplus_d = \exp\left( \sum_{r \in \IN^*} \dotDeltaR \right) \,.
\end{align}
It is also convenient to introduce the discontinuity operator, defined as
\begin{align} \label{eq:discontinuity}
    \disc_{\theta} = \sS^{\theta^-}\left( \stokesAut_{\ray_{\omega_\gamma}} - \id\right) = \sS^{\theta^-} \left(\sum_{r \in \IN^*} \dotDeltaPlusR \right)\,.
\end{align}

\section{Two questions in parametric resurgence} \label{app:parametric}
In the text, we will need to invoke two operations on alien operators in the parametric context.

\subsection{Parametric derivatives and alien operators} \label{app:parametricDerivatives}
In Section \ref{ss:higherAlienPowers}, our reasoning requires commuting derivatives $\partial_t$ with regard to moduli $t$ (we will call such derivatives `parametric') and the alien derivative. The main technical step here is introducing a parametric version of the space of $\Omega$-resurgent functions $\sRhat^{\sr}_\Omega$ whose singularities are both simply ramified and differentiable in $t$. The commutation of $\partial_t$ and all alien operators is then essentially equivalent to imposing that the $t$-derivatives are holomorphic (this follows immediately if $\hatphi(\zeta,t)$ is $C^\infty$ as a function of $\zeta$ {\it and} $t$, simply by differentiating the Cauchy-Riemann relations), and that $t$ differentiation and analytic continuation commute (which uniqueness of analytic continuation should imply barring subtleties regarding the domain of analyticity of $\hatphi$ vs. its parametric derivatives). Working out a detailed definition of this function space will not concern us here.

Let us assume that the conditions ensuring the commutation are met. The commutation is most naturally expressed in terms of the pointed alien operators -- recall that these do not involve a translation of the singularity to the origin. In this case,
\begin{align} \label{eq:commuteParametricAlienDotted}
    \partial_t \, \dot{\sA}^\gamma_\omega = \dot{\sA}^\gamma_\omega \, \partial_t \,.
\end{align}
If we want to work at the level of unpointed alien operators, we need to correct for the fact that upon translation of the singularity to the origin, $\delta^{(k)}_{\omega(t)}$ and $\sing_0 \left(\tfrac{\log \left(\zeta-\omega(t)\right)}{2\pi \ii}\right)$ lose their $t$ dependence:
\begin{align} \label{eq:commuteParametricAlienUndotted}
    \left(-\dot{\omega} \partial_\zeta + \partial_t \right) \,\sA^\gamma_\omega = \sA^\gamma_\omega \,\partial_t  \,,
\end{align}
where the derivative $\partial_\zeta$ is to be understood in the sense of equation \ref{eq:diffIsoSing} (denoted as a partial derivative in the parametric setting here). At the level of formal Laurent series, this maps to the relation
\begin{align} \label{eq:commuteParametricAlienUndottedFormal}
    \left(-\frac{\dot{\omega}}{z} + \partial_t \right) \,\sA^\gamma_\omega = \sA^\gamma_\omega \,\partial_t  \,.
\end{align}
To lift \eqref{eq:commuteParametricAlienDotted} to a relation at the level of formal Laurent series with exponential symbols adjoined, we define the derivative of the exponential symbol as
\begin{align} \label{eq:expSymbol}
    \partial_t \,\e^{-\omega/z} = - \dot{\omega} \,\e^{-\omega/z} \,.
\end{align}
By
\begin{align}
    \e^{-\omega/z} \left(-\frac{\dot{\omega}}{z} + \partial_t \right) = \partial_t \, \e^{-\omega/z} \,,
\end{align}
equation \eqref{eq:commuteParametricAlienUndottedFormal} then directly yields \eqref{eq:commuteParametricAlienDotted}, now applied to formal Laurent series.

Note that ensuring that a power series with $t$-dependent coefficients falls into the class of functions we are considering here is a far harder question.

\subsection{Pointed alien derivatives commute with infinitesimal translations} \label{app:pointedInfinitesimalCommutation}
The alien derivative of $\cZ$ has the form
\begin{equation}
    \dotDelta_\omega[Z(X^I)] = G[\cZ(X^I - c^I g_s)] \,,
\end{equation}
with $G$ a differential operator, cf. equations \eqref{eq:alienZmod} and \eqref{eq:alienZ}.
In Section \ref{ss:higherAlienPowers}, our argument assumes that the relation
\begin{equation}
    \dotDelta_\omega[\cZ(X^I-n c^I g_s)] \overset{?}{=} G[\cZ(X^I - (n+1)c^I g_s)] 
\end{equation}
holds. In this subsection, we will discuss the validity of this assumption.

Let us define an operator
\begin{equation} \label{eq:epsShiftOperator}
    \tau^c \,:\, \tildephi(z;t) \rightarrow \tildephi(z;t + cz ) 
\end{equation}
acting on formal Laurent series $\tildephi(z;t)$ whose coefficients depend on a parameter $t$. As in Appendix \ref{app:parametricDerivatives}, we will not introduce this space carefully: we will require that $\tildephi(z;t) \in \sRtilde^{\sr}_\Omega$ for all $t \in U$ for some $U$ of interest, that $[\partial^n_t,\sB] = 0$ for all $n \in \IN^*$ when acting on $\tildephi(z;t)$, and that a direction in the Borel plane exists along which it is Borel summable for all such $t$. Below, we will denote the Laplace transform in this direction by $\sL$.\footnote{If we want to drop this assumption, we can replace $\sL$ with the inverse Borel transform $\sB^{-1}$ in the arguments below. However, the theory loses much of its interest if Borel-Laplace resummation is not possible.} We will assume that the function space we are considering is stable under the action of $\tau^c$. Finally, our argument will require a regularity property which we introduce below, see equation \eqref{eq:regularity}.

The right-hand side of \eqref{eq:epsShiftOperator} is to be understood as the formal power series
\begin{align}
    \tildephi(z;t+cz) &= \sum_{n=0}^\infty \frac{\tildephi^{(n)}(z;t)}{n!} (cz)^n \,,
\end{align}
where $\tildephi^{(n)}$ indicates the formal power series obtained by termwise differentiation with regard to the parameter $t$. We want to argue that 
\begin{equation} \label{eq:shiftAlien}
    \tau^c \dotDelta_\omega \tildephi = \dotDelta_\omega \tau^c \tildephi \,.
\end{equation}
Let $\hatphi = \sB(\tildephi)$, and $\varphi = \sL \hatphi$. By
\begin{align}
    \sL(1*\hatphi) = z \,\varphi  \,,
\end{align}
where
\begin{align}
    (1* \hatphi)(\zeta) = \int_0^\zeta \hatphi(s)ds =: I_\zeta[\hatphi]\,,
\end{align}
we have
\begin{align}
    \tildephi^{(n)}(z;t) z^n \sim \sL(I^n_\zeta[\hatphi^{(n)}]) \,,
\end{align}
where $\sim$ indicates that both sides have the same asymptotic expansion. Our regularity assumption on $\tildephi$ is that the following relation holds true:
\begin{align} \label{eq:regularity}
    \sum_{n=0}^\infty \frac{\tildephi^{(n)}(z;t)}{n!} (cz)^n \sim \sL \left(\sum_{n=0}^\infty \frac{c^n}{n!} I^n_\zeta \partial_t^n \hatphi \right) \,.
\end{align}
We can then set
\begin{align}
    \hatphi_c(\zeta;t) = \left(\e^{cI_\zeta \partial_t} \hatphi \right)(\zeta;t) \,.
\end{align}
From this expression, we can derive a partial differential equation satisfied by $\hatphi_c(\zeta,t)$: noting that
\begin{align}
    \partial_\zeta I_\zeta^n[\hatphi] =  I_\zeta^{n-1}[\hatphi] \,, 
\end{align}
we have
\begin{align} \label{eq:master}
    \partial_c \partial_\zeta \hatphi_c(\zeta,t) = \partial_t \hatphi_c(\zeta,t) \,, \quad \hatphi_0(\zeta, t) = \hatphi(\zeta,t)\,.
\end{align}
Now suppose that $\hatphi(\zeta;t)$ has a singularity at $\zeta = \omega(t)$ of the form
\begin{equation}
    \hatphi(\zeta;t) = \frac{1}{2\pi \ii} \sum_{k=1}^n \frac{b_k a_k(t)}{u^k}+\frac{\hatphi_\omega(u,t)}{2\pi \ii} \log u + h(u,t) \,, \quad b_k = (-1)^{k-1} (k-1)! \,,
\end{equation}
with $\hatphi_\omega$ and $h(u,t)$ holomorphic in a neighborhood of $0$, where we have introduced $u = \zeta - \omega(t)$. Note that $\hatphi_\omega$ is the name of a function of $u$ and $t$, not a family of functions parametrized by $\omega(t)$. By our assumption of stability under $\tau^c$, $\hatphi^c(u,t) \in \sRhat^{\sr}_\Omega$ for all $t \in U$. We can hence make the ansatz
\begin{equation}
    \hatphi^c(u,t) = \frac{1}{2\pi \ii} \sum_{k=1}^n \frac{b_k a_k^c(t)}{u^k}+\frac{\hatphi_\omega^c(u,t)}{2\pi \ii} \log u + h^c(u,t) 
\end{equation}
to solve the differential equation \eqref{eq:master} with boundary conditions
\begin{equation} \label{eq:bchigher}
    a_k^0(t) = a_k(t) \quad \mathrm{for} \quad k=1,\ldots, n\,, \quad \hatphi^0_\omega(u,t) = \hatphi_\omega(u,t) \,.
\end{equation}
By
\begin{equation}
    \partial_\zeta \mapsto \partial_u \,, \quad 
    \partial_t \mapsto -\omega'(t) \partial_u + \partial_t =: \totalt \,,
\end{equation}
the left and right hand side of equation \eqref{eq:master} become
\begin{align}
    2\pi\ii \, \partial_c \partial_u \hatphi^c(\zeta,t) &=  \sum_{k=2}^{n+1} \frac{b_k \partial_c a^c_k(t)}{u^{k}} + \frac{\partial_c \hatphi^c_\omega(u,t)}{u} + \partial_u \partial_c \hatphi_\omega^c(u,t) \log u + 2\pi \ii \partial_c \partial_u h^c(u,t) \,, \\
    2\pi \ii \totalt \hatphi^c(\zeta,t) &=  - \frac{\omega'(t)b_{n+1}  a^c_{n+1}(t)}{u^{n+1}} +\sum_{k=2}^{n} \frac{b_k \left(-\omega'(t)+ \partial_t \right) a^c_k(t)}{u^{k}} + \frac{\partial_t a^c_1(t) -\omega'(t) \hatphi_\omega^c(u,t)}{u} +  \nonumber\\
    &+\left[\totalt \hatphi_\omega(u,t)\right] \log u + 2\pi \ii \totalt h^c(u,t)\,.
\end{align}
Equating the singular terms yields a system of $n+2$ equations,
\begin{align}
    D_c\, a_n^c(t) &= 0 \label{eq:systemhigherfirst} \,,\\
    D_c \, a_k^c(t) &= \partial_t a_k^c(t) \quad \mathrm{for} \quad k=2, \ldots, n-1 \,, \label{eq:systemnmohigher} \\
    D_c \, \hatphi_\omega^c(0,t) &= \partial_t a_1^c(t) \,, \label{eq:systempenultimatehigher}\\
    \partial_u D_c \, \hatphi_\omega^c(u,t) &= \partial_t \hatphi_\omega^c(u,t) \,,    \label{eq:systemhigherlast}
\end{align}
where we have introduced the operator
\begin{align}
    D_c = \partial_c + \omega'(t) \,.
\end{align}
Upon imposing the boundary conditions \eqref{eq:bchigher}, equation \eqref{eq:systemhigherfirst} is solved by
\begin{equation}
    a^c_n(t) = a_n(t) \e^{-\omega'(t)c} \,.
\end{equation}
We next map equation \eqref{eq:systemhigherlast} to an equation on the Laplace transform $\varphi^c_\omega(z,t)$ of $\hatphi^c_\omega(\zeta,t)$. The left-hand side becomes
\begin{align}
    \int_{\omega(t)}^\infty \e^{-\zeta/z} \partial_u D_c \hatphi^c_\omega(u,t) \, d\zeta &=  \e^{-\omega(t)/z} D_c\int_0^\infty \e^{-\zeta/z} \partial_\zeta \hatphi^c_\omega(\zeta,t) \, d\zeta  \\
    &=  \e^{-\omega(t)/z} D_c\frac{1}{z} \int_0^\infty \e^{-\zeta/z} \left( \hatphi^c_\omega(\zeta,t) - \hatphi^c_\omega(0,t) \right) \, d\zeta  \\
    &= \e^{-\omega(t)/z}  D_c\left( \frac{1}{z} \sL\left(\hatphi^c_\omega(z,t)\right) - \hatphi^c_\omega(0,t) \right)   \,,
\end{align}
while the right hand side gives rise to
\begin{align}
    \int_{\omega(t)}^\infty \e^{-\zeta/z} \partial_t  \hatphi^c_\omega(u,t) \, d\zeta &= \e^{-\omega(t)/z} \partial_t \left[\int_0^\infty \e^{-\zeta/z}  \hatphi^c_\omega(\zeta,t) \, d\zeta \right] \\
    &= \e^{-\omega(t)/z} \partial_t \sL\left(\hatphi^c_\omega(z,t)\right) \,.
\end{align}
By successively invoking equations \eqref{eq:systempenultimatehigher} and \eqref{eq:systemnmohigher} (in reverse order), we obtain
\begin{align} \label{eq:solvingIteratively}
    \left(D_c - z\partial_t \right) \sL\left(\hatphi^c_\omega(z,t)\right) = z\partial_t a_1^c(t)  &\Leftrightarrow  \left(D_c - z\partial_t \right) \left( a_1^c(t) + \sL\left(\hatphi^c_\omega(z,t)\right) \right) = D_c a_1^c(t) \\
    & \Rightarrow \left(D_c - z\partial_t \right) \left(\frac{1}{z} a_2^c(t) + a_1^c(t) + \sL\left(\hatphi^c_\omega(z,t)\right) \right) = \frac{1}{z} D_c a_2^c(t) \nonumber\\
    & \;\;\vdots \nonumber\\
    & \Rightarrow \left(D_c - z\partial_t \right) \left(\sum_{n=1}^\infty \frac{1}{z^{n-1}} a_n^c(t) + \sL\left(\hatphi^c_\omega(z,t)\right) \right) = 0 \,. \nonumber
\end{align}
Noting that
\begin{equation} \label{eq:almost}
    \e^{-\omega(t)/z} \left( D_c - z\partial_t \right) = \left(\partial_c - z \partial_t \right) \e^{-\omega(t)/z} \,,
\end{equation}
we see that the last equation of \eqref{eq:solvingIteratively} is equivalent to
\begin{align}
\left(\partial_c - z\partial_t \right) \e^{-\omega(t)/z}\left(\sum_{n=1}^\infty \frac{1}{z^{n-1}} a_n^c(t) + \sL\left(\hatphi^c_\omega(z,t)\right) \right) = 0
\end{align}
The general form of the solution to this equation is
\begin{equation}
   \e^{-\omega(t)/z}\left(\sum_{n=1}^\infty \frac{1}{z^{n-1}} a_n^c(t) + \sL\left(\hatphi^c_\omega(z,t)\right) \right) = g(z,t + cz) \,.
\end{equation}
Invoking the boundary condition \eqref{eq:bchigher} fixes
\begin{align}
    g(z,t) =  \e^{-\omega(t)/z}\left(\sum_{n=1}^\infty \frac{1}{z^{n-1}} a_n(t) + \sL\left(\hatphi_\omega(z,t)\right) \right) \,.
\end{align}
By performing an asymptotic expansion, we have thus established
\begin{align}
    \dotDelta_{\omega(t)} \tildephi^c(z,t) = \left( \dotDelta_{\omega(t)} \tildephi \right) (z,t+cz) \,.
\end{align}

\bibliographystyle{JHEP}
\bibliography{biblio}
\end{document}